\documentclass[11pt,a4paper,leqno]{article}

\usepackage[english]{babel}
\usepackage{latexsym}
\usepackage{amsmath, amssymb}
\usepackage{graphicx}
\usepackage{epsfig}
\usepackage{dsfont}
\usepackage[applemac]{inputenc} 
\usepackage[T1]{fontenc}      

\usepackage{latexsym,dsfont,textcomp,amsmath}
\usepackage{color}
\usepackage{stmaryrd}
\usepackage{fancyhdr}
\usepackage{float}

\addtolength{\oddsidemargin}{-0.12 \textwidth}
\addtolength{\textwidth}{0.25 \textwidth}
\addtolength{\topmargin}{-0.1 \textheight}
\addtolength{\textheight}{0.2 \textheight}

\newtheorem{teorema}{Theorem}[section]                                                                                                                                                                                     
\newtheorem{lemmas}[teorema]{Lemma}
\newtheorem{proposizione}[teorema]{Proposition}
\newtheorem{corollario}[teorema]{Corollary}

\newtheorem{remarque}{Remark}[section]
\newtheorem{definizione}{Definition}[section]
\newtheorem{assumption}{Assumption}

\newtheorem{principle}{Principle}

\newenvironment{proof}{{\textbf Proof:} }{\smallskip \begin{flushright}
$\Box$
\end{flushright}}

\title{Asset pricing under uncertainty}

\author{Simone SCOTTI\footnote{Laboratoire de Probabilit\'{e}s et Mod\`{e}les Al\'{e}atoires, Universit\'{e} Paris 7,
CNRS UMR7599, email:  simone.scotti@univ-paris-diderot.fr } }

\begin{document}

\maketitle

\begin{abstract}
  We study the effect of parameter uncertainty on a stochastic diffusion 
  model, in particular the impact on the pricing of contingent claims, using methods from the 
  theory of Dirichlet forms.
  We apply these techniques to hedging procedures in order to compute  the sensitivity  of 
  SDE trajectories with respect to parameter perturbations.  
  We show that this analysis can justify endogenously the presence of a bid-ask spread on
  the option prices.
  We also prove that if the stochastic differential equation admits a closed form representation 
  then the sensitivities have closed form representations.
  
  We examine the case of log-normal diffusion and we show that this framework leads to
   a smiled implied volatility surface coherent with historical data. 

\vspace{0.5cm}
 { \bf Keywords: }{Uncertainty, Stochastic Differential Equations, Dynamic Hedging, Error Theory
  using Dirichlet Forms, Bid-Ask Spread, Value-at-Risk.}

\vspace{0.5cm}
 { \bf AMS:} 60H30, 91B16 and 91B70


\end{abstract}

\section{Introduction}

The purpose of this article is to study the effect of parameter uncertainty on asset pricing.
If we consider, for instance, the Black and Scholes model, the price of a stock is modeled as a 
log-normal process. This diffusion process depends on a random source, represented by a 
Brownian motion, that models the market uncertainty and the unknown 
future evolution of the stock price. This diffusion depends also on two parameters, usually known
as the drift and the volatility. The latter plays a crucial role in asset pricing. 
As such, a second source of uncertainty appears since the values of these parameters 
have to be estimated by financial agents. In practice, in order to sell an option, one has to fix a 
price using his pricing method, which depends on the diffusion 
parameters. Since these diffusion parameter are unknown, one has to plug into his pricing model their 
estimated values.
However, these estimated values are uncertain, in other words, they are random variables. 
A natural question arising is therefore the impact of this second source of uncertainty.

A large literature exists on this subject. We may in particular refer to the uncertain volatility 
model (UVM) which 
take into account the difficulties of calibrating the volatility in the Black and Scholes model,  
see Avellaneda et al. \cite{bib:Avelaneda} and  Lyons \cite{bib:Lyons}.
In UVM, the parameter,  i.e. the volatility, is an unknown parameter which belongs to a interval. 
However, this kind of approach neglects the way practitioners implement their pricing method 
by using estimated parameters. On the other hand, practitioners often overlook
the inherent uncertainty in their estimated parameters. 

In our model, we assume that the stock price follows a stochastic differential equation (SDE) with 
fixed but unknown parameters. As a consequence, only one random source exists in the market,
 which is therefore complete. 
However, the financial agents, assumed to be option sellers in our analysis,
know that the right model is an SDE but they do not known the true values of the parameters.
Therefore, they estimate the parameters and use these estimates to price the options. 
In other words, the parameters are replaced by statistical estimators of the true 
parameter values, that is, they are random variables  characterized by a variance and a 
bias. 
As a consequence, the option seller uses a different SDE, i.e. an SDE with 
different parameters to compute the option prices. It follows that the probability 
space used in this paper must describe two random sources: the market uncertainty, 
i.e. the Brownian motion driving the SDE, and the parameter uncertainty, 
that affects the calculations made by the option seller. 
However, the two uncertainties do not play the same role. 
Since the law of the market uncertainty is well defined while the law of the parameter estimators is poorly known. 
Because of this, we assume known only the biases and the variances of these estimators.  
Our goal is to study how the bias and the variance of the parameter estimators  affect the 
basics of the asset prices and the related option prices.

An important difference between our work and the UVM model is that we consider a large class 
of underlying diffusions and not uniquely the log-normal one. We assume that  the true underlying 
diffusion is unique and follows a stochastic differential equation with continuous paths,
\begin{equation*}
dS_t = S_t \, \mu \, dt + S_t \, \sigma(t, \, S_t) \, dB_t \, .
\end{equation*}
This diffusion can depend on several parameters, and 
not only on the Black-Scholes volatility, that are 
to be estimated by the option seller. 
We assume that the uncertainty is on the volatility and  carried by the 
parameters used to describe this volatility.  
Our method leads to a option pricing calculus that  enables the option seller to control the  
risk due to parameter uncertainty  assuming an acceptable residual risk that can be 
evaluated using a Value-at-Risk technique,
we propose a principle to price an option and we find a selling price such that the probability 
of loss of the option seller is smaller that a given tolerance $\alpha$, we explain our pricing principle 
in Section 3.2.

In the UVM case, the strategy is to apply stochastic control techniques to super-hedge 
a contingent claim. 
The principal drawback of the super-replication technique is that the super-hedging 
cost is too high and the corresponding  strategy  too conservative. 
As an example, Kramkov \cite{bib:Kramkov} shows that the super-hedging cost of a 
call corresponds 
to the value of the option under the least favorable martingale measure. Likewise, Bellamy 
and Jeanblanc \cite{bib:Bellamy} prove that in a jump-diffusion model the price range for a 
call option  corresponds to the interval given by the no-arbitrage conditions that is the 
super-hedging strategy of a call option is to buy the underlying. 
It seems therefore that super-hedging is not an effective methodology since the option
buyer will prefer to buy the underlying rather than pay the same price to have an option. 
Many attempts to overcome this problem have been studied in literature. Avellaneda et al. 
\cite{bib:Avelaneda} constrains the volatility between two levels. Another approach is the pricing
 via utility maximization, see for instance El Karoui and Rouge \cite{bib:ElKaroui}, or via 
 local risk minimization, see for instance Follmer and Schweizer \cite{bib:Follmer}.
We also recall the paper of Denis and Martini \cite{bib:Denis-1}, where a more 
general framework is introduced in order to study super-replication and model uncertainty.
Their methodology makes use of  Choquet capacities and is related 
to our approach. Recently, a generalization of this approach is proposed by Soner et al  
\cite{bib:Soner}.

In our analysis, we assume that the option seller has to take some risk in order 
to propose a competitive price. In other  words, he has to take the risk of
losing money, with a small probability, in order to have the 
opportunity to make money with the contract (selling  the option and hedging it). 
We quantify this risk thanks to a Value-at-Risk technique. In a similar way as in UVM, we 
find distinct buying and selling prices due to the asymmetry of risks between buyer
and seller.

In practice, the risk assessment is often carried out in terms of sensitivity with 
respect to a deterministic variation of model parameters. 
These sensitivities are usually called greeks and option sellers
try to manage their portfolios in order to control the risk related to these sensitivities 
using risk measures like historical Value-at-Risk.
However, this approach leads to mathematical difficulties, for instance in dealing with
infinite dimensions, and practical difficulties. 

An alternative method that we will use here is based on potential theory 
(see Albeverio \cite{bib:Albeverio}, Bouleau and Hirsch
\cite{bib:Bouleau-Hirsch} and Fukushima et al.
\cite{bib:Fukushima}) suggested by Bouleau
\cite{bib:Bouleau-erreur}. 
We assume that the uncertainties are small random variables added 
to the true values of the parameters. The laws of these estimators are generally 
poorly known,  so we assume that only the two first moments are known, i.e. the biases 
and the variances/covariances. 
When the uncertainty on  a parameter is small, it is reasonable to neglect  high moments.
This is also justified by the fact that parameter estimation is often difficult owing to 
the non-linearity, the complexity of the equations and to the poor quality or the shortage 
of data.
The theory of Dirichlet forms leads to an "error theory" allowing the computation of the 
two first moments of the random variable that describes the option seller's profit or lost  
due to the parameter uncertainty. 
This theory combines the variance to the ``carr\'{e} du champ'' operator and the
bias to the generator  associated to the Dirichlet form, see Bouleau  
\cite{bib:Bouleau-erreur2}.

To summarize, the main result of this paper is to give closed forms for the corrections on 
the price of contingent claims due to the uncertainties on the parameters that are modeled 
as statistical estimators, i.e. random variables, with small variances and biases. 
These are corrections for an investor seeking to compensate the bias on the option 
price and  accepting a residual risk, both due to the  uncertainty on parameters. We define 
the  risk taken by the option seller in a statistical way in accord with a VaR risk measure.
A direct consequence of this pricing method is the separation of the buy and sell prices due 
to the asymmetry between the buyer and the seller with respect to the risk on parameters.
A surprising result is that a systematic bias exists with respect to the pricing without uncertainty 
even if the parameters are unbiased. 
Indeed, this bias is due to the non-linearity of the payoff.
Our analysis covers a large class of stochastic process with continuous paths and,
in particular, equity models with local or stochastic volatility. 

Finally, we present an example based on the Black-Scholes model, with uncertainty on 
the volatility parameter. We show that that our analysis leads to a smiled implied volatility surface
and we give the explicit formula for vanilla option prices and the bid-ask spread. 

The paper is organized as follows.
In Section 2, we   resume the notation used in the paper. In Section 3, we
briefly recall the error theory based on Dirichlet forms techniques. 
Section 4 deals with the problem of uncertainty estimation while
Section 5 is devoted to the study of the impact of uncertainty on the diffusion model. 
We analyze the profit and loss process and compute its law depending on both
the underlying diffusion and the parameter uncertainty. 
In Section 6,  we introduce a 
pricing principle to over-hedge the contingent claim and exhibit the bid and ask prices.
In Section 7, we give an example based on log-normal diffusion without drift. We exhibit
the bid and the ask prices and we prove, under certain hypotheses, that the implied volatility
exhibits a smile behavior.
Section 8 presents a second detailed example based on constant elasticity of variance model
in order to show the powerful of the method in a practical case.


\section{Notation}

This section fix the notation used throughout the paper. We consider two probability spaces 
$(\Omega_1, \mathcal{F}^1, \{\mathcal{F}^1_t\}_{t\in[0,T]}, \mathbb{P}_1 )$ and 
$(\widetilde{\Omega}, \widetilde{\mathcal{F}}, \widetilde{\mathbb{P}})$.
The reference filtration $\{\mathcal{F}^1_t\}_{t\in[0,T]}$ of the first probability space 
$(\Omega_1, \mathcal{F}^1, \{\mathcal{F}^1_t\}_{t\in[0,T]}, \mathbb{P}_1 )$ satisfies the usual condition and
is generated by a Brownian motion $B$. In this space, we define a continuous time financial market 
model consisting of a risk-free asset, whose price process is assumed for simplicity to be equal to 
$1$ at any time, and a risky asset (or stock) of price process $S = (S_t)_{t \geq 0}$.  We will denote by $\mathbb{E}_1$
the expectation under $\mathbb{P}_1$. 

The second probability space $(\widetilde{\Omega}, \widetilde{\mathcal{F}}, \widetilde{\mathbb{P}})$ 
is the sample space used to define the parameter estimators. We will denote by small letters, e.g $a$, the true parameters 
of the model; by capital letters, e.g. $A$, the estimators of these parameters and by  capped capital letters, e.g. $\widehat{A}$,
the estimated values of these parameters. But, to avoid any confusion, the true volatility value will be denoted by $\sigma$,
its estimator by $\varsigma$ and its estimated value by $\widehat{\varsigma}$.
As a consequence, $A$ will be a random variable defined in the space  $(\widetilde{\Omega}, \widetilde{\mathcal{F}}, \widetilde{\mathbb{P}})$, when $\widehat{A}$ will be a fixed value resulting from a particular observed dataset.

We consider the product space 
$(\Omega, \mathcal{F}, \mathbb{P}) = (\Omega_1 \times \widetilde{\Omega}, \mathcal{F}^1 
\otimes \widetilde{\mathcal{F}}, \mathbb{P}_1 \times \widetilde{\mathbb{P}})$.
We assume that the Brownian motion $B$ remains a Brownian motion under the probability space 
 $(\Omega, \mathcal{F}, \mathbb{P})$ and that all estimators defined in the probability space 
 $(\widetilde{\Omega}, \widetilde{\mathcal{F}}, \widetilde{\mathbb{P}})$ are independent with respect to the 
 filtration generated by the Brownian motion.

 We need to introduce an auxiliary probability space denoted by $(\overline{\Omega}, 
 \overline{\mathcal{F}}, \overline{\mathbb{P}})$. This space is a copy of estimator probability space 
 $(\widetilde{\Omega}, \widetilde{\mathcal{F}}, \widetilde{\mathbb{P}})$. We will denote by $\overline{\mathbb{E}}$
 the expectation under the probability $\overline{\mathbb{P}}$.
This auxiliary probability space is introduced only to perform mathematical proofs and has no influence on final results.

Finally, we introduce a representation basis, denoted by $\{\phi_i(t,x)\}_{i\in \mathcal{N}}$, of the space of volatility function.
All functions $\phi_i$ belong to $C^{1,2}$, $\{\phi_i\}_{i\in \mathbb{N}}$ and their derivatives are  Lipschitz by a same constant 
and square integrable in x.
The space of admissible volatility functions, denoted by $L^2_{\sigma}$, is     
$$
L^2_{\sigma} = \left\{f(t, x) = \sum_i b_i \phi_i(t,x) \text{ such that }  (b_1, \ldots, b_n, \ldots) \in \ell^2 \text{ and } f(t,x) > \xi >0\; 
\forall (t,x)  \right\}
$$ 
We equip the space with the distance $d^2_\sigma$ induced by the sequence representation.

\section{Mathematical tools}\label{sec:error-theory}

We begin by giving a general introduction to the study of sensitivity with respect to a 
stochastic perturbation and a formal definition of the framework that we will use, i.e. 
error theory based on Dirichlet forms, in accord with  Bouleau \cite{bib:Bouleau-erreur}. 
In this survey, we follow \cite{bib:Bouleau-erreur2}.

We consider a function $F(u_1,\, u_2, \, ...,\, u_n)$ depending on the parameters
$u= (u_1,\, u_2, \, ...,\, u_n)$ that we assume approximately known with some 
uncertainty. 
That is, we assume that we have a statistical estimator $U= (U_1,\, U_2, \, ..., \, U_n)$ 
of the vector  $u$. We consider that the law of $U$ is unknown
 and we suppose known only the  variance and bias.
 We assume that the function $F$ is sufficiently regular and we seek to evaluate 
 the impact on $F$ of the uncertainties on $u$.  
 The first study of this problem goes back to Gauss who proved the following expansion 
 for the variance of $F(U)$ if the uncertainties are small compared with the parameters values:
 $$
 \text{Var}[F(U)] = \sum_{i,\, j}^n \frac{\partial F}{\partial u_i}(u) \,  
 \frac{\partial F}{\partial u_j}(u) \, \text{Covar}[U_i,\, U_j] 
 $$
However, this relation is valid only if the number of  parameters is fixed and 
if $F$ has an explicit formula, in particular,  it cannot be defined via a limit.
Another main problem is that we do not know $u$ but only $U$, i.e. the estimator of $u$. As a consequence, we cannot evaluate the derivatives of $F$ at $u$.  Some natural questions that arise 
are: can we use the same approach if we know only the random variable $U$? That is, can we
prove that
 $$
 \text{Var}[F(U)] = \sum_{i,\, j}^n \frac{\partial F}{\partial u_i}(U) \,  
 \frac{\partial F}{\partial u_j}(U) \, \text{Covar}[U_i,\, U_j] \; \,?
 $$ 
Is it this relation true when $F$ is defined implicitly, for instance by a $L^2$-limit?
Can we study the bias of $F(U)$ caused by a non-linear function $F$?
The classical Gauss theory does not answer to all of these problems.
Therefore, we turn us to  the error theory based on Dirichlet forms in order to go beyond these problems. In particular, we need to evaluate 
the effect of a stochastic error on more complex objects such as stochastic integrals. 
For this, we consider an estimated value $\widehat{U}$ of a parameter $u$ with a small uncertainty $ Y$, 
on which we compute a non-linear function $F$. Our estimate $\widehat{U}$  is replaced by 
the estimator $U=\widehat{U} + Y$, i.e. a random variable 
with a mean squared error $\text{MSE}[Y]$ and bias  $\text{Bias}[Y]$. 
Appling Taylor's expansion to a function $F$, we find
\begin{eqnarray*}
\text{MSE}[F(U)] & \equiv & \widetilde{\mathbb{E}}\left[ \{F(U)-F(\widehat{U})\}^2 \right] = 
\left[F^{\prime}(\widehat{U})\right]^2 \; \text{MSE}[U] 
+  \text{higher orders} \\
\text{Bias}[F(U)] & \equiv & \widetilde{\mathbb{E}}[F(U)-F(\widehat{U})] = F^{\prime}(\widehat{U}) \; \text{Bias}[U] + \frac{1}{2}
 F^{\prime \prime}(\widehat{U}) \; \text{MSE}[U] + \text{higher orders} \; .
\end{eqnarray*} 
Clearly, the mean squared error $\text{MSE}[F(U)]$ dominates the variance $\text{Var}[F(U)]$ defined as 
$\widetilde{\mathbb{E}}\left[ \{F(U)- \widetilde{\mathbb{E}}[F(\widehat{U})] \}^2 \right]$. For the centered random variable 
$U$ the mean squared error and the variance take the same value.
Moreover, the difference $\text{MSE}[F(U)]- \text{Var}[F(U)]$ is equal to the square of the bias, that is, it is negligible in our expansion. 
As a consequence, we will use the mean squared error as a conservative estimation of the variance. 

If we suppose the uncertainty small, we can  truncate the Taylor's expansion and find two closed 
transport formulas for the bias and the mean squared error. These are known in literature, since the 
mean squared error has the same transport formula as a "carr\'{e} du champ" operator on a probability space
equipped with a local Dirichlet form, while the bias has the same transport formula as 
the generator  of the semigroup associated to the Dirichlet form, 
see for instance Bouleau and Hirsch \cite{bib:Bouleau-Hirsch}. 
The main advantage of this comparison is that the carr\'{e} du champ operator  and the 
generator of  the semigroup are closed  operators with respect to the graph norm: see 
for instance Fukushima et al. 
\cite{bib:Fukushima}. Therefore, they are good operators for studying objects defined 
by limits, such as stochastic integrals. 

The axiomatization of this idea was introduced by Bouleau \cite{bib:Bouleau-erreur} as follows: He defined an error structure as a probability space equipped with a local Dirichlet form with a  
carr\'{e} du champ operator.

\begin{definizione}[Error structure]\hfill
\vspace{0.2cm}

An error structure is a term
$\displaystyle  \left( \widetilde{\Omega}, \, \widetilde{\mathcal{F}}, \, \widetilde{\mathbb{P}}, \, \mathbb{D}, \, \Gamma \right)$, where

\begin{itemize}
\item{$\left( \widetilde{\Omega}, \, \widetilde{\mathcal{F}}, \,
\widetilde{\mathbb{P}} \right)$ is a probability space;}
 \item{$\mathbb{D}$ is a dense sub-vector space of $L^2\left(
\widetilde{\Omega}, \, \widetilde{\mathcal{F}},
      \, \widetilde{\mathbb{P}}\right)$;}
\item{$\Gamma$ is a positive symmetric bilinear function from
$\mathbb{D} \, \times
    \, \mathbb{D}$ into $L^1 \left( \widetilde{\Omega}, \, \widetilde{\mathcal{F}},
      \, \widetilde{\mathbb{P}} \right)$ satisfying the functional calculus of class
    $\mathcal{C}^1 \cap Lip$, i.e. if F and G are of class $\mathcal{C}^1$ and globally Lipschitz, 
    and U and V belong to $\mathbb{D}$, then F(U) and G(V) belong  to $\mathbb{D}$ and}
\begin{equation}\label{functional-calculus}
  \Gamma\left[F(U), \, G(V) \right] = F'(U) \,G'(V) \,  \Gamma[U, \, V] \; \;
  \widetilde{\mathbb{P}} \; a.s.; 
  \end{equation}
\item{the bilinear form $\mathcal{E}[U, \, V] = \frac{1}{2}
\widetilde{\mathbb{E}}\left[\Gamma[U,
      \, V]\right]$ is closed;}

\item the constant $1$ belongs to $\mathbb{D}$ and $\mathcal{E}[1,1]=0$
\end{itemize}

\end{definizione}

We generally write $\Gamma[U]$ for $\Gamma[U,\, U]$ and we  denote by 
$\Gamma[U](\widehat{U})$ the particular realization of the random variable $\Gamma[U]$ corresponding to observed data, 
i.e. $\Gamma[U](\widehat{U})$ is the conditional variance of the estimator $U$ given the particular dataset used to estimate it 
and such that the estimated value of $u$ is equal to $\widehat{U}$.
With this definition, $\mathcal{E}$ is a Dirichlet form and $\Gamma$ is the associated carr\'{e} du champ operator. 
The Hille-Yosida theorem 
guarantees that there exists a semigroup and a generator $\mathcal{A}$ that are coherent 
with the Dirichlet form  
$\mathcal{E}$, see for instance Albeverio \cite{bib:Albeverio} and Fukushima et al. \cite{bib:Fukushima}. 
This generator $ \mathcal{A}\,: \, \mathcal{D}  \mathcal{A} 
\rightarrow L^1(\widetilde{\mathbb{P}})$  is a 
self-adjoint operator, its domain $\mathcal{D}  \mathcal{A}$ is included into $\mathbb{D}$, and  
this operator satisfies, for $F \in \mathcal{C}^2$, $U \in \mathcal{D}  \mathcal{A} $ 
and $\Gamma[U] \in L^2(\widetilde{\mathbb{P}})$:
\begin{equation}\label{bias-chain-rule}
  \mathcal{A}\left[F(U) \right] = F'(U)\,  \mathcal{A}[U] + \frac{1}{2} F''(U)\,  \Gamma[U] \; \;
  \widetilde{\mathbb{P}} \; a.s.\,.
\end{equation}
Moreover, it is a closed operator with respect to the graph norm.  We denote by $\mathcal{A}[U](\widehat{U})$ 
the particular realization of the random variable $\mathcal{A}[U]$ corresponding to observed data,
i.e. $\mathcal{A}[U](\widehat{U})$ is the bias of the estimator $U$ given the particular dataset used to estimate it 
and such that the estimated value of $u$ is equal to $\widehat{U}$. 
We emphasize two important points related to this theory. 
First of all, error structures have nice properties: in particular, it is possible to prove 
that the product of two or countably many error structures is an error structure (see 
Bouleau \cite{bib:Bouleau-erreur2}). 
Furthermore, the elements of the space $\mathbb{D}$ are square integrable random variables.
Therefore, this theory is coherent with the stochastic nature of the estimators $U$, and we consider the operator $\mathcal{A}$ and $\Gamma$ as the conditional bias and mean squared error given the value of the estimator $U$: for details we refer to  Bouleau
\cite{bib:Bouleau-erreur}, chapters III and V, and
\cite{bib:Bouleau-erreur2}.

The classical example of an error structure is the Ornstein-Uhlenbeck one: 

\begin{definizione}[Ornstein-Uhlenbeck error structure]\label{structureOU}\hfill
\vspace{0.2cm}

The Ornstein-Uhlenbeck structure is $\displaystyle  \left( \mathbb{R}, \, \mathcal{B}(\mathbb{R}), \, \widetilde{\mathbb{G}}, \, 
\mathbf{H}^1, \, U\rightarrow (U^{\prime})^2 \right)$, where $ \widetilde{\mathbb{G}}$ represent a unidimensional Gaussian law, 
$\mathbf{H}^1$ is the Sobolev space of order 1 and the carr\'{e} du champ operator is just the square of the first derivative.

\end{definizione}

The main drawback of the carr\'{e} du champ operator is its bi-linearity, which makes 
computations awkward to perform. An easy way to overcome this drawback
is to introduce a new operator, called the sharp operator. This operator is a specified choice for the gradient 
in the terminology of Bouleau and Hirsch (see 
Bouleau and Hirsch \cite{bib:Bouleau-Hirsch}, section II.6.).
We recall the definition of the sharp operator associated with $\Gamma$. 

\begin{proposizione}[Sharp operator]\label{prop:sharp}\hfill
\vspace{0.2cm}

Let $\left( \widetilde{\Omega}, \, \widetilde{\mathcal{F}}, \,
\widetilde{\mathbb{P}}, \, \mathbb{D}, \, \Gamma \right)$ be an error
structure and let $\left( \overline{\Omega}, \, \overline{\mathcal{F}},
\, \overline{\mathbb{P}} \right) $ be a copy of the probability space
$\left( \widetilde{\Omega}, \, \widetilde{\mathcal{F}}, \,
\widetilde{\mathbb{P}}\right) $. We assume 
that the space $\mathbb{D}$ is separable. Then there exists a linear operator, called
sharp and denoted by $(\,)^{\#}: \, \mathbb{D} \rightarrow L^2(\widetilde{\mathbb{P}} \times \overline{\mathbb{P}})$, with the following two properties:

\begin{itemize}

\item{$\forall \, U \in \mathbb{D}$, $\Gamma[U] =
\overline{\mathbb{E}}\left[\left(U^{\#}\right)^2\right]$, where $\overline{\mathbb{E}}$ denotes the expectation under the probability $\overline{\mathbb{P}}$;}

\item{$\forall \, U \in \mathbb{D}^n$ and $F \in \mathcal{C}^1 \cap
Lip$,  $\left(F(U_1, \, ... \,, \, U_n)\right)^{\#} =
 \sum_{i=1}^n \left(\frac{\partial F}{\partial x_i}\circ U \right) U_i^{\#}  $.}

\end{itemize}
\end{proposizione}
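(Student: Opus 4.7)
The plan is to build the sharp operator in three steps: construct it explicitly on a countable total family of $\mathbb{D}$ using an auxiliary orthonormal Gaussian sequence on $\overline{\Omega}$, extend by linearity and isometric continuity to the whole of $\mathbb{D}$, and then deduce the chain rule property from the functional calculus already satisfied by $\Gamma$.

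First, using the separability of $\mathbb{D}$, fix a countable family $\{U_n\}_{n\geq 1}$ dense for the Dirichlet norm $\|U\|_{\mathbb{D}}^2 = \widetilde{\mathbb{E}}[U^2] + 2\mathcal{E}[U,U]$. Enlarging $\overline{\Omega}$ if necessary, assume it supports an i.i.d.\ sequence $(\xi_k)_{k\geq 1}$ of standard Gaussians. For each $N$ the random matrix $M_N(\omega) = (\Gamma[U_i,U_j](\omega))_{1\leq i,j\leq N}$ is $\widetilde{\mathbb{P}}$-a.s.\ symmetric, and positive semi-definite because for any $c\in\mathbb{R}^N$, $\sum_{i,j} c_i c_j \Gamma[U_i,U_j] = \Gamma\bigl[\sum_i c_i U_i\bigr]\geq 0$. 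A measurable Cholesky factorization $M_N = L_N L_N^T$ provides entries $L_{n,k}\in L^2(\widetilde{\mathbb{P}})$ (with $L_{n,k}=0$ for $k>n$), and we set
\[
U_n^{\#}(\omega,\bar\omega) \;=\; \sum_{k=1}^{n} L_{n,k}(\omega)\,\xi_k(\bar\omega).
\]
By construction $\overline{\mathbb{E}}[U_n^{\#}\,U_m^{\#}] = \Gamma[U_n,U_m]$ $\widetilde{\mathbb{P}}$-a.s., and extending by bilinearity to the rational linear span yields the first property there, together with the isometric identity $\|U^{\#}\|^2_{L^2(\widetilde{\mathbb{P}}\times\overline{\mathbb{P}})} = 2\,\mathcal{E}[U,U]$ obtained by integrating against $\widetilde{\mathbb{P}}$.

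Since $\mathcal{E}$ is closed and $\{U_n\}$ is dense in $\mathbb{D}$ for the Dirichlet norm, this isometry extends the operator continuously to all of $\mathbb{D}$ and the first property passes to the limit. For the second property, on elements of the form $F(U_{n_1},\dots,U_{n_k})$ with $F\in\mathcal{C}^1\cap Lip$, the functional calculus (\ref{functional-calculus}) of $\Gamma$ gives
\[
\Gamma[F(U),F(U)] \;=\; \sum_{i,j}\partial_i F(U)\,\partial_j F(U)\,\Gamma[U_{n_i},U_{n_j}] \;=\; \overline{\mathbb{E}}\Bigl[\Bigl(\sum_i \partial_i F(U)\,U_{n_i}^{\#}\Bigr)^{\!2}\Bigr].
\]
Thus $\sum_i \partial_i F(U)\,U_{n_i}^{\#}$ is a candidate for $(F(U))^{\#}$ carrying the correct second moment; uniqueness modulo the operator kernel follows from the isometry (two candidates differ by an element of zero $L^2(\widetilde{\mathbb{P}}\times\overline{\mathbb{P}})$-norm), so the two agree. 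General $U\in\mathbb{D}^n$ are then handled by approximating each component by elements of the dense family and using the Lipschitz bound on $F$ together with the isometry to pass to the limit.

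The main obstacle is twofold. On the technical side, one must secure the measurability of the Cholesky factorization on the set where $M_N$ degenerates, which is standard but requires a measurable selection. Conceptually more delicate is showing that the chain rule transfers from the generating family to all of $\mathbb{D}$: well-definedness is rescued by the isometry, while the limit passage for a sequence $U^{(m)}\to U$ in the Dirichlet norm relies on the fact that $\mathcal{C}^1\cap Lip$ composition is continuous for this norm — precisely the functional calculus axiom built into the definition of an error structure, which ensures that both $\Gamma[F(U^{(m)})-F(U)]$ and the candidate expression $\sum_i \partial_i F(U^{(m)})\,U_{n_i}^{\#}$ converge in $L^1(\widetilde{\mathbb{P}})$ and $L^2(\widetilde{\mathbb{P}}\times\overline{\mathbb{P}})$ respectively to their natural limits.
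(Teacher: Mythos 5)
Your construction is the standard one from Bouleau--Hirsch (Chapter II, Section 6), which is exactly the reference the paper relies on: the paper itself states this proposition without proof, as a recalled result, so there is no internal argument to compare against. Your outline --- dense countable family, measurable Cholesky factorization of $(\Gamma[U_i,U_j])$ against an auxiliary Gaussian sequence, isometric extension via $\|U^{\#}\|^2_{L^2(\widetilde{\mathbb{P}}\times\overline{\mathbb{P}})}=2\,\mathcal{E}[U,U]$, then the chain rule from the functional calculus of $\Gamma$ --- is the right skeleton. (One cosmetic mismatch: the statement takes $\overline{\Omega}$ to be a copy of $\widetilde{\Omega}$, which need not support an i.i.d.\ Gaussian sequence; your "enlarging $\overline{\Omega}$ if necessary" silently changes the target space, though in this paper's setting, products of Ornstein--Uhlenbeck structures, the copy is rich enough.)

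The one genuine gap is in your justification of the chain rule. You argue that $\sum_i \partial_i F(U)\,U_{n_i}^{\#}$ and the continuously-extended value $(F(U))^{\#}$ must agree because both carry the second moment $\Gamma[F(U)]$ and "uniqueness follows from the isometry." Equality of conditional second moments does not imply equality in $L^2(\widetilde{\mathbb{P}}\times\overline{\mathbb{P}})$ --- two distinct Gaussians $\xi_1,\xi_2$ on $\overline{\Omega}$ both have conditional second moment $1$ --- and the isometry only controls norms of elements already known to be images under the same linear map, which is precisely what is at issue here. The missing ingredient is the polarized cross-moment identity: for $V=\sum_j c_j U_{m_j}$ in the dense span,
\begin{equation*}
\overline{\mathbb{E}}\Bigl[\Bigl(\sum_i \partial_i F(U)\,U_{n_i}^{\#}\Bigr)\,V^{\#}\Bigr]
=\sum_{i,j}\partial_i F(U)\,c_j\,\Gamma[U_{n_i},U_{m_j}]
=\Gamma[F(U),V]\quad\widetilde{\mathbb{P}}\text{-a.s.},
\end{equation*}
which follows from the bilinear matching built into your Cholesky construction together with the functional calculus (\ref{functional-calculus}). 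With this in hand, taking $V^{(m)}\to F(U)$ in the Dirichlet norm and expanding $\bigl\|(V^{(m)})^{\#}-\sum_i\partial_i F(U)\,U_{n_i}^{\#}\bigr\|^2_{L^2}=\widetilde{\mathbb{E}}\bigl[\Gamma[V^{(m)}]-2\Gamma[V^{(m)},F(U)]+\Gamma[F(U)]\bigr]\to 0$ (using the Cauchy--Schwarz inequality for $\Gamma$) does identify the two candidates. This is a short repair, but as written the step is a non sequitur, and it is the crux of why the chain rule is compatible with the isometric extension; the same polarization is what makes your final limit passage for general $U\in\mathbb{D}^n$ close.
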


The sharp operator is a useful tool when computing $\Gamma$ because 
it is linear, whereas the carr\'{e} du champ operator is bilinear.  Moreover, the sharp operator is closed, 
then, for instance, we can exchange the sharp operator and the integral sign. The proof of this fact proceeds by an 
approximation of the integral by a sum, then we apply the sharp operator  and finally we take the limit using the 
closeness of the sharp operator, see for instance Bouleau  \cite{bib:Bouleau-erreur} section VI.2.

We give a detailed example of the computations of sharp, carr\'{e} du champ and bias operator in Section 8.   
To make our approach more understandable for a non specialist of Dirichlet forms, we explain in this example 
the computation of each operator. The classical approach is to compute the sharp operator. This operator is just an auxiliary 
step needed to the computation of the carr\'{e} du champ, that represents the variance in error theory framework.
Finally, the knowledge of variance operator $\Gamma$ allows us to compute the action of the bias operator $\mathcal{A}$.

We observe that the error theory based on Dirichlet forms
restricts its analysis to the study of the first two orders of error propagation, i.e. the  bias and the
variance. This fact is justified by the lack of information on the parameter uncertainties, 
generally given by the Fischer information matrix, that is often quite limited. 
The study of  higher orders is a very difficult problem for both mathematical and practical 
reasons. From the mathematical point of view, it would be necessary  to study chain rules 
of higher orders, involving skewness and kurtosis, and to prove that the related operators 
are closed in a suitable space. However, the crucial problem remains to have sufficiently
accurate estimates for the higher order uncertainties. This statistical obstacle cannot be 
overcome easily. Therefore, we decide to restrict our study to the two first orders. 

We conclude this survey recalling two direct consequences of this approach. 

\begin{corollario}[Impact of uncertainty]\label{remark-expansion}\hfill
\vspace{0.2cm}

If we neglect moments higher than the second one, i.e. we approximate all random variables by the Gaussian 
one with same expectation and variance, the impact of uncertainty on the parameter $u$ transforms $F(U)$ 
into a Gaussian random variable of the form
\begin{equation}
F(U) \stackrel{\text{\tiny d}}{\approx} F(\widehat{U}) + 
\mathcal{A}[F(U)](\widehat{U}) + 
\sqrt{\Gamma[F(U)](\widehat{U})} \; G\, ,
\end{equation}
where G is a standard Gaussian variable.

\end{corollario}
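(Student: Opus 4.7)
The strategy is moment matching: compute $\widetilde{\mathbb{E}}[F(U)]$ and the variance of $F(U)$ up to second order in the perturbation $Y := U - \widehat{U}$, identify these two moments with the quantities $F(\widehat{U}) + \mathcal{A}[F(U)](\widehat{U})$ and $\Gamma[F(U)](\widehat{U})$ via the chain rules \eqref{functional-calculus} and \eqref{bias-chain-rule}, and then invoke the stated convention of neglecting moments of order higher than two to replace the true law of $F(U)$ by the Gaussian carrying these two moments.

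Concretely, a Taylor expansion around $\widehat{U}$ gives
\begin{equation*}
F(U) = F(\widehat{U}) + F'(\widehat{U}) \, Y + \tfrac{1}{2} F''(\widehat{U}) \, Y^2 + \text{higher orders},
\end{equation*}
where the cubic and higher remainder is discarded by hypothesis. Using the interpretation of $\mathcal{A}[U](\widehat{U})$ as the bias of $U$ at $\widehat{U}$ and of $\Gamma[U](\widehat{U})$ as its conditional mean squared error (as fixed in the discussion of the error structure above), taking expectation yields
\begin{equation*}
\widetilde{\mathbb{E}}[F(U)] - F(\widehat{U}) \approx F'(\widehat{U}) \, \mathcal{A}[U](\widehat{U}) + \tfrac{1}{2} F''(\widehat{U}) \, \Gamma[U](\widehat{U}),
\end{equation*}
which is precisely $\mathcal{A}[F(U)](\widehat{U})$ by the chain rule \eqref{bias-chain-rule}. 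The leading-order variance is $[F'(\widehat{U})]^2 \, \Gamma[U](\widehat{U})$ (up to the square of the bias, which, as already noted in this section, is of the same order as the discarded remainder), and this coincides with $\Gamma[F(U)](\widehat{U})$ by the functional calculus \eqref{functional-calculus}.

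Having matched the first two moments, the Gaussian approximation adopted in the statement then gives the claimed representation, with $G$ a standard normal variable. There is no deep technical obstacle: the result is essentially a dictionary translation between classical Gaussian error propagation and the carr\'{e} du champ / generator pair of the Dirichlet form. The only point requiring care is that the cubic remainder in the Taylor expansion is indeed of smaller order than the retained terms when $Y$ is small, which is the standing regularity and smallness hypothesis of this section.
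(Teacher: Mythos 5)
Your proposal is correct and follows exactly the route the paper itself relies on: the paper states this corollary as a ``direct consequence'' of the preceding Taylor-expansion discussion in Section 3, where the bias and mean squared error of $F(U)$ are matched to $\mathcal{A}[F(U)]$ and $\Gamma[F(U)]$ via the chain rules (\ref{bias-chain-rule}) and (\ref{functional-calculus}), and the Gaussian replacement is then imposed by fiat. Your write-up simply makes that implicit moment-matching argument explicit, including the correct observation that the squared bias (the gap between MSE and variance) is of the same negligible order as the discarded remainder.
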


If we want to be conservative, in particular if we suppose that the Gaussian approximation is
not valid, then we can apply left-hand tailed Chebyshev's inequality (also known as Cantelli inequality), 
which leads to the following statement.

\begin{corollario}[left-hand tailed Chebyshev's inequality]\label{prop:Cheb}\hfill
\vspace{0.2cm}

The random variable $F(U)$ verifies the following inequality for all $k\geq 1$.
\begin{equation}\label{cheb-equation}
\widetilde{\mathbb{P}}\left[F(U)-F(\widehat{U}) - \mathcal{A}[F(U)](\widehat{U}) \geq k\,  \sqrt{
\Gamma[F(U)](\widehat{U})}  \right] \leq \frac{1}{1+k^2}
\end{equation}
\end{corollario}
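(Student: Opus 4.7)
The statement is an application of the classical one-sided Chebyshev (Cantelli) inequality, once we identify the bias operator $\mathcal{A}$ with the expectation of $F(U) - F(\widehat{U})$ and the carré du champ $\Gamma$ with its variance, up to the order to which we are working.

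My plan is to set $X = F(U) - F(\widehat{U})$ and work conditionally on the observed dataset producing the value $\widehat{U}$, so that $\widehat{U}$ is a constant. From the discussion following Definition 3.1 and from the chain rules \eqref{functional-calculus} and \eqref{bias-chain-rule}, we interpret $\mathcal{A}[F(U)](\widehat{U})$ as the conditional bias $\widetilde{\mathbb{E}}[X]$ and $\Gamma[F(U)](\widehat{U})$ as the conditional variance $\widetilde{\mathrm{Var}}[X]$, both understood to the second order in the size of the uncertainty. Setting $\mu := \mathcal{A}[F(U)](\widehat{U})$ and $\sigma^2 := \Gamma[F(U)](\widehat{U})$, the statement to prove becomes the standard Cantelli bound
\begin{equation*}
\widetilde{\mathbb{P}}\bigl[ X - \mu \geq k\sigma \bigr] \leq \frac{1}{1+k^2}.
\end{equation*}

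For this, I would use the classical one-parameter trick: for any $t>0$, since $x \mapsto (x+t)^2$ is monotone on $\{X - \mu \geq -t\}$,
\begin{equation*}
\widetilde{\mathbb{P}}\bigl[ X - \mu \geq k\sigma \bigr] = \widetilde{\mathbb{P}}\bigl[ (X-\mu+t)^2 \geq (k\sigma + t)^2 \bigr] \leq \frac{\widetilde{\mathbb{E}}[(X-\mu+t)^2]}{(k\sigma+t)^2} = \frac{\sigma^2 + t^2}{(k\sigma+t)^2},
\end{equation*}
by Markov's inequality. Minimising the right-hand side in $t>0$ (the optimum is $t = \sigma/k$) produces the bound $1/(1+k^2)$, which is precisely \eqref{cheb-equation}.

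The only genuine subtlety — and what I would flag as the main conceptual point rather than a real obstacle — is the interpretation step: strictly speaking, $\mathcal{A}[F(U)](\widehat{U})$ and $\Gamma[F(U)](\widehat{U})$ are the bias and variance of $F(U)$ only after the second-order Taylor truncation that underlies the whole error-theory framework (cf. Corollary \ref{remark-expansion}), so the inequality should be read as holding for the Gaussian-like approximation justified earlier in the section. Once this is granted, the rest is a one-line application of Cantelli and requires no new ingredient from Dirichlet form theory.
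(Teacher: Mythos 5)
Your proposal is correct and matches the paper's (implicit) argument: the paper states this corollary without proof, simply invoking the classical Cantelli inequality after interpreting $\mathcal{A}[F(U)](\widehat{U})$ and $\Gamma[F(U)](\widehat{U})$ as the conditional bias and second moment of $F(U)-F(\widehat{U})$, which is exactly the identification you make before running the standard one-parameter Markov argument. One small remark: since $\Gamma$ corresponds to the mean squared error, which dominates the variance, replacing the variance by $\Gamma[F(U)](\widehat{U})$ only shrinks the event on the left-hand side, so the bound holds exactly in the direction needed rather than merely "up to the order to which we are working."
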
 

\section{Estimation, calibration and uncertainty}

In this section, we discuss briefly the classical methods to determine the model parameters and how to evaluate their uncertainties.

\smallskip
\underline{\emph{Statistical estimation.}} The classical approach deals with  estimating 
the parameters using historical data of the stock  price.
The goal of this method is to construct an estimator, that takes the historical data as 
input and gives as result an estimated value for each parameter.
Then, an estimator is a function mapping the historical data probability space 
(sample space throughout the sequel) into the set of possible estimated outcomes. 
It is important to remark that an estimator is a random variable whereas an estimated value is a constant. 
We can associate different attributes with an estimator, like its bias, its variance and its mean squared error. We recall 
that all these attributes depend both on the estimator, i.e. the rule to find the estimated value, and on the sample, 
i.e. the historical data. In this sense,  
maximum likelihood estimation plays a central role between statistical estimation methods. 
We recall the asymptotically efficiency  between their properties, 
meaning that the maximum likelihood estimator  reaches the Cramer-Rao bound,
that is a lower bound on the variance of a parameter estimator.
Bouleau and Chorro \cite{bib:Bouleau-Chorro} have proved that the operator 
$\Gamma$ is the inverse of the  Fisher information matrix, i.e. the inverse of Cramer-Rao bound, in the finite dimensional case. 

\smallskip
 \underline{\emph{Calibration on liquid options.}} The main drawback of statistical estimation 
 is that the diffusion process depending on the estimated parameters is 
 generally not consistent with the market prices of vanilla options. But the liquidity of these 
 options is large enough to consider that their prices represent equilibrium prices.
 As a consequence, a financial model has to be consistent with these prices. The method
  of calibration is to find which sets of parameters are consistent with market prices of 
 vanilla options. A general result due to Dupire \cite{bib:Dupire} shows for instance that a 
 unique local volatility model is coherent with the surface of option market prices across 
 stikes and maturities. However, Dupire formula is an ill posed problem. 
 A regularization method is then required in order to stabilize the solution. 
 The classical regularization method is to add a penalization term and to find a tradeoff 
 between consistency with data and smoothness, see for instance 
 Cont and Tankov \cite{bib:Cont}. 
 
 Calibration method gives a rule to find the best set of
  parameters, i.e. the set of parameters that minimizes a given optimization principle. 
 This rule is a function from the sample space into the possible outcomes, like the 
 estimator in the case of statistical estimation. 
 But it is difficult to adapt the previous methodology to calibration case, since the 
 calibration function from the sample space into the possible outcomes is not explicit.

A different approach can be however proposed. It is well-known that, even if option volumes 
are grown exponentially in the last thirteen years, their relative spread remains large compared with assets relative spread.
The relative spread is defined as the ratio between the bid-ask spread and the mid price.
As a consequence the equilibrium price of a vanilla option is known with an uncertainty, measured by the bid-ask spread.
This uncertainty has to play a role in the calibration methodology. 

We propose then an easy method to transfer these price uncertainties into an uncertainty on the calibrated parameters.
The first step is to perform the calibration methodology neglecting the bid-ask spread, 
i.e. for any option we fix the price at the mid-price.       
Afterward, we fix an option $j$ on the basket used to calibrate, we shift its price to ask price and we recalibrate. The new set 
of calibrated parameters represents a stress of the previous set. We compute then the difference between the two sets and we interpret it as a standard deviation of the calibrated parameters submitted to a stress into the price of option $j$. So we can reconstruct the variance-covariance matrix with respect to the random source in option $j$. 
We perform this procedure for all options into the basket and we can reconstruct the global variance-covariance matrix.

This method can be easily performed given that only a rough approximation of the stressed calibration set is needed. 
Moreover, the initial calibration set can be used as a starting set to perform stressed calibrations. 
A corollary result of this approach is to evaluate the fit  of a model to the real data, since a large variance-covariance 
is the sign of an overfitting situation due to a too large number of parameters or to an unconformability 
of the model behavior to real data shape.

\smallskip
\underline{\emph{Pure calibration method.}} Finally, a third method to define parameters uncertainty in our framework is 
via a pure calibration approach. In other word, we can study theoretically the behavior of the model depending both on the estimated parameters and their uncertainty. Given the pricing formula or the pricing rule, for instance the simulation using Monte Carlo, it is possible to study the parameters sets coherent with vanilla options.    

\bigskip

The crucial remark of this section is that statistical estimation and calibration produce a set of estimated parameters, that is the
best approximation of true parameters given all historical data. As a matter of fact, these best estimated values are both 
\begin{description}
\item[Fixed Values] since historical data are fixed and known. Then the rule to estimate, via statistical estimation or calibration, 
gives us a numerical result. 
\item[Random Variables] since historical data are in fact a given realization. Then  the dependence with respect to the simple
space cannot be neglected.   
\end{description}        
As a consequence, in the pricing methodology the estimated parameters has to be considered fixed and this set of parameters 
is the only one that can be used to perform all computation.
However, an estimation risk appears and the dependency on the given realization of historical data becomes the kernel 
question to be analyzed. 
The estimated volatility is then a fixed value (denoted by $\widehat{\varsigma}$) on the probability space 
$\left(\Omega_1, \, \mathcal{F}^1, \, \mathbb{P}_1\right)$ and a random variable 
(denoted by $\varsigma$) on the probability space $(\widetilde{\Omega}, 
 \widetilde{\mathcal{F}}, \widetilde{\mathbb{P}})$.

The main objective of the present paper is to investigate the impact of this double nature of estimated parameters set.

\section{Diffusion model}

We start our analysis with the classical Black Scholes model, hereafter denoted BS, 
see Black and Scholes \cite{bib:Black-Scholes}.  For the sake of simplicity, we take the
money market account as the numeraire.
Let $\left(\Omega_1, \, \mathcal{F}^1, \, \mathbb{P}_1\right)$ be the historical
probability space and $B$ the associated Brownian motion. Fix $\mu \in \mathbb{R}$,
$\sigma _0>0$ and the interval $[0,\,T]$. The dynamics of the risky asset under the historical probability measure 
$\mathbb{P}_1$ is given by the following diffusion in accord with the model of Black 
and Scholes:
\begin{equation}\label{Black-Scholes}
dS_t  =  S_t \, \mu \, dt + S_t \, \sigma_0 \, dB_t .
\end{equation}
In this framework, the price of a European vanilla option is now standard
(see for instance \cite{bib:Lamb-Lap}).
This model presents many advantages, in particular the pricing depends
only on the volatility parameter $\sigma_0$ and we find closed forms for premiums
and greeks of vanilla options.  However, the BS model cannot
reproduce the market price of call options for all strikes with the
same volatility: this is called the smile effect.
In order to take  this phenomenon into account, we discuss two main extensions, 
the local and the stochastic volatility models, hereafter denoted LV, see for instance 
Dupire \cite{bib:Dupire}, and SV, see Hull and White \cite{bib:Hull},
Heston \cite{bib:Heston}, Hagan et al \cite{bib:Hagan} 
and Fouque et al \cite{bib:Fouque}. 
In these two classes of models, the parameter $\sigma_0$ is replaced by a function 
$\sigma$ that depends on the time $t$, on the underlying $S$ and, in the case of 
SV models, on a random source. For the sake of simplicity, we consider a local volatility 
model. 
The stochastic differential equation verified by the 
price of the underlying is
\begin{equation}\label{SDE-SV}
dS_t = S_t \, \mu \, dt + S_t \, \sigma(t, \, S_t) \, dB_t \, .
\end{equation}
It is plain that this class of models is not the more general. But, your methodology can be easily 
applied to multidimensional diffusion, then the extension to SV models is possible under some 
hypotheses. 
Diffusions with jumps  are excluded from the models treatable with your approach instead; for these, see 
for instance Cont and Tankov \cite{bib:Cont}.
Henceforth, we denote by $(\Omega_1, \, \mathcal{F}^1,\, \mathbb{P}_1)$ the probability space 
where the Brownian motion $B$ is defined, by $\{\mathcal{F}^1_t\}_{t \in [0,T]}$ the standard filtration 
generated by the Brownian motion $B$ and  by $\mathbb{E}_1$ the expectation 
under the probability $\mathbb{P}_1$.   

Our goal is to analyze the sensitivities of the model given by the SDE  (\ref{SDE-SV})
with respect to the uncertainty in the estimation of their parameters. 
Indeed, all models depend on certain parameters, generally a small number, that are
related to the underlying, and this makes it possible to calibrate the model. 
When we select a given value for a parameter, using a calibration methodology, there
is still some uncertainty on the true value of this parameter. 
This uncertainty can be estimated by using statistical methods, such as Fischer information,
or by computing the sensitivity of the model calibration. We emphasize that both methods,
statistical and calibrative, yield parameters with uncertainties and these uncertainties 
have a random character.
 
Our goal is to analyze the impact of these uncertainties on the management and the 
hedging of a contingent claim from the point of view of a seller that tries to minimize
his own risk.  

We propose to consider the model given by the SDE (\ref{SDE-SV}) with uncertainty on the
parameter modeled  by means of an error 
structure on the volatility function $\sigma(t, \,S_t)$.

We make the following financial hypothesis:

\begin{assumption}[Asset evolution and uncertainty impact]\label{assumption-fin}\hfill

\begin{enumerate}
\item{ the market follows the SDE (\ref{SDE-SV})  with fixed but unknown parameters, 
i.e. the underlying follows the SDE (\ref{SDE-SV}). 
The market is viable and complete, i.e. there are enough 
traded assets to guarantee that any contingent claim admits  a hedging portfolio and there 
is a probability measure $\mathbb{Q}$, equivalent to $\mathbb{P}$, under which all 
discounted security prices are martingales;}

\item{the option seller knows that the underlying follows an SDE such as (\ref{SDE-SV}) 
but does not know  the values of the parameters of the function $\sigma$;}

\item{ the option seller has to estimate the parameters of his model: 
the uncertainty associated to such estimation is modeled  by means of an error structure. 
As a consequence, the price and the greeks of the option are affected by uncertainty. 
We assume that the option seller makes his statistical estimations before to sell 
the option and he does not change his estimators during the time interval $[0,\,T]$.}
\end{enumerate}
\end{assumption}

We assume that there exists a basis, denoted by $\{\phi_i(t,\, x)\}_{i \in \mathbb{N}}$, 
of the space of volatility functions.
The functions $\phi_i$ belong to $C^{1, \, 2}$ in $(t,\, x)$, , 
$\{\phi_i\}_{i\in \mathbb{N}}$ and their derivatives are  Lipschitz by a same constant 
and square integrable in x.
For instance, the basis  $\{\phi_i(t,\, x)\}_{i \in \mathbb{N}}$ can be the spanning set of 
interpolating splines for the volatility function.  
We define the set $L^2_{\sigma}$ of admissible volatility functions:
$$
L^2_{\sigma} = \left\{f(t, x) = \sum_i b_i \phi_i(t,x) \text{ such that }  (b_1,  b_2,  \ldots,  b_n, \ldots) \in \ell^2 \text{ and } f(t,x) > \xi >0\; 
\forall (t,x)  \right\}
$$ 
We equip the set with the distance $d^2_\sigma$ induced by the sequence representation.

We denote by $a_i$ the coefficients of the series expansion of $\sigma$, i.e.
$\sigma(t, \, x)=\sum_i a_i\,\phi_i(t, \, x)$. 
We assume that the option seller has to estimate the coefficients $a_i$, then the 
uncertainty is carried by the coefficients $a_i$ that are estimated 
by the estimators $A_i$, that is, the $A_i$
are random variables. The estimated values of the coefficients $a_i$ are denoted by $\widehat{A}_i$, i.e. $\widehat{A}_i$ 
are fixed values. We denote by $\varsigma(t,\, x)$ (resp. $\widehat{\varsigma}(t,\, x)$ ) the volatility estimator (resp. the estimated volatility), i.e.  $\varsigma(t, \, x)=\sum_i A_i\,\phi_i(t, \, x)$ (resp.
$\widehat{\varsigma}(t, \, x)=\sum_i \widehat{A}_i\,\phi_i(t, \, x)$ ).   
We make now the following mathematical hypothesis:

\begin{assumption}[Uncertainty on volatility]\label{assumption-math}\hfill
\vspace{0.2cm}

We assume $\sigma(t,x )$ and $\widehat{\varsigma}(t,x)$ belong to $L^2_{\sigma}$.
For each random variable $A_i$, we define moreover an independent error structure 
$\left( \mathbb{R}, \, \mathcal{B}(\mathbb{R}), \,
\widetilde{\mathbb{P}}_i, \, \mathbb{D}_i, \, \Gamma_i \right)$. We denote by 
$\left( \widetilde{\Omega}, \, \widetilde{\mathcal{F}}, \,
\widetilde{\mathbb{P}}, \, \mathbb{D}, \, \Gamma \right)$ the product of all error structures 
$\left( \mathbb{R}, \, \mathcal{B}(\mathbb{R}), \,
\widetilde{\mathbb{P}}_i, \, \mathbb{D}_i, \, \Gamma_i \right)$, that is again an error structure, see 
\cite{bib:Bouleau-erreur} chapter IV. We denote by $(\mathcal{A},
\, \mathcal{D}\mathcal{A})$ the related generator and its domain. We assume that the following 
hypotheses hold for each $A_i$.

\begin{enumerate}

\item $A_i \in \mathcal{D} \mathcal{A}$, $\Gamma[A_i]$ and $\mathcal{A}[A_i]$ are known;

\item{the error structure $\left( \widetilde{\Omega}, \, \widetilde{\mathcal{F}}, \,
\widetilde{\mathbb{P}}, \, \mathbb{D}, \, \Gamma \right)$
 admits  a sharp operator denoted by $( \cdot)^{\#}$, in the sense that each error structure $\left( \mathbb{R}, \, \mathcal{B}(\mathbb{R}), \,
\widetilde{\mathbb{P}}_i, \, \mathbb{D}_i, \, \Gamma_i \right)$ admits a sharp operator denoted by $( \cdot)^{\# }_i$
 and the sharp operator on the product error structure is the sum of the sharp operators of each sub-error structure.}
 
\end{enumerate}

Finally, we assume the technical property 4.2. in 
Bouleau \cite{bib:Bouleau-erreur} chapter V page 84. 

\end{assumption}

Property 4.2 in Bouleau \cite{bib:Bouleau-erreur} fix the probability measures 
$\widetilde{\mathbb{P}}_i$ as a mixture of a measure 
$\mu$ absolutely continuous w.r.t. Lebesque one and a Dirac mass. 
Assuming the sum of the weights  of the measure $\mu$ into the mixture is finite, then  
we can prove, see \cite{bib:Bouleau-erreur},   that the probability product measure 
$\widetilde{\mathbb{P}} = \otimes \widetilde{\mathbb{P}}_i$ is absolutely 
continuous and under $\widetilde{\mathbb{P}}$  only a finite number of the coefficients 
$A_i$ in the representation are  random variables.  
That means that only a finite, but random, number of terms contribute to the uncertainty.

We briefly comment on the two previous assumptions. We have split them according to
the nature of the assumption: the first one, \ref{assumption-fin}, is mainly 
financial and the second one is mainly
mathematical. Assumption \ref{assumption-math} sets the mathematical framework used 
in this paper; point 4 and the final assumption are included in order to simplify proofs 
and to simplify some computations, but they can be relaxed. The hypotheses on the functions 
$\phi_i(t,\, x)$ are requisite to guarantee the existence and the uniqueness of the solution of SDE 
(\ref{SDE-SV}). The lower bound $\xi$ guarantees that the risk neutral probability $\mathbb{Q}_1$
is well defined, in particular it is equivalent to $\mathbb{P}_1$.

The main financial hypotheses are contained in assumption \ref{assumption-fin}. The first point 
states that the underlying follows an exact SDE without uncertainty. The uncertainty appears 
when the option seller attempts to determine the parameters of the SDE. 
This assumption is easy to understand in the financial framework and circumvents many 
problems in the mathematical framework. In particular, when the volatility is uncertain, 
the set of probability measures 
describing the whole class of possible probabilistic viewpoints is not  dominated. 
In this case, the classical approach used in mathematical finance (see for instance 
Avellaneda \cite{bib:Avelaneda}) cannot be followed; some important attempts to address 
this problem are Denis and Martini \cite{bib:Denis-1} and Denis and 
Kervarec \cite{bib:Denis-2}.   

When the option seller has set his parameters, i.e. his function $\widehat{\varsigma}(t,\, x)$, he has
established his "risk neutral" probability $\mathbb{Q}_1^{\widehat{\varsigma}}$, that is, the probability 
measure under which the diffusion 
\begin{equation}\label{SDE:perturbed-av}
dX^{(\widehat{\varsigma})}_t  = \mu \,X^{(\widehat{\varsigma})}_t\, dt +  
X^{(\widehat{\varsigma})}_t  \, \widehat{\varsigma} \left(t, \, X^{(\widehat{\varsigma})}_t \right)  \, dB_t 
\end{equation}
 is a  martingale, the related Brownian motion is denoted by $W_t^{(\widehat{\varsigma})}$.
 The probability $\mathbb{Q}_1^{\widehat{\varsigma}}$ exists and is unique since the risk premium is given by 
 $\frac{\mu}{\widehat{\varsigma}}$. We recall that the estimated volatility is a fixed function depending on time and on 
 the underlying $S$, then it is known at any time given the value of the spot. Moreover, the estimated volatility 
 $\widehat{\varsigma}$ is bounded from below by the positive constant $\xi$, then the Radon-Nikodym density is 
a martingale thanks to the Novikov criterion and the inequality
$$
\mathbb{E}_1 \left[  exp\left\{\frac{1}{2} \int_0^T \frac{\mu^2}{\widehat{\varsigma}^{2}}  dt \right\}\right]  < 
\mathbb{E}_1 \left[  exp\left\{\frac{1}{2}\int_0^T \frac{\mu^2}{ \xi^{2}} dt\right\} \right] <\infty \, .
$$
 Thanks to this probability, he can calculate the option price and the related hedging strategy.
 As a matter of fact, the replacement of the volatility function estimator $\varsigma(t,\, x)$ with the estimated value of 
 the volatility $\widehat{\varsigma}(t, \,x)$ causes an information loss.
 The natural question that arises is: What is the impact on the prices of the uncertainty on the
 volatility function $\varsigma(t,\, x)$ and it is possible to bound the related risk?
 
 We will analyze the problem in the following section \ref{sec:PandL} and we will propose a
 pricing principle in section \ref{sec:OP}.
We shall also make the following assumption on the class of contingent claims analyzed 
in this paper.

\begin{assumption}[Contingent claims]\label{assumption-payoff}\hfill
\vspace{0.2cm}

Let $\Phi$ be the payoff of a considered contingent claim. 
We assume that $\Phi$ depends only on $S_T$ and belongs to $C^2$. 
We also suppose that the two first derivatives of $\Phi$ are bounded.
\end{assumption}

This hypothesis is needed in order to apply error theory techniques. However, we will show at the end 
of section  \ref{sec:OP} that this hypothesis can be overcome. Another point where the hypothesis about 
the boundedness of the first derivative of $\Phi$ is crucial  is to prove that the gain process of the option seller
is a $\mathbb{Q}_1^{\widehat{\varsigma}}$-martingale.

\subsection{Profit and loss process}\label{sec:PandL}

We consider  that the option seller uses his data and proprietary information to determine
the model parameters using an optimization procedure, see for instance 
Dupire \cite{bib:Dupire}  and Cont and Tankov \cite{bib:Cont}. 
By Assumption \ref{assumption-fin}, a risk neutral-measure $\mathbb{Q}_1$ 
exists and is unique.  With this probability measure, the seller could define the fair price 
of the option and determine the hedging strategy if he  knew  the true values of the parameters. 
Since he does not know these true values, he has to estimate them before to sell the option in 
order to propose a price for the option and to define an hedging strategy.

The natural question is:
how can he define this price taking into account the uncertainty into his estimated parameters?
To do that, we recall that the wealth minus the liability of the option seller, that sells the option and hedges it, is the
crucial stochastic process to define the price of the option, since the fair price and the right 
hedging strategy are such as the wealth minus the liability of the option seller is  worth zero almost surely in a 
complete market without uncertainty.
Therefore we study this wealth reduced by the related liability also know as the profit and loss process.

We neglect, in this first work, the fact that the option holder can sell the option before  maturity, 
so we assume that he holds the contingent claim until  maturity. 
We also assume that all prices are denominated using the risk-free asset as numeraire.   
The profit and loss process at maturity of the option seller is given by the cost of the 
hedging strategy of the option, i.e. the option price in a financial model without uncertainty 
(see for instance Black and Scholes \cite{bib:Black-Scholes}), 
plus the value of the hedging portfolio minus the final payoff that the seller has to pay to the holder, 
so we have
\begin{equation}\label{equation:P-and-L}
P\&L(T) = C(\widehat{\varsigma}, \, x, \, 0) + \int_0^T \Delta(\widehat{\varsigma}, \, S_t, \, t) \, dS_t - \Phi(S_T),
\end{equation}
where $C$ and $\Delta$ denote respectively  the cost of the hedging strategy followed
by the option seller, and its first derivative with respect to the underlying. Thanks to the 
asset pricing theory and in absence of parameter uncertainty, we have 
$C(\sigma, \, x, \, 0) = \mathbb{E}_1^{\mathbb{Q}_1} \left[ \Phi(S_T) \right]$. However, since 
the option seller does not known the true values of the parameters, the cost of hedging 
strategy is given 
by 
$$
C(\widehat{\varsigma}, \, x, \, 0) =  \mathbb{E}^{\mathbb{Q}_1^{\widehat{\varsigma}}}_{X^{(\widehat{\varsigma})}_0=x}
\left[ \Phi \left(X^{(\widehat{\varsigma})}_T\right) \right]\, ,
$$
where  $ \mathbb{E}^{\mathbb{Q}_1^{\widehat{\varsigma}}}_{\,X^{(\widehat{\varsigma})}_0=x}$ denotes the expectation under the probability space
$(\Omega_1, \mathcal{F}^1, \{\mathcal{F}^1_t\}_{t\in[0,T]}, \mathbb{Q}_1^{\widehat{\varsigma}} )$ when the process $X^{(\widehat{\varsigma})}$ starts 
at time $0$ with value $x$ and follows the SDE (\ref{SDE:perturbed-av}), i.e.
\begin{equation}\label{SDE:perturbed}
dX^{(\widehat{\varsigma})}_t  = X^{(\widehat{\varsigma})}_t  \, \widehat{\varsigma} \left(t, \, X^{(\widehat{\varsigma})}_t
\right)  \, dW_t^{(\widehat{\varsigma})} 
\end{equation}
under probability $\mathbb{Q}_1^{\widehat{\varsigma}}$.
 To simplify our notations,
we denote by $\displaystyle \frac{\partial C}{ \partial \sigma(\phi_i)}$ and 
$\displaystyle \frac{\partial \Delta}{ \partial \sigma(\phi_i)}$  
the Gateaux derivatives of $C$ and $\Delta$ with
respect to the volatility in the direction of $\phi$, that is:
\begin{eqnarray}
\displaystyle \frac{\partial C}{ \partial \sigma(\phi_i)}(\widehat{\varsigma}, \, x, \, 0) 
&= & \displaystyle \lim_{\epsilon \rightarrow 0}
\frac{\mathbb{E}^{\mathbb{Q}_1^{\widehat{\varsigma}  +\epsilon \phi_i}}_{X^{(\widehat{\varsigma}+ \epsilon \phi_i)}_0=x} 
\left[ \Phi \left( X^{(\widehat{\varsigma}+ \epsilon \, \phi_i)}_T \right) \right] - 
\mathbb{E}^{\mathbb{Q}_1^{\widehat{\varsigma}}}_{X^{(\widehat{\varsigma})}_0=x} 
\left[ \Phi \left(X^{(\widehat{\varsigma})}_T \right) \right] }{ \epsilon} \label{rel:gateaux-1}
  \\
\displaystyle \frac{\partial \Delta}{ \partial \sigma(\phi_i)}(\widehat{\varsigma}, \, S_t, \, t) & = & 
\displaystyle \lim_{\epsilon \rightarrow 0} 
\frac{ \displaystyle   \frac{\partial \,   \mathbb{E}^{\mathbb{Q}_1^{\widehat{\varsigma}  +\epsilon \phi_i}
}_{X^{(\widehat{\varsigma}+ \epsilon \phi_i)}_t=x} 
\left[ \Phi \left(X^{(\widehat{\varsigma}+ \epsilon \phi_i)}_T\right) \right]}{\partial x} -
 \displaystyle   \frac{\partial \,   \mathbb{E}^{\mathbb{Q}_1^{\widehat{\varsigma}}}_{X^{(\widehat{\varsigma})}_t=x} 
\left[ \Phi \left(X^{(\widehat{\varsigma})}_T\right) \right]}{\partial x} 
}{\epsilon} \Bigg|_{x=S_t} . \label{rel:gateaux-2}
\end{eqnarray}
The next lemma shows that the two previous Gateaux derivatives exist.

\begin{lemmas}[Existence of the Gateaux derivatives]\label{Lemma-Gateaux}\hfill
\vspace{0.2cm}

Under assumptions \ref{assumption-math} and \ref{assumption-payoff}, the Gateaux derivatives defined by relations (\ref{rel:gateaux-1}) and (\ref{rel:gateaux-2})  exist. Moreover, the higher order Gateaux derivatives 
$\displaystyle 
\frac{\partial^2 C}{ \partial \sigma(\phi_i) \partial \sigma(\phi_j)}$ 
and  $\displaystyle \frac{\partial^2 \Delta}{ \partial \sigma(\phi_i) \partial \sigma(\phi_j)}$ exist too.
\end{lemmas}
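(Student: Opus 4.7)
My plan is to use a classical stochastic-flow / first-variation argument. Fix a probability space carrying a Brownian motion $W$, and for each $\epsilon$ in a neighbourhood of $0$ let $X^\epsilon$ denote the unique strong solution of
\begin{equation*}
dX^\epsilon_t = X^\epsilon_t \bigl(\widehat{\varsigma} + \epsilon \phi_i\bigr)(t, X^\epsilon_t)\, dW_t, \qquad X^\epsilon_0 = x .
\end{equation*}
Existence, uniqueness, and the continuity estimate $\sup_{t\le T} \mathbb{E}\bigl[|X^\epsilon_t - X^0_t|^2\bigr] = O(\epsilon^2)$ follow from Assumption \ref{assumption-math}: the coefficient $x \mapsto x(\widehat{\varsigma}+\epsilon\phi_i)(t,x)$ is Lipschitz with constant independent of $\epsilon$ for $|\epsilon|$ small, and stays bounded below by $\xi/2$. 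I would then introduce the formal derivative $Y^i_t := \partial_\epsilon X^\epsilon_t\bigl|_{\epsilon=0}$ as the unique $L^2$-solution of the linear SDE
\begin{equation*}
dY^i_t = \bigl[\widehat{\varsigma}(t,X^0_t) + X^0_t\,\partial_x\widehat{\varsigma}(t,X^0_t)\bigr] Y^i_t\, dW_t + X^0_t\,\phi_i(t,X^0_t)\, dW_t, \qquad Y^i_0 = 0 ,
\end{equation*}
and verify, by Taylor-expanding the SDE coefficient and applying Gronwall, that $(X^\epsilon_t - X^0_t)/\epsilon \to Y^i_t$ in $L^2$.

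Combining this convergence with the boundedness of $\Phi'$ (Assumption \ref{assumption-payoff}), dominated convergence yields
\begin{equation*}
\frac{\partial C}{\partial \sigma(\phi_i)}(\widehat{\varsigma},x,0) = \mathbb{E}^{\mathbb{Q}_1^{\widehat{\varsigma}}}\!\bigl[\Phi'(X^0_T)\, Y^i_T\bigr] ,
\end{equation*}
which proves the existence of the first Gateaux derivative of $C$. For $\Delta(\widehat{\varsigma},\cdot,t)=\partial_x u(t,\cdot)$, where $u(t,x)=\mathbb{E}^{\mathbb{Q}_1^{\widehat{\varsigma}}}\!\bigl[\Phi(X^{t,x}_T)\bigr]$, I would first note that uniform ellipticity of the pricing operator (which holds after passing to logarithmic coordinates, since $\widehat{\varsigma}\ge\xi>0$) together with the regularity of the coefficients and of $\Phi$ gives $u\in C^{1,2}$ via Feynman--Kac, so $\Delta$ is well defined. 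Differentiation in $\sigma$ is then carried out by representing $\partial_x u$ probabilistically through the tangent flow $\partial_x X^{t,x}_T$ and repeating the first-variation argument: the flow derivative itself solves a linear SDE whose coefficients are smooth functions of $\sigma$, and coupling it with the first variation in direction $\phi_i$ produces the Gateaux derivative of $\Delta$.

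The second-order derivatives $\partial^2 C/\partial\sigma(\phi_i)\partial\sigma(\phi_j)$ and $\partial^2 \Delta/\partial\sigma(\phi_i)\partial\sigma(\phi_j)$ are obtained identically from a second-variation process $Z^{ij}$ satisfying a linear SDE whose source terms are quadratic in $(Y^i,Y^j)$, closed by invoking the boundedness of $\Phi''$. I expect the main obstacle to be not the conceptual derivation but the careful justification of differentiation under the expectation sign, which requires uniform $L^p$-bounds on $X^\epsilon$, $Y^i$, and $Z^{ij}$ together with tight control of the Taylor remainder of the SDE coefficients. These bounds follow from Assumption \ref{assumption-math} (uniform Lipschitz constant and the lower bound $\xi$) combined with Gronwall's lemma and the Burkholder--Davis--Gundy inequality; the bookkeeping for the second-order remainders is the fiddly part.
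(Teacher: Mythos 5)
Your proposal is correct and follows essentially the same route as the paper: a first-variation argument for the perturbed SDE, with the Taylor remainder controlled by the boundedness of $\Phi''$, and the derivative of $\Delta$ handled through the tangent flow with respect to the initial condition. The only cosmetic difference is that the paper keeps the finite-$\epsilon$ difference process and exhibits it as proportional to $\epsilon$ via the variation-of-constants formula (Dol\'eans-Dade exponential), whereas you identify the limiting linear SDE directly and prove $L^2$-convergence of the difference quotients by Gronwall.
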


\begin{proof}
We remark that $X_t^{(\widehat{\varsigma})}$ is a $\mathbb{Q}_1^{\widehat{\varsigma}}$ martingale. 
Then, the process $X_t^{(\widehat{\varsigma})}$ under $\mathbb{Q}_1^{\widehat{\varsigma}}$ has the same law as 
 $\overline{X}_t^{(\widehat{\varsigma})}$ under $\mathbb{Q}_1$, where  $\overline{X}_t^{(\widehat{\varsigma})}$ follows the SDE:
$$
d\overline{X}^{(\widehat{\varsigma})}_t  =
 \overline{X}^{(\widehat{\varsigma})}_t  \; \widehat{\varsigma} \left(t, \, \overline{X}^{(\widehat{\varsigma})}_t
 \right)  \, dW_t 
$$
where $W$ is a $\mathbb{Q}_1$ Brownian motion. Therefore, we can write the difference quotient
of $C(\widehat{\varsigma}, \, x,\, 0)$ with respect to a variation of the volatility $\widehat{\varsigma}$ 
in the direction $\phi_i$
in equation (\ref{rel:gateaux-1}) in the following way
$$
 \frac{\delta C}{\delta \sigma(\phi_i)}(\widehat{\varsigma}, \, x,\, 0, \, \epsilon) =  \frac{\mathbb{E}^{\mathbb{Q}_1}_{
 \overline{X}^{(\widehat{\varsigma}+ \epsilon \phi_i)}_0=x} 
\left[ \Phi \left( \overline{X}^{(\widehat{\varsigma}+ \epsilon \phi_i)}_T \right) \right] - 
\mathbb{E}^{\mathbb{Q}_1}_{\overline{X}^{(\widehat{\varsigma})}_0=x} 
\left[ \Phi \left(\overline{X}^{(\widehat{\varsigma})}_T \right) \right] }{ \epsilon} \,.
$$
Using  Assumption \ref{assumption-payoff}, we expand  the function $\Phi$ and it exists 
$$
J^{(\widehat{\varsigma}, \,\phi_i, \, \epsilon)}_T  \in  \left]  \min \left( \overline{X}^{(\widehat{\varsigma}+ \epsilon \, \phi_i)}_T,
\, \overline{X}^{(\widehat{\varsigma})}_T\right), \, \max \left( \overline{X}^{(\widehat{\varsigma}+ \epsilon \, \phi_i)}_T,\, 
\overline{X}^{(\widehat{\varsigma})}_T \right) \right[ \, ,
$$
such that
\begin{equation}\label{tech-eq-lemma-1}
 \frac{\delta C}{\delta \sigma(\phi_i)}(\widehat{\varsigma}, \, x,\, 0, \, \epsilon) =  
 \frac{\mathbb{E}^{\mathbb{Q}_1}
\left[ \Phi^{\prime} \left( \overline{X}^{(\widehat{\varsigma})}_T \right) \, Y_T^{(\widehat{\varsigma}, \, \phi_i, \epsilon)}  + 
\frac{1}{2} \Phi^{\prime\prime} \left(J^{(\widehat{\varsigma}, \,\phi_i, \, \epsilon)}_T \right) \, 
 \left(Y_T^{(\widehat{\varsigma}, \, \phi_i, \epsilon)} \right)^2   \right] }{ \epsilon} \, ,
\end{equation}
where the expectation is taken assuming the starting condition  
$\overline{X}^{(\widehat{\varsigma}+ \epsilon \, \phi_i)}_0  =\overline{X}^{(\widehat{\varsigma})}_0 =x$ and 
\begin{eqnarray*}
Y_T^{(\widehat{\varsigma}, \, \phi_i, \epsilon)} & = & \overline{X}^{(\widehat{\varsigma}+ \epsilon \, \phi_i)}_T-
\overline{X}^{(\widehat{\varsigma})}_T \, .
\end{eqnarray*}
We study the SDE satisfied by $Y_T^{(\widehat{\varsigma}, \, \phi_i, \epsilon)}$:
\begin{eqnarray*}
d Y_t^{(\widehat{\varsigma}, \, \phi_i, \epsilon)} &= & \overline{X}^{(\widehat{\varsigma}+ \epsilon \, \phi_i)}_t \left[
\widehat{\varsigma} \left(t, \, \overline{X}^{(\widehat{\varsigma}+ \epsilon \, \phi_i)}_t \right) + \epsilon
\phi_i \left(t, \, \overline{X}^{(\widehat{\varsigma}+ \epsilon \, \phi_i)}_t \right) \right]   \,dW_t \\
& & - \overline{X}^{(\widehat{\varsigma})}_t   \widehat{\varsigma} \left(t, \,\overline{X}^{(\widehat{\varsigma})}_t\right)
 \, dW_t \, ,
\end{eqnarray*}
with $Y_0^{(\widehat{\varsigma}, \, \phi_i, \epsilon)}=0$. The previous SDE can be written as
\begin{eqnarray*}
d Y_t^{(\widehat{\varsigma}, \, \phi_i, \epsilon)} &= & \epsilon \, 
\overline{X}^{(\widehat{\varsigma}+ \epsilon \, \phi_i)}_t \, 
\phi_i \left(t, \, \overline{X}^{(\widehat{\varsigma}+ \epsilon \, \phi_i)}_t \right) \, dW_t \\
& &+  \overline{X}^{(\widehat{\varsigma}+ \epsilon \, \phi_i)}_t \left[ 
\widehat{\varsigma} \left(t, \, \overline{X}^{(\widehat{\varsigma}+ \epsilon \, \phi_i)}_t\right) - 
\widehat{\varsigma} \left(t, \, \overline{X}^{(\widehat{\varsigma})}_t \right)   \right]  \,dW_t  \\
& & + \overline{Y}^{(\widehat{\varsigma}, \, \phi_i, \epsilon)}_t   
\widehat{\varsigma} \left(t, \,\overline{X}^{(\widehat{\varsigma})}_t\right) \, dW_t \, .
\end{eqnarray*}
Assumption  \ref{assumption-math} guarantees enough regularity on $\widehat{\varsigma}$ to expand the
second term. In particular, it exists
$$
\overline{J}^{(\widehat{\varsigma}, \,\phi_i, \, \epsilon)}_t  \in  \left]  \min \left( 
\overline{X}^{(\widehat{\varsigma}+ \epsilon \, \phi_i)}_t,
\, \overline{X}^{(\widehat{\varsigma})}_t\right), \, \max \left( \overline{X}^{(\widehat{\varsigma}+ \epsilon \, \phi_i)}_t,\, 
\overline{X}^{(\widehat{\varsigma})}_t \right) \right[ \, ,
$$
such that    
\begin{eqnarray*}
d Y_t^{(\widehat{\varsigma}, \, \phi_i, \epsilon)} &= & \epsilon \,  
\overline{X}^{(\widehat{\varsigma}+ \epsilon \, \phi_i)}_t \, 
\phi_i \left(t, \, \overline{X}^{(\widehat{\varsigma}+ \epsilon \, \phi_i)}_t \right) \, dW_t \\
& & + \overline{Y}^{(\widehat{\varsigma}, \, \phi_i, \epsilon)}_t   
\left[ \widehat{\varsigma} \left(t, \,\overline{X}^{(\widehat{\varsigma})}_t\right) + 
\overline{X}^{(\widehat{\varsigma}+ \epsilon \, \phi_i)}_t  \, 
\frac{\partial \widehat{\varsigma}}{\partial x} \left(t, \, \overline{J}^{(\widehat{\varsigma}, \, \phi_i, \epsilon)}_t
\right) \right] \, dW_t \, .
\end{eqnarray*}
Moreover, it is possible to choose $\overline{J}^{(\widehat{\varsigma}, \,\phi_i, \, \epsilon)}$ to be path-continuous
 thanks to the regularities of $\widehat{\varsigma}$ and the path-continuity of $\overline{X}$. Then $\overline{J}^{(\widehat{\varsigma}, \,\phi_i, \, \epsilon)}$ is predictable.
 
 We define the following processes:
\begin{eqnarray*}
H_t & = &   \int_0^t  \overline{X}^{(\widehat{\varsigma}+ \epsilon \, \phi_i)}_s \, 
\phi_i \left(s, \, \overline{X}^{(\widehat{\varsigma}+ \epsilon \, \phi_i)}_s \right) \, dW_s \\
Z_t & = & \int_0^t \left[ \widehat{\varsigma} \left(s, \,\overline{X}^{(\widehat{\varsigma})}_s\right) + 
\overline{X}^{(\widehat{\varsigma}+ \epsilon \, \phi_i)}_s  \, 
\frac{\partial \widehat{\varsigma}}{\partial x} \left(s, \, \overline{J}^{(\widehat{\varsigma}, \, \phi_i, \epsilon)}_s
\right) \right] \, dW_s \,.
\end{eqnarray*}
We remark that $H$ is an adapted process and $Z$ is a continuous martingale, then we can apply the variation of constants method (see Protter \cite{bib:Protter} Theorem 53 section V.9) to prove that 
$$
Y_t^{(\widehat{\varsigma}, \, \phi_i, \epsilon)} = \epsilon \,  H_t + \epsilon \, \mathcal{E}(Z)_t \int_0^t \mathcal{E}(Z)_s^{-1}
\, \left( H_s \, dZ_s -H_s \, d[Z,\,Z]_s \right)   \, ,
$$ 
where $\mathcal{E}(Z)$ denotes the Doleans-Dade exponential of $Z$. In particular, we remark that $Y_t^{(\widehat{\varsigma}, \, \phi_i, \epsilon)} \propto \epsilon$, then we can replace 
$Y_t^{(\widehat{\varsigma}, \, \phi_i, \epsilon)}$ into equation (\ref{tech-eq-lemma-1}). Thanks to 
Assumption  \ref{assumption-payoff}, the norm of the second term in equation (\ref{tech-eq-lemma-1}) is controlled by $c \, \epsilon^2$, where $c$ is a constant. 
Finally, thanks to the continuity of all processes, the difference quotient converge when $\epsilon$ goes to zero, then the existence of the limit
(\ref{rel:gateaux-1}) is proved.

To prove the existence of the Gateaux derivative of $\Delta(\widehat{\varsigma}, \,S_t, \,t)$ 
defined by the limit
(\ref{rel:gateaux-2}), we have to study the SDE verified by the first derivative of 
$\Phi(X_t^{(\widehat{\varsigma})})$ with respect to the initial condition $x$. This SDE, the existence and 
the uniqueness of its solution and the closed forms of the solution can be proved adapting the 
proof of Theorem 39 section V.7 in  \cite{bib:Protter} and is omitted. 
Then, the proof of the existence of the limit  (\ref{rel:gateaux-2}) and 
of the second order Gateaux derivatives is similar to 
the previous one. 

\end{proof}

We remark moreover that the integrability of $\Delta$ w.r.t. the asset $S$ and, therefore, the fact that the gain process 
$\displaystyle \int_0^T \Delta (\widehat{\varsigma}, S_t, t) dS_t$ is a $\mathbb{Q}_1$-martingale is guaranteed by    
the existence and the uniqueness of the SDE solution verified by the  first derivative of 
$\Phi(X_t^{(\widehat{\varsigma})})$ with respect to the initial condition $x$ combined with the boundedness of the same first
derivative, see hypothesis  \ref{assumption-payoff}.

The previous result has a direct consequence:
\begin{lemmas}[Gateaux derivatives as continuous maps]\label{Lemma-Gateaux-2}\hfill
\vspace{0.2cm}

The Gateaux derivatives $\displaystyle \frac{\partial C}{ \partial \sigma(\phi_i) }$,  
$\displaystyle \frac{\partial \Delta}{ \partial \sigma(\phi_i) }$,  
$\displaystyle \frac{\partial^2 C}{ \partial \sigma(\phi_i) \partial \sigma(\phi_j)}$ 
and  $\displaystyle \frac{\partial^2 \Delta}{ \partial \sigma(\phi_i) \partial \sigma(\phi_j)}$ as function of estimated volatility 
are continuous maps from the metric space $(L^2_{\sigma}, d^2_{\sigma})$ into the space of squared integrable function 
at the point $\widehat{\varsigma}$.
\end{lemmas}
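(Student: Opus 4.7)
The plan is to leverage the explicit representations of the Gateaux derivatives already produced in the proof of Lemma \ref{Lemma-Gateaux} and combine them with classical stability results for stochastic differential equations. Recall that there we showed $Y_t^{(\widehat{\varsigma},\phi_i,\epsilon)}/\epsilon$ converges, as $\epsilon \to 0$, to a process $D_t^{(\widehat{\varsigma},\phi_i)}$ solving a linear SDE whose coefficients depend on $\widehat{\varsigma}$, $\phi_i$ and on the reference diffusion $\overline{X}^{(\widehat{\varsigma})}$; this yields
\[
\frac{\partial C}{\partial \sigma(\phi_i)}(\widehat{\varsigma},x,0) = \mathbb{E}^{\mathbb{Q}_1}\bigl[\Phi'(\overline{X}_T^{(\widehat{\varsigma})})\, D_T^{(\widehat{\varsigma},\phi_i)}\bigr],
\]
and similar representations for $\partial \Delta/\partial\sigma(\phi_i)$ and for the second-order Gateaux derivatives in terms of first and second derivative flows. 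Continuity of the map $\widehat{\varsigma} \mapsto \frac{\partial C}{\partial\sigma(\phi_i)}(\widehat{\varsigma},\cdot)$ in $L^2$ will then follow from continuity, in $L^2$, of each building block $\overline{X}^{(\widehat{\varsigma})}$, $\partial_x \overline{X}^{(\widehat{\varsigma})}$ and $D^{(\widehat{\varsigma},\phi_i)}$ with respect to $\widehat{\varsigma}\in(L^2_\sigma,d^2_\sigma)$.

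The first step is to take a sequence $\widehat{\varsigma}_n \to \widehat{\varsigma}$ in $d^2_\sigma$; the uniform Lipschitz and $L^2$ bounds on the basis $\{\phi_i\}$ imposed in Assumption \ref{assumption-math} transfer to the volatility functions themselves, so that $\widehat{\varsigma}_n \to \widehat{\varsigma}$ uniformly on bounded sets in $x$ together with their $x$-derivatives. Classical stability of SDEs under Lipschitz coefficient perturbations (Gronwall combined with the Burkholder--Davis--Gundy inequality applied to $\overline{X}^{(\widehat{\varsigma}_n)}-\overline{X}^{(\widehat{\varsigma})}$) then yields
\[
\mathbb{E}^{\mathbb{Q}_1}\Bigl[\sup_{t\le T} \bigl|\overline{X}^{(\widehat{\varsigma}_n)}_t - \overline{X}^{(\widehat{\varsigma})}_t\bigr|^2\Bigr] \xrightarrow[n\to\infty]{} 0.
\]
The second step propagates this convergence to the derivative processes: differentiating formally the SDE in $x$ and in the direction $\phi_i$ produces linear SDEs with coefficients that are continuous in $\widehat{\varsigma}$ and in $\overline{X}^{(\widehat{\varsigma})}$, so a second Gronwall-type argument applied to the difference of the derivative processes, together with the lower bound $\widehat{\varsigma}>\xi>0$ (which prevents any degeneracy), gives $L^2$-convergence of $D^{(\widehat{\varsigma}_n,\phi_i)}$ and of the higher-order flows to their limits.

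Finally, Assumption \ref{assumption-payoff} on boundedness of $\Phi'$ and $\Phi''$ allows one to pass to the limit inside the expectations: Cauchy--Schwarz reduces the problem to controlling products of terms each of which converges in $L^2$ and is uniformly bounded in $L^p$ for some $p>2$ (by standard moment estimates for SDEs with Lipschitz coefficients), so dominated convergence finishes the argument for $\partial C/\partial\sigma(\phi_i)$. For $\partial\Delta/\partial\sigma(\phi_i)$ evaluated at $(S_t,t)$ the same argument is carried out pointwise in $(t,x)$, and then squared-integrability of the resulting map in $(t,S_t)$ follows from the bounds on $\Phi'$. The second-order Gateaux derivatives are handled analogously, but invoke one additional differentiation of the flow.

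The main obstacle I anticipate is the propagation of convergence to the derivative flows: they satisfy linear SDEs whose coefficients involve $\partial \widehat{\varsigma}/\partial x$ evaluated at $\overline{X}^{(\widehat{\varsigma})}$, hence continuity in $\widehat{\varsigma}$ requires careful use of the uniform Lipschitz bound on the basis $\{\phi_i\}$ together with uniform $L^p$ estimates on the derivative flows to justify dominated convergence. Once this step is secured, the rest reduces to repeated application of Cauchy--Schwarz and dominated convergence, which are routine in the regularity framework set by Assumptions \ref{assumption-math} and \ref{assumption-payoff}.
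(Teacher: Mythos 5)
Your proposal follows essentially the same route as the paper: the paper's own argument for this lemma is reduced to the remark that, since $L^2_{\sigma}$ is open and the target space Hausdorff, it suffices to show that the Gateaux derivatives evaluated at $b(t,x)$ converge to those at $\widehat{\varsigma}$ as $b\to\widehat{\varsigma}$, and that this "can be proved adapting the proof of Lemma \ref{Lemma-Gateaux}", with all details omitted. Your Gronwall/Burkholder--Davis--Gundy stability estimates for $\overline{X}^{(\widehat{\varsigma}_n)}$ and its derivative flows, followed by Cauchy--Schwarz and dominated convergence inside the expectation representations inherited from Lemma \ref{Lemma-Gateaux}, is precisely the natural way to carry out that adaptation, and in fact supplies considerably more detail than the paper itself does.
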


To prove this lemma, we remark that $L^2_{\sigma}$ is an open set of the space of functions spanning 
by $\{\phi_i\}_{i\in \mathbb{N}}$ equipped with the distance induced by the sequence representation. 
The target space is a Hausdorff space, then we have only to prove 
that the limits of our Gateaux derivatives depending on the function $b(t,x)$ when $b(t,x)$ approaches $\widehat{\varsigma}$  
are the same Gateaux derivatives evaluated on  $\widehat{\varsigma}$. This result can be proved adapting the proof of lemma 
\ref{Lemma-Gateaux} and is omitted.

The key remark about P\&L equation   
(\ref{equation:P-and-L}) is that the price $C$ and the portfolio $\Delta$ depend on the 
 volatility process $\widehat{\varsigma}$ estimated by the trader. This estimated volatility is generally is different from 
market volatility  $\sigma$. Moreover, the estimated volatility is just the result of an estimation rule applied to the data set.
The profit and lost process suffers from the uncertainty related to this estimation, since the estimated volatility 
$\widehat{\varsigma}$ must be replaced by the estimator $\varsigma$.
We concentrate our attention on the law of the P\&L at maturity. In the absence of 
parameter uncertainty, 
the random variable $P\& L(T)$ is equal to zero almost surely and the option can be hedged
exactly. However, the exact hedging strategy cannot be implemented since the option seller 
does not know the parameters values $a_i$ but only the estimated $\widehat{A}_i$. Therefore, $P\&L(T)$ is a random variable and 
we have the following remark.  

\begin{remarque}[Random sources]\hfill
\vspace{0.2cm}

The value of the profit and loss process at maturity is a random variable depending on two 
random sources:

\begin{enumerate}
    \item the stochastic "true" model $(\Omega_1, \, \mathcal{F}^1,\, \mathbb{P}_1)$ since the trader 
    cannot use the correct hedging portfolio.
    \item  the space $(\widetilde{\Omega}, \, \widetilde{\mathcal{F}},\, \widetilde{\mathbb{P}})$, i.e.  the stochastic process $\varsigma$, that depends
    on a random component unrelated to the Brownian motion $B_t$.
\end{enumerate}
\end{remarque}
As a consequence $P\& L(T)$ is a random variable in the product space $(\Omega, \mathcal{F}, \mathbb{P})$.

\begin{remarque}[Role of the historical probability]\hfill
\vspace{0.2cm}

The profit and loss process must be studied under the historical probability
$\mathbb{P} = \mathbb{P}_1 \times \widetilde{\mathbb{P}}$. As a matter of fact, the risk neutral probability 
$\mathbb{Q} =\mathbb{Q}_1  \times \widetilde{\mathbb{P}}$ can be used if and 
only if all contingent claims are attainable. In our case, the option
seller does not know the actual diffusion coefficients of the underlying, so the law of $P\&L$ 
is not degenerate. The main impact is that the drift of the SDE (\ref{SDE-SV}) plays a role and 
that the second term in the $P\&L$ process (\ref{equation:P-and-L}) is not a martingale. 
This fact complicates the computations in this paper. The important role of the drift $\mu$ in 
asset pricing when the market is incomplete has already been emphasized in the literature, 
we mention for instance Karatzas et al. \cite{bib:Karatzas} and Lyons \cite{bib:Lyons}. 
\end{remarque}

The two previous remarks show that the computation of the law of the $P\&L$ process is not
immediate. Moreover, the essential ingredient needed to determine the law of the $P\&L$ 
is the law of $\varsigma$, and the law of this process is difficult to determine since it depends 
on the calibration methodology. We assumed in Assumption \ref{assumption-math}, that the 
option seller can estimate the mean value of the parameters and their variance, i.e. the two first
moments of the law of $\varsigma$. 

Given the knowledge of the two first moments of the law of $\varsigma$, we can at best 
estimate the two first moments of the law of $P\&L(T)$. This remark justifies our use of  the 
error theory using Dirichlet forms, see section \ref{sec:error-theory}. As a consequence,
we can define accurately the law of $P\&L(T)$ with respect to the random source on the 
probability space $\mathbb{P}_1$ but  we can just estimate the two first orders of the 
dependency with respect to $\widetilde{\mathbb{P}}$.
In order to analyze the law of  the $P\&L$ process, it is sufficient to
study the $\mathbb{P}_1$-expectation on a class of regular test functions $h(P \& L(T))$ and
the uncertainty on them using error theory. In practice we will compute the bias and the 
variance of $\mathbb{E}_1[h(P\&L(T))]$, where $\mathbb{E}_1$ denotes the expectation under 
the historical probability $\mathbb{P}_1$.  As a consequence of the dependence on $\varsigma$,
$\mathbb{E}_1[h(P\&L(T))]$ is a random variable on the probability space 
$(\widetilde{\Omega}, \, \widetilde{\mathcal{F}},\, \widetilde{\mathbb{P}})$.

For sake of simplicity, we cut the dependency of $P\& L$ on $T$ throughout the rest of the paper.

 \begin{teorema}[Approximate law of the profit and loss process]\label{theorem-main}\hfill
 \vspace{0.2cm}

 Under assumptions  \ref{assumption-fin}, \ref{assumption-math} and  \ref{assumption-payoff} 
 and for all test functions $h$ belonging to $C^2$ with bounded derivatives,  
 we have the following bias and variance:

\begin{eqnarray}
\mathcal{A}[\mathbb{E}_1[h(P \& L)]]  (\sigma, \widehat{\varsigma})
& = &  \sum_i  \Lambda^{(1)}_i (\sigma,\, \widehat{\varsigma}) \,
\mathcal{A}[A_i](\widehat{A}_i)   \label{A-P-and-L} \\
& & +  \frac{1}{2} \sum_{i,j} \Lambda^{(2)}_{i,j} (\sigma, \,\widehat{\varsigma}) \,   
\Gamma[A_i, A_j](\widehat{A}_i, \widehat{A}_j) \nonumber  \\
    \Gamma \left[\mathbb{E}_1\left[h\left( P\&L  \right)\right]
    \right] (\sigma, \widehat{\varsigma})  & = &   \sum_{i,\, j} \Psi_{i} (\sigma, \, \widehat{\varsigma}) \, 
    \Psi_{j} (\sigma, \, \widehat{\varsigma}) \, \Gamma[A_i, A_j](\widehat{A}_i, \widehat{A}_j) \label{gamma-P-and-L},
\end{eqnarray}
where
\begin{eqnarray*}
\Lambda^{(1)}_i(\sigma,\, \widehat{\varsigma}) & = & \mathbb{E}_1\left[ h^{\prime}(P\& L(T) ) \,\left\{ 
\frac{\partial C}{\partial \sigma(\phi_i)} (\widehat{\varsigma}, x,\, 0)  + 
\int_0^T   \frac{\partial \Delta}{\partial  \sigma(\phi_i)} (\widehat{\varsigma}, S_t, t)  \, dS_t  
  \right\}   \right] \\
\Lambda^{(2)}_i(\sigma,\, \widehat{\varsigma})  & =  &  
\displaystyle    \mathbb{E}_1 \left[ h^{\prime}(P\& L)   
\left\{  \frac{\partial^2 C}{\partial \sigma(\phi_i) \partial \sigma(\phi_j) }  (\widehat{\varsigma}, x, 0)    +
\int_0^T  \frac{\partial^2 \Delta}{\partial \sigma(\phi_i) \partial \sigma(\phi_j)   
}  (\widehat{\varsigma}, S_t,\, t) \; dS_t \right\} \right] \\ 
   & & +  \displaystyle 
 \mathbb{E}_1 \left[ h^{\prime \prime}(P\& L) \; 
 \left\{ \frac{\partial C}{\partial \sigma(\phi_i)}(\widehat{\varsigma},\, x, \, 0) + 
  \int_0^T  \frac{\partial \Delta}{\partial \sigma(\phi_i)}(\widehat{\varsigma}, \, S_s, \,s)  \,  
  dS_s         \right\}  \right. \times  \\
  & &\quad \quad  \quad  \times \displaystyle \left.  \left\{ \frac{\partial C}{\partial \sigma(\phi_j)}
  (\widehat{\varsigma},\, x, \, 0) + 
  \int_0^T  \frac{\partial \Delta}{\partial \sigma(\phi_j)}(\widehat{\varsigma}, \, S_s, \,s)  \,  
  dS_s         \right\}     \right]  \\
\Psi_i(\sigma,\, \widehat{\varsigma})  & = &  \mathbb{E}_1 \left[ h^{\prime}(P\& L)
\, \left\{\frac{\partial C}{\partial \sigma (\phi_i)}(\widehat{\varsigma}, x,
    0)  + \int_0^T  \frac{\partial \Delta}{\partial \sigma(\phi_i)}(\widehat{\varsigma}, \, S_s, s) \, dS_s  
    \right\}  \right]  
\end{eqnarray*}

\end{teorema}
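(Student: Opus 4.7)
The plan is to view $\mathbb{E}_1[h(P\&L)]$ as a function $F(A_1,A_2,\ldots)$ of the random coefficients of the volatility expansion, and then apply the chain rules for the sharp and bias operators from Section \ref{sec:error-theory}. The variance formula will be obtained via the sharp operator (because of its linearity, this is the path of least resistance for $\Gamma$), and the bias formula will come from (\ref{bias-chain-rule}) which already contains one first-order and one second-order term, matching the structure of (\ref{A-P-and-L}).

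First, I would freeze the Brownian path and differentiate. Since the estimators $A_i$ are independent of $\mathcal{F}^1_T$, the sharp operator acts only on the $\widetilde{\Omega}$-factor and therefore commutes with $\mathbb{E}_1$; moreover the closedness of $(\,)^\#$ allows exchange with the stochastic integral (argument as in Bouleau \cite{bib:Bouleau-erreur}, Section VI.2: approximate $\int_0^T\Delta\,dS_t$ by Riemann sums, apply the chain rule term by term, then pass to the limit). Applied to $F(A_1,A_2,\ldots)=\mathbb{E}_1[h(P\&L)]$, this gives
\begin{equation*}
\bigl(\mathbb{E}_1[h(P\&L)]\bigr)^{\#}=\sum_i \Psi_i(\sigma,\widehat{\varsigma})\,A_i^{\#},
\end{equation*}
where $\Psi_i$ arises as the Gateaux derivative of $P\&L$ in the direction $\phi_i$, whose existence is guaranteed by Lemma \ref{Lemma-Gateaux} (and whose continuity in $\widehat{\varsigma}$ by Lemma \ref{Lemma-Gateaux-2}). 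Squaring and taking $\overline{\mathbb{E}}$, the defining property $\Gamma[U]=\overline{\mathbb{E}}[(U^\#)^2]$ of Proposition \ref{prop:sharp} yields (\ref{gamma-P-and-L}).

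For the bias, I would apply the second-order chain rule (\ref{bias-chain-rule}) componentwise: for $F(A_1,A_2,\ldots)$ of class $C^2$ in the $A_i$ (which follows from Lemma \ref{Lemma-Gateaux} extended to second order, together with the bounded derivatives of $h$ and $\Phi$),
\begin{equation*}
\mathcal{A}[F(A)]=\sum_i \frac{\partial F}{\partial A_i}\,\mathcal{A}[A_i]+\frac12\sum_{i,j}\frac{\partial^2 F}{\partial A_i\partial A_j}\,\Gamma[A_i,A_j].
\end{equation*}
Identifying $\partial F/\partial A_i$ with the Gateaux derivative in the direction $\phi_i$ produces $\Lambda^{(1)}_i=\Psi_i$; differentiating once more (and using bounded $h''$ to justify interchange with $\mathbb{E}_1$) produces the two-term expression for $\Lambda^{(2)}_{i,j}$, one coming from $h''$ times the product of first-order derivatives of $P\&L$ and the other from $h'$ times the mixed second-order Gateaux derivative (also furnished by Lemma \ref{Lemma-Gateaux}).

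The main obstacle will be controlling the interchange of the operators $(\,)^\#$, $\mathcal{A}$, $\mathbb{E}_1$ and the stochastic integral in the P\&L, together with the truncation issue attached to the infinite sums over $i$. The interchange with $\mathbb{E}_1$ relies on the independence of $\mathcal{F}^1_T$ and $\widetilde{\mathcal{F}}$; the interchange with $\int_0^T\cdot\,dS_t$ uses closedness of $(\,)^\#$ and $\mathcal{A}$ in their graph norms plus the $\mathbb{Q}_1^{\widehat{\varsigma}}$-martingale property of $\int_0^T\Delta\,dS_t$ (bounded $\Delta$ from assumption \ref{assumption-payoff}). The infinite sum is tamed by the last hypothesis in assumption \ref{assumption-math}: under property 4.2 of \cite{bib:Bouleau-erreur}, only a finite (though random) number of $A_i$ are genuinely random, so only finitely many terms contribute and no delicate summability argument is needed beyond that already built into the product error structure.
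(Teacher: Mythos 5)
Your proposal is correct and follows essentially the same route as the paper: the variance via the sharp operator (linearity, closedness to exchange with the stochastic integral, independence of the $A_i^{\#}$ from $\Omega_1$ to pull them out of $\mathbb{E}_1$, then $\Gamma[U]=\overline{\mathbb{E}}[(U^{\#})^2]$), and the bias via the second-order chain rule together with the Gateaux derivatives of Lemma \ref{Lemma-Gateaux}. The only cosmetic difference is that you apply the chain rule once to the full composite $F(A)=\mathbb{E}_1[h(P\&L)]$, whereas the paper applies it in two stages (first to $h$, yielding $\mathbb{E}_1[h'\,\mathcal{A}[P\&L]+\tfrac12 h''\,\Gamma[P\&L]]$, then to $P\&L$ as a function of the $A_i$); the two organizations produce the same $\Lambda^{(1)}_i$, $\Lambda^{(2)}_{i,j}$ and $\Psi_i$.
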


\begin{proof}
For sake of simplicity, we neglect the dependency in $T$ of  $P\& L$. 
The proof is split into two parts, we first prove the relation (\ref{gamma-P-and-L}) and 
then (\ref{A-P-and-L}).
Thanks to Assumption \ref{assumption-math}, the operator $\Gamma$ admits a sharp operator. 
In particular,  it exists a sharp operator denoted by $(\cdot)_i^{\#}$ on each sub-error structure associated to each 
random variable $A_i$ and the sharp operator $(\cdot)^{\#}$ on the global error structure is the sum of the sharp 
operators in each sub-error structure.

We study the sharp of $\mathbb{E}_1\left[h\left( P\&L \right)\right]$. Thanks to the linearity of the 
sharp operator and the smoothness of the test function $h$ we have
\begin{equation}\label{eqn:proof-gamma}
\begin{array}{rcl}
 \displaystyle \left(\mathbb{E}_1\left[h\left( P\&L  \right)\right]\right)^{\#} & = & 
  \displaystyle   \mathbb{E}_1\left[h'\left(P\&L \right) \; (P\&L)^{\#} \right]  \\
    & =  & \displaystyle \mathbb{E}_1\left[h'\left(P\&L \right) \;
    \left( C(\varsigma, \, x, \, 0) + \int_0^T \Delta(\varsigma, \, S_t, \, t) dS_t - \Phi(S_T) \right)^{\#}\right] \, .
    \end{array}
\end{equation}
Using the linearity of the sharp, the term into braces can be rewritten as
$$
 \left(C(\varsigma, \, x, \, 0)  \right)^{\#}  + \left( \int_0^T \Delta(\varsigma, \, S_t, \, t) dS_t  \right)^{\#}  - \left( \Phi(S_T) \right)^{\#} 
$$
 are  we remark that $(\Phi(S_T))^{\#}=0$ since the payoff is independent on the volatility 
estimated by the option seller. 
The integral is a linear operator defined by a $L^2$-limit, we proceed by approximation replacing the integral by a sum.
We recall that the "true" diffusion $S$ does not suffer from the uncertainty on estimated volatility. 
Using the linearity and the closedness of the sharp operator, we can then write
$$
\left( \int_0^T \Delta(\varsigma, \, S_t, \, t) \, dS_t\right)^{\#} =  \int_0^T \left(\Delta(\varsigma, \, S_t, \, t) \right)^{\#} \,dS_t,
$$
where we have used the first point of Assumption \ref{assumption-fin}, see proposition V.8 page 83 in 
\cite{bib:Bouleau-erreur} for a more detailed analysis.
Thanks to the expansion of the volatility (see Assumption \ref{assumption-math}), the linearity of 
the sharp operator (see Proposition \ref{prop:sharp}), we have     
\begin{equation}\label{eqn:proof-sharp}
\begin{array}{rcl}
\displaystyle C(\varsigma,\, x, \, 0)^{\#} &= & \displaystyle  \sum_i  \frac{\partial C}{\partial \sigma(\phi_i)} 
(\varsigma, x,\, 0)  \; A_i^{\#} \\
\displaystyle  \left(\int_0^T \Delta(\varsigma, \, S_t, \, t) \, dS_t\right)^{\#} &= &
\displaystyle \sum_i A_i^{\#} \; \int_0^T
 \frac{\partial \Delta}{\partial \sigma(\phi_i)}(\varsigma, \, S_t, \, t) \;  dS_t , 
\end{array}
\end{equation}
We  multiply by $h^{\prime}(P\& L)$ and we take the $\mathbb{P}_1$-expectation 
of the two right sides. Thanks to Assumption \ref{assumption-math}, $A_i^{\#}$ is defined on a probability space $(\widetilde{\Omega} \times \overline{\Omega})$ distinct from $(\Omega_1, \, \mathcal{F}^1, \, \mathbb{P}_1)$, then we have
\begin{eqnarray*}
\mathbb{E}\left[ h^{\prime}(P\& L) \, C(\varsigma,\, x, \, 0)^{\#} \right] &= & 
\sum_i  \mathbb{E}\left[ h^{\prime}(P\& L) \,   \frac{\partial C}{\partial \sigma(\phi_i)} 
(\varsigma, x,\, 0) \right]  \; A_i^{\#} \\
\mathbb{E}\left[ h^{\prime}(P\& L) \,  \left(\int_0^T \Delta(\varsigma, \, S_t, \, t) \, dS_t\right)^{\#} \right] 
&= & \sum_i \mathbb{E}\left[ h^{\prime}(P\& L)  \int_0^T 
\frac{\partial \Delta}{\partial \sigma(\phi_i)}(\varsigma, \, S_t, \, t) \, dS_t  \right] \, A_i^{\#}  \, 
\end{eqnarray*}
then
$$
\left(\mathbb{E}_1\left[h\left( P\&L  \right)\right]\right)^{\#} =  \sum_i \Psi_i(\sigma,\, \varsigma) A_i^{\#}\, .
$$
We conclude the proof on the relation for the quadratic error 
$\Gamma\left[ \; \mathbb{E}\left[h\left( P\&L \right)\right]\;  \right]$  using  the first property of the sharp operator 
(see Proposition \ref{prop:sharp}), i.e.
\begin{eqnarray*}
\Gamma\left[ \; \mathbb{E}_1\left[h\left( P\&L \right)\right]\;  \right] & = & \overline{\mathbb{E}}\left[
\left\{\left(\mathbb{E}_1\left[h\left( P\&L  \right)\right]\right)^{\#}\right\}^2 \right] = \sum_{i,j} 
\Psi_i(\sigma,\, \varsigma) \Psi_j(\sigma,\, \varsigma) \overline{\mathbb{E}}\left[
A_i^{\#} A_j^{\#}\right]  \\ 
& = &  \sum_{i,j} 
\Psi_i(\sigma,\, \varsigma) \Psi_j(\sigma,\, \varsigma) \Gamma[A_i, A_j] 
\end{eqnarray*}
and we find finally the relation (\ref{gamma-P-and-L}) evaluating the last random variable 
on the particular realization corresponding to observed data.

The study of the bias is more complex, we first apply the closedness of  the bias operator and the
chain rule (\ref{bias-chain-rule}), then we find
\begin{equation}\label{eqn:proof-bias}
 \mathcal{A} \left[\mathbb{E}_1\left[h(P \& L)\right]\right] = 
 \mathbb{E}_1\left[ \mathcal{A}[h(P \& L)]\right] =
  \mathbb{E}_1 \left[ h'(P \& L) \, \mathcal{A}\left[ P \& L  \right] + \frac{1}{2} h^{\prime
    \prime}(P \& L)  \, \Gamma \left[P \& L\right]\right] ,
  \end{equation}
We study the two terms separately. 
We first consider the term depending on the variance. 
Thanks to the previous analysis, in particular relations (\ref{eqn:proof-sharp}), and the properties of  the sharp operator,  we have 
\begin{eqnarray*}
\Gamma  \left[P \& L \right] = \sum_{i,j}   \left\{ \frac{\partial C}{\partial \sigma(\phi_i)} 
(\varsigma, x,\, 0)   + \int_0^T
 \frac{\partial \Delta}{\partial \sigma(\phi_i)}(\varsigma, \, S_t, \, t) \;  dS_t
\right\}^2 \; \Gamma[A_i,A_j] \, .
 \end{eqnarray*}
We multiply by $h^{\prime \prime}(P\& L)$, we take the expectation 
under $\mathbb{P}_1$ and we find the third term in (\ref{A-P-and-L}).
We now study $\mathcal{A}[P\&L]$, we apply the linearity of the bias operator and we find
$$
\mathcal{A}[P\&L] = \mathcal{A}\left[C(\varsigma,\, x, \,0)\right] + \mathcal{A}\left[ \int_0^T \Delta(\varsigma,\, S_t, \,t)\, dS_t   \right] -  \mathcal{A}\left[ \Phi(S_T)\right].
$$
The last term is worth zero, since the final payoff is completely defined by $S_T$ and does 
not depend on the volatility estimated by the option seller, see Assumptions 
\ref{assumption-fin} and \ref{assumption-payoff}. 
Thanks to the same assumptions and the closedness of the bias operator $\mathcal{A}$, 
we have
$$
\mathcal{A}\left[ \int_0^T \Delta(\varsigma,\, S_t, \,t)\, dS_t   \right] =  \int_0^T 
\mathcal{A}\left[\Delta(\varsigma,\, S_t, \,t) \right]\, dS_t.
$$
Thanks to the same argument used for the study of the variance, we can take the 
Gateaux-derivatives of $C$ and $\Delta$ with respect to a variation of the 
volatility along the component $\phi_i$ and using the chain rule 
(\ref{bias-chain-rule}), we have  
\begin{equation*}
\begin{array}{rcl}
\displaystyle \mathcal{A}\left[C(\varsigma,\, x, \,0)\right]   & = & \displaystyle
 \sum_i  \frac{\partial C}{\partial \sigma(\phi_i)} 
(\varsigma, x,\, 0)  \; \mathcal{A}[A_i] \,  \\
& & \displaystyle + \frac{1}{2}  \sum_{i,j} \frac{\partial^2 C}{\partial \sigma(\phi_i) \partial \sigma(\phi_j)  
}  (\varsigma, x,\, 0) \;  \Gamma[A_i,A_j] \\
\displaystyle  \int_0^T  \mathcal{A}\left[\Delta(\varsigma,\, S_t, \,t) \right]\, dS_t  
& = & \displaystyle \sum_i \left[ \int_0^T \frac{\partial \Delta}{\partial \sigma(\phi_i)} 
(\varsigma, S_t,\, t) \;  dS_t \right]\;  \mathcal{A}[A_i]   \\
 &  &  \displaystyle  + \frac{1}{2}  \sum_{i,j}  \left[\int_0^T 
\frac{\partial^2 \Delta}{\partial  \sigma(\phi_i) \partial  \sigma(\phi_j) 
}  (\varsigma, S_t,\, t) \;  dS_t \right] \;  \Gamma[A_i,A_j]  \, . 
\end{array}
\end{equation*}
We multiply by $h^{\prime}(P\& L)$, we take the $\mathbb{P}_1$-expectation and we find the two first terms in
(\ref{A-P-and-L}) evaluating this random variable
on the particular realization corresponding to observed data.

thanks to the continuity at the point $(\widehat{A}_1, \ldots
\widehat{A}_k, \ldots)$.

\end{proof}

We remark that the approximate law of the profit and loss process depends both on the true 
volatility $\sigma(t,\, x)$ and on the estimated one $\widehat{\varsigma}(t,\, x)$.
Two natural problems arises. First of all, the true volatility $\sigma$ is unknown. Moreover, 
the volatility estimator $\varsigma$ is a random variable, hence the bias and the variance 
of  $\mathbb{E}_1[h(P\& L)]$, given by Theorem \ref{theorem-main}, are conditional
moments known given the expected volatility $\widehat{\varsigma}$. It is important to remark that   
$\mathbb{E}_1[h(P\& L)]$ remains a random variable on the probability space $\widetilde{\Omega}$.
We explicit this fact indicating the dependency on $\varsigma$ throughout the rest of the paper.  

In a particular case, we can partially overcome the problem of the dependence on $\sigma$.
We have the following corollary 

 \begin{corollario}[Bias and Variance of the expected value 
 of $P\& L$]\label{theorem-corollary}\hfill
 \vspace{0.2cm}

 Under assumptions  \ref{assumption-fin}, \ref{assumption-math} and  \ref{assumption-payoff},
 we have the following bias and variance:

\begin{eqnarray}
\mathcal{A}[\mathbb{E}_1[P \& L(T)]](\sigma, \widehat{\varsigma})  & = &  \sum_i \overline{\Lambda}^{(1)}_i 
(\sigma,\, \widehat{\varsigma}) \, \mathcal{A}[A_i](\widehat{A}_i)    \label{A-P-and-L-2} \\
& & + \frac{1}{2} \sum_{i,j} \overline{\Lambda}^{(2)}_{i,j}
(\sigma, \,\widehat{\varsigma}) \,   \Gamma[A_i,A_j](\widehat{A}_i, \widehat{A}_j) \nonumber \\
    \Gamma \left[\mathbb{E}_1\left[ P\&L\right] \right](\sigma, \widehat{\varsigma}) & = &   
    \sum_{i,j}  \overline{\Psi}_{i} (\sigma, \, \widehat{\varsigma})  \overline{\Psi}_{j} (\sigma, \, \widehat{\varsigma}) 
    \, \Gamma[A_i,A_j] (\widehat{A}_i, \widehat{A}_j)  \label{gamma-P-and-L-2},
\end{eqnarray}
where
\begin{eqnarray*}
\overline{\Lambda}^{(1)}_i(\sigma,\, \widehat{\varsigma}) & = & 
\frac{\partial C}{\partial \sigma(\phi_i)} (\widehat{\varsigma},\, x,\, 0)  + \mu
\int_0^T  \mathbb{E}\left[ \frac{\partial \Delta}{\partial  \sigma(\phi_i)} 
(\widehat{\varsigma}, \, S_t,\, t)  \, S_t \right] \, dt  \\
\overline{\Lambda}^{(2)}_{i,j}(\sigma,\, \widehat{\varsigma})  & =  &  
\displaystyle    \frac{\partial^2 C}{\partial \sigma(\phi_i) \partial \sigma(\phi_j) }  (\widehat{\varsigma}, \, x,\, 0)
 \, \mu \int_0^T \mathbb{E} \left[  
\frac{\partial^2 \Delta}{\partial [\sigma(\phi_i)]^2}  (\widehat{\varsigma}, \, S_t,\, t) \, S_t  \right] \, dt \\ 
\overline{\Psi}_i(\sigma,\, \widehat{\varsigma})  & = & 
\frac{\partial C}{\partial \sigma (\phi_i)}(\widehat{\varsigma},\, x, \,  0) + \mu \int_0^T
\mathbb{E}_1 \left[   \frac{\partial \Delta}{\partial \sigma(\phi_i)}(\widehat{\varsigma}, \, S_t,\, t) \, S_t 
  \right] \, dt     \, .
\end{eqnarray*}

\end{corollario}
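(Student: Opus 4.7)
The plan is to specialize Theorem \ref{theorem-main} to the linear test function $h(x)=x$, which satisfies $h'\equiv 1$, $h''\equiv 0$ and trivially meets the $C^2$ and boundedness conditions. With this choice every occurrence of $h''(P\&L)$ in the expressions for $\Lambda^{(1)}_i$, $\Lambda^{(2)}_{i,j}$ and $\Psi_i$ vanishes, so the formulas reduce to plain $\mathbb{P}_1$-expectations of the bracketed terms involving $\partial C/\partial \sigma(\phi_i)$, $\partial \Delta/\partial \sigma(\phi_i)$ and their second Gateaux derivatives. Note that $C(\widehat{\varsigma},x,0)$ and $\partial^k C/\partial\sigma(\phi_i)\partial\sigma(\phi_j)(\widehat{\varsigma},x,0)$ are deterministic under $\mathbb{P}_1$, so they pull out of the expectations unchanged and immediately produce the first summands in $\overline{\Lambda}^{(1)}_i$, $\overline{\Lambda}^{(2)}_{i,j}$ and $\overline{\Psi}_i$.

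The substantive step is to simplify the expectation of the stochastic integrals. Using the dynamics (\ref{SDE-SV}) of $S$ under $\mathbb{P}_1$, I would decompose, for each Gateaux derivative $G_i(t) := \partial \Delta/\partial \sigma(\phi_i)(\widehat{\varsigma},S_t,t)$,
\begin{equation*}
\int_0^T G_i(t)\, dS_t \;=\; \mu \int_0^T G_i(t)\,S_t\, dt \;+\; \int_0^T G_i(t)\,S_t\,\sigma(t,S_t)\, dB_t,
\end{equation*}
and argue that the $dB$-integral is a true $\mathbb{P}_1$-martingale, so its expectation is zero. For the remaining Lebesgue integral a standard Fubini exchange yields
\begin{equation*}
\mathbb{E}_1\!\left[\int_0^T G_i(t)\, dS_t\right] \;=\; \mu \int_0^T \mathbb{E}_1\!\left[G_i(t)\,S_t\right] dt,
\end{equation*}
which is precisely the second term in $\overline{\Lambda}^{(1)}_i$ and $\overline{\Psi}_i$. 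The same manipulation applied to the second-order derivative $\partial^2 \Delta/\partial\sigma(\phi_i)\partial\sigma(\phi_j)$ produces the corresponding term in $\overline{\Lambda}^{(2)}_{i,j}$. Substituting these three simplified coefficients into (\ref{A-P-and-L}) and (\ref{gamma-P-and-L}) gives (\ref{A-P-and-L-2}) and (\ref{gamma-P-and-L-2}).

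The main obstacle is the integrability/martingality justification: one must check that $G_i(t)\,S_t\,\sigma(t,S_t)$ is in $\mathcal{H}^2$ under $\mathbb{P}_1$, and that the absolute value $|G_i(t)\,S_t|$ is $dt\otimes d\mathbb{P}_1$-integrable (and analogously for the second-order coefficient). Both controls follow from Lemma \ref{Lemma-Gateaux} together with Assumption \ref{assumption-payoff}: the Gateaux derivatives of $\Delta$ satisfy the SDE obtained by differentiating the flow, and the boundedness of $\Phi'$ and $\Phi''$ provides uniform $L^2$-bounds, so the variation-of-constants representation in the proof of Lemma \ref{Lemma-Gateaux} carries over to yield $\sup_{t\le T}\mathbb{E}_1[G_i(t)^2 S_t^2]<\infty$. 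Once this is in hand, both the martingale property and the Fubini exchange are automatic and the corollary drops out from Theorem \ref{theorem-main}.
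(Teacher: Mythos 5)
Your proposal is correct and follows essentially the same route as the paper: the corollary is obtained by specializing Theorem \ref{theorem-main} to $h(x)=x$ (so $h'\equiv 1$, $h''\equiv 0$) and then computing $\mathbb{E}_1$ of the stochastic integrals via the SDE (\ref{SDE-SV}), the $dB$-part vanishing by the martingale property. You in fact supply more detail than the paper (which omits the computation entirely), notably the integrability justification via Lemma \ref{Lemma-Gateaux}; note also that your derivation yields the expected sum of two terms and the mixed derivative $\partial^2\Delta/\partial\sigma(\phi_i)\partial\sigma(\phi_j)$ in $\overline{\Lambda}^{(2)}_{i,j}$, where the paper's displayed formula appears to contain typographical slips.
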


\begin{proof}
The proof is a direct consequence of the Theorem \ref{theorem-main}, where we have calculate
the expectation using the SDE (\ref{SDE-SV}), we omit the details. 
\end{proof}

In particular we remark that $\overline{\Lambda}^{(1)}_i(\sigma,\, \widehat{\varsigma})$, 
$\overline{\Lambda}^{(2)}_i(\sigma,\, \widehat{\varsigma})$ and $\overline{\Psi}_i(\sigma,\, \widehat{\varsigma})$
depend on $\sigma$ by means of the portfolio term. Moreover, assuming $\mu=0$, the 
dependence on $\sigma$ vanishes. In other words, the bias and the variance of the 
expected value of the option seller wealth are independent on the true volatility in a market 
without risk premium. 

In the general case, we can approximate the equations (\ref{A-P-and-L}) and 
(\ref{gamma-P-and-L}) using the known value $\widehat{\varsigma}$ instead of the unknown 
$\sigma$. Under 
this approximation, we have the following corollary.

\begin{corollario}[Approximated tails of the law of the $P\& L$]\hfill
 \vspace{0.2cm}

Under the hypotheses of Theorem \ref{theorem-main} and assuming the approximation 
$\sigma = \widehat{\varsigma}$, the random variable $ \mathbb{E}_1\left[h(P\&L)\right](\varsigma)$, defined on probability space
$(\widetilde{\Omega}, \widetilde{\mathcal{F}}, \widetilde{\mathbb{P}})$, has the following approximated tails:  

\begin{equation}\label{eq:cheb}
\widetilde{\mathbb{P}}\left[  \mathbb{E}_1\left[h(P\&L)\right] (\varsigma) - 
h(0)-   \mathcal{A} \left[  \mathbb{E}_1\left[h(P\&L)\right]\right](\widehat{\varsigma})   \geq k \,  \sqrt{
   \Gamma \left[\mathbb{E}_1\left[h\left( P\&L \right)\right]
    \right]   (\widehat{\varsigma}) }   \right] \leq \frac{1}{1+k^2} \, , 
\end{equation}
where 
\begin{eqnarray}
\mathcal{A}[\mathbb{E}_1[h(P \& L)]](\widehat{\varsigma}) & = &  \sum_i  \Lambda^{(1)}_i 
(\widehat{\varsigma},\, \widehat{\varsigma}) \,
\mathcal{A}[A_i] (\widehat{A}_i)  \label{A-P-and-L-approx} \\
& & +\frac{1}{2}   \sum_{i,j} \Lambda^{(2)}_{i,j} (\widehat{\varsigma}, \,\widehat{\varsigma}) \,   \Gamma[A_i,A_j] 
(\widehat{A}_i, \widehat{A}_j)  \nonumber \\
    \Gamma \left[\mathbb{E}_1\left[h\left( P\&L \right)\right]  \right]  (\widehat{\varsigma})  
    & = &   \sum_{i,j}  \Psi_{i} (\widehat{\varsigma}, \, \widehat{\varsigma})
    \Psi_{j} (\widehat{\varsigma}, \, \widehat{\varsigma}) \; \Gamma[A_i,A_j](\widehat{A}_i, \widehat{A}_j)
     \label{gamma-P-and-L-approx},
\end{eqnarray}
with
\begin{eqnarray*}
\Lambda^{(1)}_i(\widehat{\varsigma},\, \widehat{\varsigma}) & = &  h^{\prime}(0) \,\left\{ 
\frac{\partial C}{\partial \sigma(\phi_i)} (\widehat{\varsigma}, x,\, 0)  + \mu 
\int_0^T  \mathbb{E}_1\left[   X_t^{(\widehat{\varsigma})} \,
\frac{\partial \Delta}{\partial  \sigma(\phi_i)} (\widehat{\varsigma}, X^{(\widehat{\varsigma})}_t,\, t)
 \right]  dt \right\}   \\
\Lambda^{(2)}_{i,j} (\widehat{\varsigma},\, \widehat{\varsigma})  & =  &  
\displaystyle    h^{\prime}(0)   
\left\{  \frac{\partial^2 C}{\partial \sigma(\phi_i) \partial \sigma(\phi_j)}  (\widehat{\varsigma}, x,\, 0)    + \mu
\int_0^T  \mathbb{E}_1 \left[  X^{(\widehat{\varsigma})}_t \, 
\frac{\partial^2 \Delta}{\partial \sigma(\phi_i) \partial \sigma(\phi_j)}  (\widehat{\varsigma}, X^{(\widehat{\varsigma})}_t,\, t) 
\right]\, dt \right\}  \\ 
   & & +  \displaystyle 
h^{\prime \prime}(0) \; \left\{ \frac{\partial C}{\partial \sigma(\phi_i)}(\widehat{\varsigma},\, x, \, 0) + \mu
 \int_0^T  \mathbb{E}_1 \left[ X^{(\widehat{\varsigma})}_s \,  \frac{\partial \Delta}{\partial \sigma(\phi_i)}
(\widehat{\varsigma}, \, X^{(\widehat{\varsigma})}_s, \,s) \right] \,  ds         \right\}   \\
& &\quad \quad \quad  \times  \displaystyle  \left\{ \frac{\partial C}{\partial \sigma(\phi_j)}(\widehat{\varsigma},\, x, \, 0) + \mu
 \int_0^T  \mathbb{E}_1 \left[ X^{(\widehat{\varsigma})}_s \,  \frac{\partial \Delta}{\partial \sigma(\phi_j)}
(\widehat{\varsigma}, \, X^{(\widehat{\varsigma})}_s, \,s) \right] \,  ds         \right\}    \\
& & + \displaystyle  h^{\prime \prime}(0)   \int_0^T  \mathbb{E}_1 
\left[  \widehat{\varsigma}^2\,  \left\{X^{(\widehat{\varsigma})}_s\right\}^2  \,  \frac{\partial \Delta}{\partial \sigma(\phi_i)} 
(\widehat{\varsigma}, \, X^{(\widehat{\varsigma})}_s, \,s)   \frac{\partial \Delta}{\partial \sigma(\phi_j)} 
(\widehat{\varsigma}, \, X^{(\widehat{\varsigma})}_s, \,s)  \right] \,  ds \\
\Psi_i(\widehat{\varsigma},\, \widehat{\varsigma})  & = &   h^{\prime}(0)
\frac{\partial C}{\partial \sigma (\phi_i)}(\widehat{\varsigma}, x, \, 0)  
+ \mu \int_0^T \mathbb{E}_1 \left[X^{(\widehat{\varsigma})}_s \, 
\frac{\partial \Delta}{\partial \sigma(\phi_i)}(\widehat{\varsigma}, \, X^{(\widehat{\varsigma})}_s,\, s)  
 \right] ds     \, .
\end{eqnarray*}

The right-hand member in relation (\ref{eq:cheb}) can be replaced by the complementary error function assuming the Gaussian approximation.
\end{corollario}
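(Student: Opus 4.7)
The statement is really an evaluation rather than a new result: it combines the one-sided Chebyshev bound of Corollary \ref{prop:Cheb} with the explicit bias/variance formulas of Theorem \ref{theorem-main}, specialized to $\sigma=\widehat{\varsigma}$. The strategy is to regard $F(U):=\mathbb{E}_1[h(P\&L)](\varsigma)$ as a random variable on $(\widetilde{\Omega},\widetilde{\mathcal{F}},\widetilde{\mathbb{P}})$ depending on the estimator $\varsigma=\sum_i A_i\phi_i$, apply Corollary \ref{prop:Cheb} to $F$, and then unfold the coefficients $\mathcal{A}[F](\widehat{\varsigma})$ and $\Gamma[F](\widehat{\varsigma})$ from Theorem \ref{theorem-main} by taking the $\mathbb{P}_1$-expectations at the base point.

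The identification $F(\widehat{U})=h(0)$ is the meaning of the approximation $\sigma=\widehat{\varsigma}$: at that point the underlying $S$ satisfies the SDE (\ref{SDE-SV}) with the estimated volatility, so the hedging strategy built from $\widehat{\varsigma}$ is exact by the completeness part of Assumption \ref{assumption-fin}, hence $P\&L(T)=0$ almost surely under $\mathbb{P}_1$ and $\mathbb{E}_1[h(P\&L)]=h(0)$. Plugged into (\ref{cheb-equation}) this gives (\ref{eq:cheb}).

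Next I would take the coefficients of Theorem \ref{theorem-main} and evaluate them at $\sigma=\widehat{\varsigma}$. Since $P\&L$ collapses to $0$, the factors $h'(P\&L)$ and $h''(P\&L)$ become the deterministic constants $h'(0)$ and $h''(0)$ and can be pulled out of the $\mathbb{P}_1$-expectation. What remains is the expectation of stochastic integrals against $dS_t=\mu S_t\,dt+\widehat{\varsigma}(t,S_t)\,S_t\,dB_t$. For the single integrals that appear in $\Lambda^{(1)}_i$ and in the $h'(0)$ part of $\Lambda^{(2)}_{i,j}$, the Brownian part is a martingale and vanishes in expectation; only the drift $\mu S_t\,dt$ survives, producing the $\mu\int_0^T\mathbb{E}_1[S_t\,\partial\Delta/\partial\sigma(\phi_i)]\,dt$ contributions. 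Since $S$ under $\mathbb{P}_1$ with volatility $\widehat{\varsigma}$ has the same law as $X^{(\widehat{\varsigma})}$ under $\mathbb{P}_1$ with drift $\mu$, I can rewrite the integrands in terms of $X^{(\widehat{\varsigma})}_t$ as stated.

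The main obstacle is the $h''(0)$ contribution in $\Lambda^{(2)}_{i,j}$, because there two stochastic integrals enter as a product, so the Brownian parts no longer vanish. Expanding each as drift $+$ martingale, the drift$\times$drift cross product yields the double $\{\partial C/\partial\sigma(\phi_i)+\mu\int_0^T\mathbb{E}_1[\cdots]\,ds\}\cdot\{\partial C/\partial\sigma(\phi_j)+\mu\int_0^T\mathbb{E}_1[\cdots]\,ds\}$, the mixed drift--martingale terms vanish by $\mathbb{P}_1$-centering of the martingale part, and It\^o's isometry applied to the martingale--martingale covariation gives exactly the extra term $\int_0^T\mathbb{E}_1[\widehat{\varsigma}^2(X^{(\widehat{\varsigma})}_s)^2\,\partial_{\sigma(\phi_i)}\Delta\,\partial_{\sigma(\phi_j)}\Delta]\,ds$. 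The integrability required for this It\^o isometry follows from Lemma \ref{Lemma-Gateaux} (existence and $L^2$-character of the Gateaux derivatives of $\Delta$) together with the boundedness assumption on $\Phi'$ in Assumption \ref{assumption-payoff}. The last sentence of the corollary is just the observation that under the Gaussian approximation of Corollary \ref{remark-expansion} the one-sided Chebyshev bound $1/(1+k^2)$ can be replaced by the Gaussian tail $\tfrac12\mathrm{erfc}(k/\sqrt{2})$.
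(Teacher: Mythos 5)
Your proposal follows the same route as the paper's own (very terse) proof: the key observation that under the approximation $\sigma=\widehat{\varsigma}$ the profit-and-loss process collapses to $0$ $\mathbb{P}_1$-a.s., so that $h(P\&L)$, $h'(P\&L)$, $h''(P\&L)$ become the constants $h(0)$, $h'(0)$, $h''(0)$ and the base-point value $F(\widehat{U})=h(0)$; then substitution into the coefficients of Theorem \ref{theorem-main}, evaluation of the resulting $\mathbb{P}_1$-expectations via the SDE, and finally the Cantelli bound of Corollary \ref{prop:Cheb}. You in fact supply more of the computation than the paper, which dismisses it as ``an easy calculation using the SDE.''

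There is, however, one step in your added detail that does not hold as you state it. Write $\int_0^T \partial_{\sigma(\phi_i)}\Delta\, dS_s = D_i + M_i$ with $D_i=\mu\int_0^T S_s\,\partial_{\sigma(\phi_i)}\Delta\, ds$ and $M_i$ the stochastic-integral part, and let $c_i=\frac{\partial C}{\partial\sigma(\phi_i)}(\widehat{\varsigma},x,0)$. The $h''(0)$ contribution requires $\mathbb{E}_1\bigl[(c_i+D_i+M_i)(c_j+D_j+M_j)\bigr]$, and your claim that the mixed terms vanish ``by $\mathbb{P}_1$-centering of the martingale part'' is not correct: $D_i$ is itself a random variable (a Lebesgue integral of a random integrand) correlated with $M_j$, and Ito's product rule gives $\mathbb{E}_1[D_iM_j]=\mu\int_0^T\mathbb{E}_1\bigl[M_{j,t}\,S_t\,\partial_{\sigma(\phi_i)}\Delta\bigr]\,dt$, which has no reason to be zero; likewise $\mathbb{E}_1[D_iD_j]\neq\mathbb{E}_1[D_i]\,\mathbb{E}_1[D_j]$. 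To land exactly on the displayed $\Lambda^{(2)}_{i,j}$ one must either take $\mu=0$ (then $D_i\equiv 0$ and only the Ito-isometry term survives, which is the regime actually used in the log-normal example of Section 7) or explicitly discard these cross-covariances as a further approximation. The paper's stated formula silently has the same feature, so your derivation is faithful to the statement being proved, but the justification you offer for dropping those terms is not the right one and should be replaced by one of the two options above.
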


\begin{proof}
We first make an useful observation:  In a complete market without uncertainty, the price C and the
portfolio $\Delta$ are defined in such way that the $P\& L$ process is worth zero almost surely.
The changing of $\sigma$ for the estimated value $\widehat{\varsigma}$ into the definitions of 
$\Lambda^{(j)}_i$ and $\Psi_{(i)}$  entail that the $P\& L$ is substituted by the approximated
process
$$
\overline{P\& L} =  C(\widehat{\varsigma}, \, x, \, 0) + \int_0^T \Delta(\widehat{\varsigma}, \, X^{(\widehat{\varsigma}
)}_t, \, t)  \, dX^{\widehat{\varsigma}}_t - \Phi(X^{(\widehat{\varsigma})}_T).
$$ 
Hence, by definition of $C$ and $\Delta$, we have that $\overline{P\& L}= 0$ $\mathbb{P}_1$-a.s..   
We know the  bias and the variance of $ \mathbb{E}_1\left[h(P\&L)\right]$ thanks to 
Theorem  \ref{theorem-main}.  Then, we approximate them using $\sigma=\widehat{\varsigma}$ 
and we find the approximated bias $\mathcal{A}[\mathbb{E}[h(P \& L)]](\widehat{\varsigma}) $ and the
approximated variance  $\Gamma \left[\mathbb{E}\left[h\left( P\&L \right)\right]
    \right](\widehat{\varsigma})$ thanks to an easy calculation using the SDE (\ref{SDE:perturbed}).
Finally, we apply Chebyshev's inequality, see  Corollary
\ref{prop:Cheb}, and we find the relation (\ref{eq:cheb}) thanks to the previous remark.

\end{proof}

The last corollary gives us the approximated tails of the law of the $P\& L$, in particular it 
permits to estimated the probability of a rare event thanks to formula (\ref{eq:cheb}). 
In the next section, we will use these results in order to define the price of the contingent claim.

\section{Option Pricing}\label{sec:OP}

In order to interpret this result in financial terms, we consider that the option seller 
is aware of the presence of errors in his procedure for estimating the volatility $\sigma$ 
and wants to take this into account.
Since the option seller does not control the errors,  
the risk related to the space $(\widetilde{\Omega}, \, \widetilde{\mathcal{F}}, 
\, \widetilde{\mathbb{P}})$, i.e. to the uncertainty on the parameters, cannot be hedged.
Indeed, if we compute the super-hedging price of a contingent claim, we find a  very high buy-price and a very small sell-price, i.e. the bid-ask spread becomes too large compared with the market spreads.  
We also remark that the probability that the random variable $\mathbb{E}[(P\&L)]$ 
takes values far from the mean is very small and becomes negligible if this distance is big compared with the standard deviation $\sqrt{ \Gamma [\mathbb{E}[(P\&L)]] }$.
Therefore, we introduce the following principle to define the price of a contingent claim under this uncertainty model.

\begin{principle}[Asset pricing principle under uncertainty]\label{principle-1}\hfill
\vspace{0.2cm}

The option seller  fixes a risk tolerance $\alpha < 0.5$ and
accepts to sell the option at any price $F_{\text{sell}}(\varsigma)$, 
$\mathcal{F}_0$-measurable, such that 
\begin{equation}\label{eqn-principe-1}
\widetilde{\mathbb{P}}\left\{  \, \mathbb{E}_1\left[  F_{\text{sell}}(\widehat{\varsigma}) 
- C(\widehat{\varsigma})
+  P \& L(\varsigma)\right]  < 0 \right\} \leq \alpha
\end{equation}
where $C(\widehat{\varsigma})$ denotes the cost of the hedging strategy, that is the theoretical price 
of the option without uncertainty on the volatility, i.e. 
$C(\widehat{\varsigma}) = \mathbb{E}^{\mathbb{Q}_1^{\widehat{\varsigma}}}
[\Phi(X^{(\widehat{\varsigma})}_T)]$. 
\end{principle}

Before applying this principle to our analysis, we discuss its financial implications.
First of all, we remark that $F_{\text{sell}}(\widehat{\varsigma})$ and $C(\widehat{\varsigma}) $ 
are $\mathcal{F}_0$ measurable, then the relation 
(\ref{eqn-principe-1}) can be written as
\begin{equation}\label{eqn-principe-2}
\widetilde{\mathbb{P}}\left\{  F_{\text{sell}}(\widehat{\varsigma}) - C(\widehat{\varsigma})
+ \mathbb{E}_1\left[   P \& L(\varsigma)\right]  
< 0 \right\} \leq \alpha \, . 
\end{equation} 
Moreover, the price $F_{\text{sell}}(\widehat{\varsigma})$ depends on the estimated volatility $\widehat{\varsigma}$.
The second remark is that this principle does not give a single price but a half-line of possible selling prices. Indeed if $X$ is a selling price and $Y>X$ then 
$$
\widetilde{\mathbb{P}}\left\{  Y - C(\varsigma)+ \mathbb{E}_1
\left[P \& L(\varsigma)\right]  < 0 \right\} < \widetilde{\mathbb{P}}
\left\{ X - C(\widehat{\varsigma})+ \mathbb{E}_1\left[P \& L(\varsigma)\right]  < 0 \right\} \leq \alpha,
$$
so $Y$ is also a selling price.
We call ask price, denoted $F_{\text{ask}}(\widehat{\varsigma})$, the infimum of the set of all 
acceptable selling prices. 
We also remark that the classical option price in the model without uncertainty is a sell price 
and, in particular, is the ask price, since the wealth 
of an option seller  is worth zero  and the probability space $\widetilde{\mathbb{P}}$ is trivial, 
in a complete market without uncertainty.

We discuss the event $\{ \mathbb{E}_1\left[ X - C(\widehat{\varsigma})+ P \& L(\varsigma)\right]  < 0\}$. 
On this event, if the option seller has sold the option at the price $X$,  then he will  lose 
money in mean at maturity, since he has to pay $C(\widehat{\varsigma})$ to buy the hedging portfolio 
and the noise on this hedging strategy conducts him to lose money in mean. However,  
the probability of this event is smaller than $\alpha$. 

Therefore, the asset pricing principle says that the option seller accepts to sell the option at a 
price $X$ if this price is high enough to guarantee that he loses money with a probability 
smaller than $\alpha$. The choice of the parameter $\alpha$ depends on the risk aversion of 
the option seller. However, if he is too risk-adverse, then he will advertise  the contingent claim
at  too high a price, and potential buyers will find other traders that offer the same option at a 
lower price. 

We remark that there is a likeness between this principle and the VaR of the trader portfolio. 
Indeed, $\alpha$ can be interpreted as the probability  to "lose money on the contract", 
which is the definition of the Value-at-Risk.  
We also remark that  the final wealth of the trader is negative in mean on the event 
$\{\mathbb{E}_1\left[X - C(\widehat{\varsigma})+ P \& L(\varsigma)\right]  < 0\}$; 
this event is the most onerous then we name the related risk  the "risk of the first kind". 
The "risk of the second kind" for the trader is to propose a price too high 
such that potential buyers do not buy the option, i.e. "lose the  opportunity to make money" with
the contract. 
In a future paper, we will study the problem of the optimal proposed price as an equilibrium
between the two previous kinds of risk.  We assume here that the  risk tolerance $\alpha$ is 
fixed and that the option seller chooses the smallest price consistent with his  risk tolerance.

We also remark that the Principle \ref{principle-1} also defines the purchase price. 
As a matter of fact, if a trader accepts to buy an option, he has to take a negative position 
on the hedging portfolio to cover it, i.e. he has to follow the opposite of the hedging strategy. 
In accord with the Principle \ref{principle-1}, the option buyer accepts to buy the option at any 
price $F_{\text{buy}}(\widehat{\varsigma})$ such that
$$
\widetilde{\mathbb{P}}\left\{ - F_{\text{buy}}(\widehat{\varsigma}) + C(\widehat{\varsigma})- 
\mathbb{E}_1\left[ P \& L (\varsigma)\right]  < 0 \right\} \leq \alpha,
$$
that is the relation (\ref{eqn-principe-1}), where we have changed all signs. The 
relation above can be rewritten as 
$$
\widetilde{\mathbb{P}}\left\{ F_{\text{buy}}(\widehat{\varsigma}) - C(\widehat{\varsigma}) 
+ \mathbb{E}_1\left[ P \& L (\varsigma)\right]  > 0 \right\} \leq \alpha.
$$
The previous relation and the relation (\ref{eqn-principe-1}) lead us to the following remark:

\begin{remarque}[Bid-Ask Spread]\hfill
\vspace{0.2cm}

If all option traders on the market are risk adverse, then the best purchase price, 
denoted by $F_{\text{bid}}(\widehat{\varsigma})$, and the best seller price 
$F_{\text{ask}}(\widehat{\varsigma})$ are distinct.
\end{remarque} 
This remark results from the fact that if the traders are risk adverse, then the  risk tolerance 
must be smaller that $0.5$, so $F_{\text{bid}}(\widehat{\varsigma}) < C(\widehat{\varsigma}) + 
\widetilde{\mathbb{E}}  \left[   \mathbb{E}_1[P\&L(\varsigma)] \right] < 
F_{\text{ask}}(\widehat{\varsigma})$. 

The previous principle and Theorem \ref{theorem-main} have the following immediate consequence:

\begin{proposizione}[Option prices]\label{PROP:OP}\hfill
\vspace{0.2cm}

We assume the hypotheses of Theorem  \ref{theorem-main}, the approximation 
$\sigma=\widehat{\varsigma}$ and 
that the option seller follows the principle \ref{principle-1}.
Then he accepts to sell the option at any price higher than 
\begin{displaymath}
F_\text{ask}(\widehat{\varsigma}) = C(\widehat{\varsigma}) +  \mathcal{A}
\left[\mathbb{E}_1[P \& L] \right] (\widehat{\varsigma}) + \sqrt{  \Gamma
\left[\mathbb{E}_1[P \& L]\right] (\widehat{\varsigma}) }\; \mathcal{N}_{1-\alpha}
\end{displaymath}
where $\mathcal{N}_{1-\alpha}$ is the $(1-\alpha)$-quantile of the reduced
normal law, or the function $\sqrt{\frac{\alpha}{1-\alpha}}$ given by Chebychev's 
inequality (\ref{cheb-equation}) in the conservative case. Likewise, the option buyer 
accepts to buy the option at any
price lower than
\begin{displaymath}
F_\text{bid}(\widehat{\varsigma}) = C(\widehat{\varsigma}) +  \mathcal{A}
\left[\mathbb{E}_1[P \& L]\right](\widehat{\varsigma}) + \sqrt{ \Gamma
\left[\mathbb{E}_1[P \& L]\right](\widehat{\varsigma})} \; \mathcal{N}_{\alpha}
\end{displaymath}

\end{proposizione}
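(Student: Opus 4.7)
The plan is to translate Principle~\ref{principle-1} into a closed-form quantile condition on $F_\text{sell}$ by combining the bias/variance computations of Theorem~\ref{theorem-main} with the Gaussian approximation of Corollary~\ref{remark-expansion}. First, since $F_\text{sell}(\widehat{\varsigma})$ and $C(\widehat{\varsigma})$ are $\mathcal{F}_0$--measurable constants from the viewpoint of $\widetilde{\mathbb{P}}$, the linearity of $\mathbb{E}_1$ rewrites relation (\ref{eqn-principe-1}) as
$$
\widetilde{\mathbb{P}}\!\left\{F_\text{sell}(\widehat{\varsigma}) - C(\widehat{\varsigma}) + \mathbb{E}_1[P\&L(\varsigma)] < 0\right\} \le \alpha .
$$
The task therefore reduces to describing the law on $(\widetilde{\Omega},\widetilde{\mathcal{F}},\widetilde{\mathbb{P}})$ of the random variable $\mathbb{E}_1[P\&L(\varsigma)]$ and to inverting this one-sided tail inequality.

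Next, I would apply Theorem~\ref{theorem-main} with the test function $h(x)=x$ (equivalently, Corollary~\ref{theorem-corollary}) followed by the $\sigma=\widehat{\varsigma}$ specialization used in the preceding approximated-tails corollary, which delivers the bias $\mathcal{A}[\mathbb{E}_1[P\&L]](\widehat{\varsigma})$ and the variance $\Gamma[\mathbb{E}_1[P\&L]](\widehat{\varsigma})$ in closed form. Under this same approximation one has $\mathbb{E}_1[P\&L(\widehat{\varsigma})]=0$, because the pathwise replication identity $\overline{P\&L}=0$ $\mathbb{P}_1$-a.s.\ established in that corollary kills the baseline term. Consequently, Corollary~\ref{remark-expansion} applied to the map $\varsigma\mapsto \mathbb{E}_1[P\&L(\varsigma)]$ collapses to
$$
\mathbb{E}_1[P\&L(\varsigma)] \ \stackrel{\mathrm{d}}{\approx}\ \mathcal{A}[\mathbb{E}_1[P\&L]](\widehat{\varsigma}) + \sqrt{\Gamma[\mathbb{E}_1[P\&L]](\widehat{\varsigma})}\,G ,
$$
with $G$ a standard Gaussian variable under $\widetilde{\mathbb{P}}$.

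Substituting this approximation into the tail probability above turns the condition into $\widetilde{\mathbb{P}}(\sqrt{\Gamma}\,G < -[F_\text{sell} - C + \mathcal{A}])\le\alpha$; inverting the standard normal cumulative distribution function and using the symmetry $\mathcal{N}_\alpha=-\mathcal{N}_{1-\alpha}$ isolates $F_\text{sell}$ and produces the announced Gaussian-case threshold $F_\text{ask}(\widehat{\varsigma})$. When one prefers not to assume the Gaussian approximation, the identical step carried out via the left-hand tailed Chebyshev/Cantelli bound of Corollary~\ref{prop:Cheb} replaces $\mathcal{N}_{1-\alpha}$ by $\sqrt{\alpha/(1-\alpha)}$ and yields the conservative variant. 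Finally, the bid formula follows by duality: a buyer must implement the opposite delta-portfolio, so the relevant gain process is $-P\&L(\varsigma)$, and rerunning the same quantile inversion with the $\alpha$-quantile instead of the $(1-\alpha)$-quantile gives $F_\text{bid}(\widehat{\varsigma})$.

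The main subtlety will be the careful sign bookkeeping required to convert the lower-tail event $\{\cdot<0\}$ on the uncertainty space into a one-sided lower bound on $F_\text{sell}$, ensuring that the CDF inversion (or the Cantelli bound) lands on the correct side at level $\alpha$; a secondary but essential point is the vanishing of the baseline $F(\widehat{U})$ term in Corollary~\ref{remark-expansion} for the specific choice $F=\mathbb{E}_1[P\&L]$, which hinges on the replication identity $\overline{P\&L}=0$ under the approximation $\sigma=\widehat{\varsigma}$.
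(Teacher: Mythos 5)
Your proof is correct and follows essentially the same route the paper intends: the paper gives no formal proof (it declares the proposition an immediate consequence), and your steps --- the reduction to relation (\ref{eqn-principe-2}) via $\mathcal{F}_0$-measurability, the specialization $h(x)=x$ of Theorem \ref{theorem-main} with the replication identity $\overline{P\& L}=0$ under $\sigma=\widehat{\varsigma}$, the Gaussian (resp. Cantelli) quantile inversion, and the sign-flip for the buyer --- are exactly the ingredients laid out in the discussion surrounding the principle and the approximated-tails corollary. One small caveat: actually inverting the Cantelli bound $1/(1+k^2)\leq\alpha$ gives $k=\sqrt{(1-\alpha)/\alpha}$, not $\sqrt{\alpha/(1-\alpha)}$ as you assert, so your conservative variant inherits what appears to be a typo in the proposition's own statement rather than deriving the constant.
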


We remark some symmetry in the two previous prices, since
$\mathcal{N}_{\alpha} + \mathcal{N}_{1- \alpha}= 0$. Therefore,  we have
\begin{eqnarray}
F_{\text{mid}}(\widehat{\varsigma}) = \frac{F_{\text{ask}}(\widehat{\varsigma})+F_{\text{bid}}(\widehat{\varsigma})}{2} 
& = & C(\widehat{\varsigma}) +  \mathcal{A} \left[\mathbb{E}_1[P \& L]\right](\widehat{\varsigma}) \\
\text{Spread}(\widehat{\varsigma}) = 
F_{\text{ask}}(\widehat{\varsigma})-F_{\text{bid}}(\widehat{\varsigma})& = & 2 \sqrt{ \Gamma
\left[\mathbb{E}_1[P \& L]\right] (\widehat{\varsigma})} \, \mathcal{N}_{1-\alpha}
\end{eqnarray}
where $F_{\text{mid}}(\widehat{\varsigma})$ denotes the mid price, i.e. the average of the bid and 
ask prices, and $\text{Spread}(\widehat{\varsigma})$ denotes the bid-ask spread, i.e. the difference
between the ask and the bid prices.    
We emphasize that with our model, we can reproduce a bid-ask spread
and we can associate its width to the trader's risk aversion, i.e. the
probability $\alpha$, and the volatility uncertainty, i.e. the term
$\sqrt{ \Gamma \left[\mathbb{E}_1[P \& L]\right](\widehat{\varsigma})}$.
Another interesting point it that the mid-price does not depend on 
the  risk tolerance $\alpha$.

We conclude this section with a remark on the Assumption \ref{assumption-payoff}.

\begin{remarque}[General Payoff]\label{REM:GP}\hfill
\vspace{0.2cm}

It is clear that the Assumption \ref{assumption-payoff} is quite restrictive, 
since call and put options do not verify it, for instance. 
However, we use the hypothesis that the payoff is regular only to assure the existence of the
Gateaux derivatives of the price and the hedging portfolio, see equations 
(\ref{rel:gateaux-1}), (\ref{rel:gateaux-2}) and Lemma \ref{Lemma-Gateaux};
if we can prove that the Gateaux derivatives exist, then Theorem \ref{theorem-main} 
remains valid and we can define the price for an option according to the Principle \ref{principle-1} 
even if the payoff is not regular.
\end{remarque}

\section{Example: log-normal diffusion}

In this section, we give an example of the previous results. In particular,
we consider the log-normal diffusion, i.e. the Black Scholes model.
The underlying follows the SDE  (\ref{Black-Scholes}). In this case, the volatility 
is a real number, so we replace it by the same number multiplied by the constant function. 
We concentrate our analysis on call options. It is clear that a call option does not verify 
the Assumption \ref{assumption-payoff}, but we have a closed form for all greeks in the 
case of a log-normal diffusion and we can check that the Black Scholes pricing formula 
is $C^2$ at each time $t$ strictly smaller than the maturity T and the vega, i.e. the derivative 
w.r.t. the volatility, vanishes when t goes to $T$. These properties guarantee that  Theorem 
\ref{theorem-main} remains valid  
even without Assumption  \ref{assumption-payoff}, see Remark \ref{REM:GP}. 
We assume that the drift $\mu$ vanishes under the historical probability measure $\mathbb{P}_1$,
in order to simplify our numerical computation and to overcome the problem of the dependence 
on $\sigma_0$, see section \ref{sec:PandL}, in particular Corollary \ref{theorem-corollary}. 
We denote by $\varsigma_0$ (resp. $\widehat{\varsigma}_0$) the volatility estimator (resp. the volatility estimated 
by the option seller). For that we consider an error structure of type Ornstein-Uhlenbeck, see Definition \ref{structureOU}.

In this case, the Theorem \ref{theorem-main} has the following corollary.

\begin{corollario}[Bid and Ask prices with log-normal diffusion]\label{pricing-BS}\hfill
\vspace{0.2cm}

If the underlying follows the Black Scholes SDE   (\ref{Black-Scholes}) without drift, then the bias and the variance of the profit and loss process that hedges a call option verify the following equations:

\begin{eqnarray}
\mathcal{A}[C(x, \, K, \, T,\, \varsigma_0)](\widehat{\varsigma}_0) & = & x \; \frac{e^{-\frac{1}{2} d_1^2}}{\sqrt{2
\pi}} \,\sqrt{T} \left\{ \mathcal{A} \left[\varsigma_0\right](\widehat{\varsigma}_0)  +
 \frac{d_1 \, d_2}{2 \, \sigma_0} \, \ \Gamma\left[\varsigma_0\right](\widehat{\varsigma}_0)
 \right\} \\
 \Gamma[C(x, \, K, \, T,\, \varsigma_0)](\widehat{\varsigma}_0) & = & x^2 \; \frac{e^{- d_1^2}}{2 \pi} \,  T \,
 \Gamma\left[\varsigma_0\right](\widehat{\varsigma}_0)
 \end{eqnarray}
where 
$$
d_1 = \frac{\ln x - \ln K + \frac{\widehat{\varsigma}_0^2}{2} T
}{\widehat{\varsigma}_0 \,  \sqrt{T} }, \; \; d_2 = d_1 - \widehat{\varsigma}_0\,  \sqrt{T},
$$
and $C(x,\, K, \, T,\, \varsigma_0)$ 
denotes the price of the call with strike $K$, maturity $T$ when the
 underlying has value $x$ and the trader known the volatility estimator  $\varsigma_0$.
Moreover we have the following bid and ask prices:
\begin{eqnarray}
C_{\text{ask}}(x, \, K, \, T,\, \widehat{\varsigma}_0) &= & x\, \mathcal{G} (d_1) - K \, \mathcal{G}(d_2)   
 +x \, \sqrt{ \,T} \; \frac{e^{-\frac{1}{2} d_1^2}}{\sqrt{2
\pi}} \, \sqrt{  \Gamma\left[\varsigma_0 \right] (\widehat{\varsigma}_0) } \mathcal{N}_{1-\alpha}
\label{BS:ask-price} \\
& & + x \,\sqrt{T} \;  
\frac{e^{-\frac{1}{2} d_1^2}}{\sqrt{2
\pi}}  \; \left\{  \mathcal{A} \left[\varsigma_0 \right] (\widehat{\varsigma}_0) +
 \frac{d_1 \, d_2}{2 \, \sigma_0} \,  \Gamma\left[\varsigma_0\right](\widehat{\varsigma}_0)
  \right\} \nonumber  \\ 
 & & \nonumber \\ 
C_{\text{bid}}(x, \, K, \, T,\, \widehat{\varsigma}_0) & = & x\, \mathcal{G} (d_1) - K \, \mathcal{G}(d_2) 
+ x\,  \sqrt{ T}  \; \frac{e^{-\frac{1}{2} d_1^2}}{\sqrt{2
\pi}} \, \sqrt{ \Gamma\left[\varsigma_0\right](\widehat{\varsigma}_0)
 } \mathcal{N}_{\alpha} \label{BS:bid-price} \\ 
& &  + x\,\sqrt{T} \; 
\frac{e^{-\frac{1}{2} d_1^2}}{\sqrt{2
\pi}} \; \left\{  \mathcal{A} \left[\varsigma_0 \right](\widehat{\varsigma}_0) +
 \frac{d_1 \, d_2}{2 \,\widehat{\varsigma}_0} \, \Gamma\left[\varsigma_0\right](\widehat{\varsigma}_0)
 \right\},  \nonumber 
\end{eqnarray}
where $\mathcal{G}$ denotes the cumulative distribution function of the reduced Gaussian law
and $\mathcal{N}_{\alpha}$ is defined in Proposition \ref{PROP:OP}.

\end{corollario}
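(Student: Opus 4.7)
The plan is to treat this corollary as an application of the abstract formulas of Theorem \ref{theorem-main} combined with Corollary \ref{theorem-corollary} to the one-dimensional Black-Scholes setting, and then to plug everything into Proposition \ref{PROP:OP}. First, I would observe that, for the log-normal model, the volatility function reduces to a single real number $\varsigma_0$, so the basis expansion collapses to the constant function with a single coefficient, and the product error structure reduces to the Ornstein-Uhlenbeck structure of Definition \ref{structureOU}. Before anything else I would address the regularity of the payoff: the call payoff $(S_T-K)^+$ does not belong to $C^2$, violating Assumption \ref{assumption-payoff}, but, as anticipated by Remark \ref{REM:GP}, the conclusion of Theorem \ref{theorem-main} remains valid as soon as the Gateaux derivatives of $C$ and $\Delta$ with respect to $\sigma$ exist. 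For the Black-Scholes call these derivatives are available in closed form, the pricing function is $C^2$ on $[0,T[$, and the vega vanishes as $t\to T$, which gives the integrability needed to justify the interchange of sharp/bias with the stochastic integral in the proof of Theorem \ref{theorem-main}.

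Next I would invoke Corollary \ref{theorem-corollary} under the assumption $\mu = 0$. In that case the terms proportional to $\mu$ in $\overline{\Lambda}^{(1)}_i$, $\overline{\Lambda}^{(2)}_{i,j}$ and $\overline{\Psi}_i$ vanish, leaving only the price-vega contribution. Since the basis is reduced to one element and $\mathbb{E}_1[P\&L]$ depends on $\varsigma_0$ only through $C(\varsigma_0)$ (the hedging integral is a $\mathbb{P}_1$-martingale and $\Phi(S_T)$ is independent of $\varsigma_0$), the statement collapses to a direct application of the chain rules \eqref{functional-calculus} and \eqref{bias-chain-rule} to the scalar random variable $C(\varsigma_0)$:
\begin{align*}
\Gamma[C(\varsigma_0)](\widehat{\varsigma}_0) &= \bigl(\partial_\sigma C(\widehat{\varsigma}_0)\bigr)^2\,\Gamma[\varsigma_0](\widehat{\varsigma}_0),\\
\mathcal{A}[C(\varsigma_0)](\widehat{\varsigma}_0) &= \partial_\sigma C(\widehat{\varsigma}_0)\,\mathcal{A}[\varsigma_0](\widehat{\varsigma}_0)+\tfrac{1}{2}\partial^2_\sigma C(\widehat{\varsigma}_0)\,\Gamma[\varsigma_0](\widehat{\varsigma}_0).
\end{align*}

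The third step is a bookkeeping computation of the greeks. Using $\partial_\sigma d_1 = -d_2/\sigma$, I would get the standard vega $\partial_\sigma C = x\sqrt{T}\,e^{-d_1^2/2}/\sqrt{2\pi}$ and the volga $\partial_\sigma^2 C = x\sqrt{T}\,e^{-d_1^2/2}/\sqrt{2\pi}\cdot d_1 d_2/\sigma$, obtained by differentiating the vega and using the identity $x\phi(d_1) = K\phi(d_2)$ to cancel the contribution coming from differentiating $d_2$. Substituting these two expressions into the chain-rule identities of the previous step yields the two displayed formulas for $\mathcal{A}[C]$ and $\Gamma[C]$ exactly as stated.

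Finally, I would apply Proposition \ref{PROP:OP} together with the identities $F_{\text{ask}}=C(\widehat{\varsigma}_0)+\mathcal{A}[\mathbb{E}_1[P\&L]](\widehat{\varsigma}_0)+\sqrt{\Gamma[\mathbb{E}_1[P\&L]](\widehat{\varsigma}_0)}\,\mathcal{N}_{1-\alpha}$ and its analogue for $F_{\text{bid}}$. Because $\mathbb{E}_1[P\&L]$ differs from $C(\varsigma_0)$ only by a quantity that does not depend on $\varsigma_0$ (under the approximation $\sigma_0=\widehat{\varsigma}_0$), the bias and variance operators agree on the two objects, and inserting the explicit vega/volga yields \eqref{BS:ask-price}–\eqref{BS:bid-price}. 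The only step that is not purely mechanical is the opening justification that the chain-rule machinery applies to the non-smooth call payoff; once this is handled via Remark \ref{REM:GP} and the explicit Black-Scholes formulas, the rest is algebra.
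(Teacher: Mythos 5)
Your proposal is correct and follows essentially the same route as the paper: reduce to Corollary \ref{theorem-corollary} with $\mu=0$ so that only the vega/volga terms survive, apply the chain rules for $\Gamma$ and $\mathcal{A}$ to the scalar $C(\varsigma_0)$, and feed the result into Proposition \ref{PROP:OP}; the paper's own proof is just a terser version of this, leaving the vega and volga computations as ``easily calculated.'' Your extra care about the non-$C^2$ call payoff mirrors the discussion the paper places in the text immediately before the corollary rather than inside the proof, so nothing is missing.
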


\begin{proof}
We know that the cost of hedging strategy $F(\widehat{\varsigma}_0)$ and the portfolio 
$\Delta(\widehat{\varsigma}_0)$ of a call on Black and Scholes model are given by
\begin{eqnarray*}
 C(x, \, K, \, T,\, \widehat{\varsigma}_0)  &= & C(\widehat{\varsigma}_0) = x\, \mathcal{G} (d_1) - K \,
 \mathcal{G}(d_2) \\
 \Delta(x, \, K, \, T,\, \widehat{\varsigma}_0)  &= & 
 \frac{\partial C}{\partial x}(\widehat{\varsigma}_0) = \mathcal{G} (d_1).
\end{eqnarray*}
We can easily calculate the two first derivatives of $F(\widehat{\varsigma}_0, \, x)$ and 
$\Delta(\widehat{\varsigma}_0, \, x)$ with respect to the volatility $\widehat{\varsigma}_0$, then
the two equations (\ref{BS:ask-price}) and (\ref{BS:bid-price}) come directly from a 
computation of $\mathcal{A}[C(x, \, K, \, T,\,\varsigma_0)](\widehat{\varsigma}_0)$
and $\Gamma[C(x, \, K, \, T,\,\varsigma_0)](\widehat{\varsigma}_0)$ using Corollary \ref{theorem-corollary}.
\end{proof}

\subsection{Analysis of the impact of parameter uncertainty}

We now analyze the correction on the pricing formula due to the presence of an uncertainty 
on the volatility parameter. We have the following corollaries that can be proved with simple 
computations.
We analyze only the mid-price, i.e. the average price of the bid and ask prices. We also use 
the notation
$$
r_r^{BS}\left( \widehat{\varsigma}_0  \right) = 2 \, \widehat{\varsigma}_0  \, \frac{ 
\mathcal{A} \left[\varsigma_0 \right](\widehat{\varsigma}_0)}{
 \Gamma\left[\varsigma_0 \right] (\widehat{\varsigma}_0) }\, .
$$

\begin{corollario}[Delta and Gamma correction]\label{corollary-delta-gamma}\hfill
\vspace{0.2cm}

The bias on the call price due to the uncertainty on volatility verifies 
\begin{equation}
 \frac{\partial \mathcal{A}[C]}{\partial K}  
= \frac{d_1 \; \mathcal{A}[C]}{K \, \widehat{\varsigma}_0 \,  \sqrt{T}} -
\frac{x}{2 \,  K \,  \widehat{\varsigma}_0^2} \frac{e^{-\frac{1}{2} d_1^2}}{\sqrt{2
\pi}} \,  (d_1 + d_2) \,  \Gamma\left[\widehat{\varsigma}_0\right].
\end{equation}
Moreover, the bias and its first derivative with respect to K are positive at the money 
if  $r_r^{BS}\left(\widehat{\varsigma}_0\right)>\frac{1}{4}\widehat{\varsigma}_0^2\,T$ and
are negative when the stricktly inequality is reversed. 
Finally, the bias is convex in K at the money if and only if
$$
r_r^{BS}\left( \widehat{\varsigma}_0 \right) <   
\frac{\widehat{\varsigma}_0^4 \, T^2 + 4 \widehat{\varsigma}_0^2 \, T + 32}{4 \widehat{\varsigma}_0^2 \, T + 16}\; .
$$
\end{corollario}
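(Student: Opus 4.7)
The plan is to proceed by a direct differentiation of the closed-form bias expression given in Corollary \ref{pricing-BS}, using the explicit dependence of $d_1$ and $d_2$ on $K$, and then to specialise to the at-the-money condition $K=x$ where $d_1+d_2=0$ and $d_1d_2=-\widehat{\varsigma}_0^2 T/4$.

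First I would record the elementary facts
$\frac{\partial d_1}{\partial K}=\frac{\partial d_2}{\partial K}=-\frac{1}{K\widehat{\varsigma}_0\sqrt{T}}$,
so that $\frac{\partial}{\partial K}(d_1d_2)=-\frac{d_1+d_2}{K\widehat{\varsigma}_0\sqrt{T}}$ and $\frac{\partial}{\partial K}\exp(-d_1^2/2)=\frac{d_1}{K\widehat{\varsigma}_0\sqrt{T}}\exp(-d_1^2/2)$. Writing
$\mathcal{A}[C]=x\sqrt{T}\,\phi(d_1)\bigl\{\mathcal{A}[\varsigma_0]+\tfrac{d_1d_2}{2\widehat{\varsigma}_0}\Gamma[\varsigma_0]\bigr\}$ with $\phi(u)=e^{-u^2/2}/\sqrt{2\pi}$, a product-rule differentiation gives
\[
\frac{\partial\mathcal{A}[C]}{\partial K}=\frac{d_1}{K\widehat{\varsigma}_0\sqrt{T}}\mathcal{A}[C]-\frac{x}{2K\widehat{\varsigma}_0^2}\phi(d_1)(d_1+d_2)\,\Gamma[\varsigma_0],
\]
which is exactly the claimed first-derivative identity.

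Next I specialise at $K=x$, where $d_1=\widehat{\varsigma}_0\sqrt{T}/2$, $d_1+d_2=0$, and $d_1d_2=-\widehat{\varsigma}_0^2T/4$. Then
$\mathcal{A}[C]\big|_{\text{ATM}}=x\sqrt{T}\,\phi(d_1)\bigl\{\mathcal{A}[\varsigma_0]-\tfrac{\widehat{\varsigma}_0T}{8}\Gamma[\varsigma_0]\bigr\}$; positivity amounts to $\frac{2\widehat{\varsigma}_0\mathcal{A}[\varsigma_0]}{\Gamma[\varsigma_0]}>\frac{\widehat{\varsigma}_0^2T}{4}$, i.e.\ $r_r^{BS}(\widehat{\varsigma}_0)>\widehat{\varsigma}_0^2T/4$. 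At the money the second term in $\partial_K\mathcal{A}[C]$ vanishes because $d_1+d_2=0$, so $\partial_K\mathcal{A}[C]\big|_{\text{ATM}}=\frac{1}{2K}\mathcal{A}[C]\big|_{\text{ATM}}$, yielding the same sign condition.

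The core step for the convexity assertion is the computation of $\partial_{KK}\mathcal{A}[C]$ at ATM. Differentiating the identity once more, the second term contributes $\frac{x\phi(d_1)\Gamma[\varsigma_0]}{K^2\widehat{\varsigma}_0^3\sqrt{T}}$ at ATM (the only surviving piece is the derivative of $d_1+d_2$ against $1/K$, since $d_1+d_2=0$ kills the other components), while for the first term one uses $\partial_K d_1=-1/(K\widehat{\varsigma}_0\sqrt{T})$ together with the already-known value of $\partial_K\mathcal{A}[C]$ at ATM to obtain $-\frac{\mathcal{A}[C]}{K^2}\bigl(\frac{1}{\widehat{\varsigma}_0^2T}+\frac{1}{4}\bigr)$. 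Collecting, convexity is equivalent to
\[
\frac{\Gamma[\varsigma_0]}{\widehat{\varsigma}_0^3 T}>\Bigl(\mathcal{A}[\varsigma_0]-\tfrac{\widehat{\varsigma}_0 T}{8}\Gamma[\varsigma_0]\Bigr)\,\frac{4+\widehat{\varsigma}_0^2T}{4\widehat{\varsigma}_0^2T},
\]
and rearranging in terms of $r_r^{BS}(\widehat{\varsigma}_0)=2\widehat{\varsigma}_0\mathcal{A}[\varsigma_0]/\Gamma[\varsigma_0]$ gives the stated bound $r_r^{BS}<\frac{\widehat{\varsigma}_0^4T^2+4\widehat{\varsigma}_0^2T+32}{4\widehat{\varsigma}_0^2T+16}$.

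The calculation is entirely elementary; the only real obstacle is careful bookkeeping when isolating the terms that survive at the ATM point $d_1+d_2=0$, since several cancellations occur and one must avoid spurious contributions. I would therefore keep the notation $\phi(d_1)$ throughout and substitute the numerical ATM values only at the final step, to make the algebraic simplifications transparent.
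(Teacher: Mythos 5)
Your proposal is correct and is precisely the ``simple computation'' the paper alludes to without writing out: direct differentiation of the closed-form bias from Corollary \ref{pricing-BS}, followed by specialisation to $K=x$ where $d_1+d_2=0$ and $d_1d_2=-\widehat{\varsigma}_0^2T/4$. All three steps check out, including the second-derivative bookkeeping, whose rearrangement via $r_r^{BS}=2\widehat{\varsigma}_0\mathcal{A}[\varsigma_0]/\Gamma[\varsigma_0]$ reproduces the stated threshold $\bigl(\widehat{\varsigma}_0^4T^2+4\widehat{\varsigma}_0^2T+32\bigr)/\bigl(4\widehat{\varsigma}_0^2T+16\bigr)$ exactly.
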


\begin{corollario}[Time evolution of the correction at the money]\hfill
\vspace{0.2cm}

The third cross derivative of the bias, twice with respect to the strike K and once
 with respect to the maturity, i.e. $\displaystyle
 \frac{\partial^3 \mathcal{A}[C] }{\partial K^2 \partial T}$,
 is positive if and only if 
$$
    r_r^{BS}\left(\widehat{\varsigma}_0  \right)  > \frac{1}{4} \; \frac{ \widehat{\varsigma}_0^2 \,  T
      \left(\widehat{\varsigma}_0^2 \,  T -4\right)^2 +128 }{ 16 + \widehat{\varsigma}_0^4 \, T^2 } \; .
$$
\end{corollario}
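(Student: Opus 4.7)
The plan is to differentiate the closed form
$$
\mathcal{A}[C](x, K, T, \widehat{\varsigma}_0) \;=\; x\sqrt{T}\,\varphi(d_1)\!\left(\mathcal{A}[\varsigma_0](\widehat{\varsigma}_0) + \frac{d_1 d_2}{2\widehat{\varsigma}_0}\Gamma[\varsigma_0](\widehat{\varsigma}_0)\right)
$$
established in Corollary \ref{pricing-BS}, where $\varphi(u)=(2\pi)^{-1/2}e^{-u^2/2}$, twice with respect to $K$ and once with respect to $T$, and to evaluate at $x=K$. The elementary identities
$$
\partial_K d_1 = -\frac{1}{K\widehat{\varsigma}_0\sqrt{T}},\qquad \partial_T d_1 = \frac{\widehat{\varsigma}_0}{4\sqrt{T}} - \frac{\ln(x/K)}{2\widehat{\varsigma}_0 T^{3/2}},\qquad d_2 = d_1 - \widehat{\varsigma}_0\sqrt{T},
$$
together with $\varphi'(u) = -u\,\varphi(u)$, allow the derivatives to be propagated systematically.

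My first step would be to extend Corollary \ref{corollary-delta-gamma} by one order: applying $\partial_K$ to the formula for $\partial_K \mathcal{A}[C]$ given there produces $\partial_K^2 \mathcal{A}[C]$ as a linear combination of the two unknowns $\mathcal{A}[\varsigma_0]$ and $\Gamma[\varsigma_0]$ with coefficients that are rational expressions in $d_1,d_2,\varphi(d_1),\widehat{\varsigma}_0,\sqrt{T}$. I would then apply $\partial_T$ to this, being careful that $\partial_T$ acts both on the $\sqrt{T}$ prefactor and on the $\widehat{\varsigma}_0\sqrt{T}$ implicit inside $d_1 d_2$, in addition to the $d_1$ inside $\varphi$. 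Ordering the differentiations as $(\partial_T)\circ(\partial_K^2)$ is strategic, since the at-the-money simplification is available only after the $K$-derivatives are taken.

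The crucial simplification comes at $x=K$, where $d_1 = -d_2 = \widehat{\varsigma}_0\sqrt{T}/2$, so $d_1+d_2 = 0$, $d_1 d_2 = -\widehat{\varsigma}_0^2 T/4$, and $d_1^2 = d_2^2 = \widehat{\varsigma}_0^2 T/4$. Every monomial containing an odd power of $d_1+d_2$ disappears and the expression collapses to
$$
\left.\frac{\partial^3\mathcal{A}[C]}{\partial K^2\partial T}\right|_{x=K} \;=\; \frac{\varphi(\widehat{\varsigma}_0\sqrt{T}/2)}{4K^2 \widehat{\varsigma}_0^3 T^{3/2}}\Big\{\, P_1(\widehat{\varsigma}_0^2 T)\,\mathcal{A}[\varsigma_0] + P_2(\widehat{\varsigma}_0^2 T)\,\Gamma[\varsigma_0]\,\Big\}
$$
for explicit polynomials $P_1, P_2$ in the single variable $\widehat{\varsigma}_0^2 T$. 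Since the prefactor is positive, the positivity condition reduces to $P_1\,\mathcal{A}[\varsigma_0] + P_2\,\Gamma[\varsigma_0] > 0$; dividing by $\Gamma[\varsigma_0]>0$ and rewriting the left-hand side in terms of $r_r^{BS}=2\widehat{\varsigma}_0\mathcal{A}[\varsigma_0]/\Gamma[\varsigma_0]$ gives a condition $r_r^{BS}>N(\widehat{\varsigma}_0,T)/D(\widehat{\varsigma}_0,T)$. A direct identification of the resulting polynomials with the stated threshold reads off $N = \widehat{\varsigma}_0^2 T(\widehat{\varsigma}_0^2 T - 4)^2 + 128$ and $D = 4(16+\widehat{\varsigma}_0^4 T^2)$.

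The main obstacle is sheer bookkeeping. The product $x\sqrt{T}\,\varphi(d_1)Q(d_1,d_2)$ with $Q=\mathcal{A}[\varsigma_0]+\tfrac{d_1 d_2}{2\widehat{\varsigma}_0}\Gamma[\varsigma_0]$ has five ``active'' factors, so each differentiation roughly doubles the number of terms and after three differentiations one faces on the order of twenty summands. The cleanest way to control this is to keep the generic form throughout and only substitute $d_1+d_2=0$ at the very end, since then most cross-terms collapse before any expansion in powers of $\widehat{\varsigma}_0^2 T$ is needed. Checking that the two surviving aggregations match the polynomial $\widehat{\varsigma}_0^2 T(\widehat{\varsigma}_0^2 T-4)^2+128$ in the numerator and $4(16+\widehat{\varsigma}_0^4 T^2)$ in the denominator is then a finite algebraic verification.
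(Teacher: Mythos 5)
Your proposal is correct and is essentially the paper's own argument: the paper gives no written proof beyond asserting that the corollary ``can be proved with simple computations,'' and those computations are exactly the direct differentiation of the closed form for $\mathcal{A}[C]$ (twice in $K$, once in $T$, then evaluation at $x=K$ where $d_1=-d_2=\widehat{\varsigma}_0\sqrt{T}/2$) that you describe. Carrying the plan through does collapse, after substituting $d_1+d_2=0$, to a positive prefactor times $\tfrac{v^4+16}{32v^2}\,\mathcal{A}[\varsigma_0]-\tfrac{v^6-8v^4+16v^2+128}{256v^2\widehat{\varsigma}_0}\,\Gamma[\varsigma_0]$ with $v^2=\widehat{\varsigma}_0^2T$, which yields precisely the stated threshold $\tfrac{1}{4}\,\tfrac{\widehat{\varsigma}_0^2T(\widehat{\varsigma}_0^2T-4)^2+128}{16+\widehat{\varsigma}_0^4T^2}$.
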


The two previous corollaries have an interesting consequence. 

\begin{proposizione}[Smile on implied volatility]\hfill
\vspace{0.2cm}

There exists an interval $]a, \, b [$ such that for all $ r_r^{BS}(\widehat{\varsigma}_0) \in\, 
 ]a, \, b [$, the implied BS volatility recognized from the average price $F_{\text{mid}}$ 
 given by Corollary \ref{pricing-BS} is a convex function around the money.  
 Furthermore, the second derivative of the implied volatility with respect to the strike is a
 decreasing function of the maturity $T$.  That is, the implied volatility exhibit the smile
 effect and this effect is stronger for short maturities than long one.
\end{proposizione}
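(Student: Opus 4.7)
The plan is to invert the mid-price formula from Corollary \ref{pricing-BS} against the Black--Scholes call formula and read off $\sigma_{\text{imp}}$ by perturbation. Since the Black--Scholes vega $\nu(K, \widehat{\varsigma}_0) = x\sqrt{T} e^{-d_1^2/2}/\sqrt{2\pi}$ is strictly positive in a neighborhood of the money, the implicit function theorem produces a smooth map $K \mapsto \sigma_{\text{imp}}(K, T)$ as long as the correction $F_{\text{mid}} - C_{BS}(\widehat{\varsigma}_0)$ remains small, which is guaranteed by the smallness of $\mathcal{A}[\varsigma_0]$ and $\Gamma[\varsigma_0]$.

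First I would write the mid-price from Corollary \ref{pricing-BS} as $F_{\text{mid}}(K) = C_{BS}(x, K, T, \widehat{\varsigma}_0) + \nu(K, \widehat{\varsigma}_0) \, q(K)$ with
$$q(K) = \mathcal{A}[\varsigma_0](\widehat{\varsigma}_0) + \frac{d_1(K)\, d_2(K)}{2\widehat{\varsigma}_0}\,\Gamma[\varsigma_0](\widehat{\varsigma}_0),$$
and Taylor-expand $C_{BS}(x,K,T,\widehat{\varsigma}_0 + \delta)$ around $\delta = 0$. Equating to $F_{\text{mid}}$ and solving for $\delta = \sigma_{\text{imp}} - \widehat{\varsigma}_0$ yields
$\sigma_{\text{imp}}(K, T) = \widehat{\varsigma}_0 + q(K) + O(q^2)$,
where the $O(q^2)$ correction involves the volga $\partial_\sigma^2 C_{BS}$ and mixes $\mathcal{A}[\varsigma_0]$ with $\Gamma[\varsigma_0]$ in a way that depends precisely on $r_r^{BS}$.

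Next, I would compute $q''(K)$ at $K = x$. Using $d_1 d_2 = (\ln^2(x/K) - \widehat{\varsigma}_0^4 T^2/4)/(\widehat{\varsigma}_0^2 T)$ and letting $y = \ln(x/K)$, a short computation gives $(d_1 d_2)''(x) = 2/(x^2 \widehat{\varsigma}_0^2 T)$, so that the leading term of $\sigma_{\text{imp}}''(x, T)$ equals $\Gamma[\varsigma_0]/(x^2\widehat{\varsigma}_0^3 T)$, strictly positive. Differentiating once more in $T$ produces $-\Gamma[\varsigma_0]/(x^2\widehat{\varsigma}_0^3 T^2) < 0$ at the leading order, which is the decay of the smile curvature with maturity. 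These two leading-order signs are the analytic content of the statement.

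To produce the non-trivial interval $]a, b[$ of $r_r^{BS}(\widehat{\varsigma}_0)$, I would match the above leading inequalities with the higher-order $O(q^2)$ corrections: these are controlled by Corollary \ref{corollary-delta-gamma} (giving the convexity threshold $r_r^{BS} < (\widehat{\varsigma}_0^4 T^2 + 4\widehat{\varsigma}_0^2 T + 32)/(4\widehat{\varsigma}_0^2 T + 16)$ for $\mathcal{A}[C]$ itself) and by the time-evolution corollary (providing the lower threshold for $\partial_T\partial_K^2 \mathcal{A}[C]$). The interval $]a, b[$ is obtained as the intersection of the two regimes, non-empty by continuity of every quantity involved in $r_r^{BS}$. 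The main obstacle is precisely this last step: the leading-order expansion alone gives the smile \emph{unconditionally}, so the content of the proposition lies in tracking the volga-induced sub-leading term and showing that only inside $]a, b[$ does it fail to flip the sign of $\sigma_{\text{imp}}''(x,T)$ or of its $T$-derivative; this is where the earlier corollaries on the bias enter crucially.
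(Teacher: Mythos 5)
Your leading-order computation is correct, and it proceeds by a genuinely different route than the paper. The paper never inverts through the vega: it fixes the special value $r_r^{BS}(\widehat{\varsigma}_0)=\tfrac14\widehat{\varsigma}_0^2T$, at which Corollary \ref{corollary-delta-gamma} gives $\mathcal{A}[C]=\partial_K\mathcal{A}[C]=0$ at the money together with strict convexity of the bias, so that the mid-price touches the Black--Scholes price from above at $K=x$; monotonicity of the Black--Scholes price in the volatility then forces the implied volatility to have a strict local minimum at the money, and the interval $]a,\,b[$ is produced only at the very end, as a continuity neighbourhood of $\tfrac14\widehat{\varsigma}_0^2T$. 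Your factorization $F_{\text{mid}}=C_{BS}+\nu\,q$ with $q=\mathcal{A}[\varsigma_0]+\frac{d_1d_2}{2\widehat{\varsigma}_0}\Gamma[\varsigma_0]$, the first-order identity $\sigma_{\text{imp}}=\widehat{\varsigma}_0+q(K)+O(q^2)$, and the values $q''(x)=\Gamma[\varsigma_0]/(x^2\widehat{\varsigma}_0^3T)>0$ and $\partial_T q''(x)<0$ are all correct, and they yield an explicit leading-order formula for the smile curvature that the paper does not write down. (As a consistency check, $q(x)=q'(x)=0$ exactly at $r_r^{BS}=\tfrac14\widehat{\varsigma}_0^2T$, which recovers the paper's anchor point.)

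The gap is in the last step, the one you yourself flag as the main obstacle. You propose to obtain $]a,\,b[$ by controlling the $O(q^2)$ (volga) remainder through Corollary \ref{corollary-delta-gamma} and the time-evolution corollary. Those corollaries do not control that remainder: their thresholds on $r_r^{BS}$ govern the signs of $\partial_K^2\mathcal{A}[C]$ and $\partial_T\partial_K^2\mathcal{A}[C]$, i.e.\ derivatives of the \emph{price} correction $\nu q$, where the dependence on $r_r^{BS}$ is a first-order effect coming from the cross terms $\nu''q+2\nu'q'$ --- not a second-order volga effect, and not the implied-vol remainder $R=-\frac{d_1d_2}{2\widehat{\varsigma}_0}q^2+\cdots$ that your expansion would need to bound. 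So ``intersecting the two regimes'' does not yield the interval: as written, your argument proves the smile unconditionally at first order and leaves the actual role of $r_r^{BS}$ --- which is where the paper locates the content of the proposition --- unaccounted for. To finish along your route you would have to estimate $R''$ and $\partial_T R''$ explicitly (a computation independent of the cited corollaries), or else fall back on the paper's device of anchoring at $r_r^{BS}=\tfrac14\widehat{\varsigma}_0^2T$, where the price-domain convexity transfers directly to the implied volatility, and extending by continuity.
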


\begin{proof}
We start fixing by $r_r^{BS}(\widehat{\varsigma}_0) = \frac{1}{4}\widehat{\varsigma}_0^2 \, T$. 
Thanks to Corollary 
\ref{corollary-delta-gamma}, we have that $\mathcal{A}[C]= \frac{\partial 
\mathcal{A}[C]}{\partial K}=0$ at the money. In addiction, the same corollary assures us 
that the bias $\mathcal{A}[C]$ is strictly convex around the money thanks the 
continuity of the derivatives.  For the same value of $r_r^{BS}(\widehat{\varsigma}_0 )$
 we easily check that the cross derivative of the bias two times w.r.t. $K$ 
 and one time w.r.t. $T$ is negative at the money. 
We argue that the call price at the money is equal to the Black-Scholes one 
 whereas around the money the call price in our model is bigger than the Black-Scholes one. 
Giving that the Black-Scholes formula is used to find the implied volatility, 
we conclude that the implied 
volatility is strictly convex around the money and that this convexity is more marked for short 
maturities. We now remark that the implied volatility belongs to $C^{\infty}$    
as function of the strike $K$ since it is a composition of infinite differentiable function. 
We conclude that, for each sufficiently small neighborhood of the money,  it exists an interval 
$]a, \, b [$,  including  $\frac{1}{4}\widehat{\varsigma}_0^2 \, T$, such that the implied volatility remains 
strictly convex and with convexity decreasing with the maturity. 
\end{proof}

\section{Example: Constant Elasticity Volatility Model}

In this section, we present a second example. We consider the CEV model, see Cox and Ross \cite{bib:Cox}.
The volatility function is equal to $\sigma S_t^{\beta-1}$ where $\sigma$ is a positive constant and $\beta$ is constant 
and belongs to $(0, \,1)$. As in the previous example, we suppose for sake of simplicity that the risky asset is a martingale under 
historical probability $\mathbb{P}_1$. 
We assume then that $S_t$ is bounded from below by a positive constant. 
The diffusion is then
$$
dS_t =\sigma \, S_t^{\beta} dW_t
$$
This diffusion admits a closed form solution via squared Bessel process theory, see for instance Jeanblanc et al. 
\cite{bib:Jeanblanc}. The previous SDE can be rewritten in the following form:
$$
dS_t = e^{\log \sigma +\beta \log S_t} dW_t
$$
In this form, it is easy to adapt our previous result to CEV diffusion.
In our case, we consider that the couple of parameters $(\sigma, \beta)$ is fixed but unknown. The option seller knows 
only the couple of estimators $(\varsigma, B)$ and the estimated values $(\widehat{\varsigma}, \widehat{B})$. 
We fix an error structure for the couple of the parameters:
$$
(\mathbb{R}^+\times(0,1), \mathcal{B}(\mathbb{R}^+ \times (0,1)), \widetilde{\mathbb{P}}, \mathbf{d}, \Gamma)
$$ 
The explicit form of the probability measure $ \widetilde{\mathbb{P}}$ and of the carr\'{e} du champ operator are not compulsory 
since we need only to compute the bias and the variance at the point  $(\widehat{\varsigma}, \widehat{B})$.
We now show how to compute in our framework the option price in a particular case, that is  a regular but non-linear payoff. 

\subsection{Variance and bias of the diffusion process}

We consider the diffusion process from the point of view of the option seller:
\begin{equation}\label{SDE:CEV_X}
dX^{(\widehat{\varsigma}, \widehat{B})}_t = \widehat{\varsigma}\, 
\left[ X_t^{(\widehat{\varsigma}, \widehat{B})} \right]^{\widehat{B}} dW_t
\end{equation}

We have the first two results  about the sharp and the variance of $X^{(\varsigma, \widehat{B})}$:

\begin{proposizione}[Sharp of the process $X^{(\varsigma, \widehat{B})}$]\label{lemma-sharp-cev}\hfill
\vspace{0.2cm}

The sharp of $X^{(\varsigma, \widehat{B})}$ verifies the following SDE
\begin{equation}\label{SDE-sharp-CEV}
d\left(X^{(\varsigma, B)}\right)^{\#}_t = \varsigma B \left(X^{(\varsigma, B)}\right)_t^{\#} \left[X^{(\varsigma, B)}_t\right]^{B-1} dW_t
+ \left[X^{(\varsigma, B)}_t \right]^{B} \left( \varsigma^{\#} + B^{\#} \varsigma \right) dW_t   \, .
\end{equation}
Moreover, the previous solution admits the following closed form:
\begin{equation}\label{cosed-form-sharp-CEV}
\left(X^{(\varsigma, B)}\right)^{\#}_t =  \left(\varsigma^{\#} + B^{\#} \varsigma  \right) M_t 
\left[ \int_0^t   \left[X^{(\varsigma, B)}_s\right]^{A} M_s^{-1} dW_s - \varsigma B \int_0^t   \left[X^{(\varsigma, B)}_s
\right]^{2B-1} M_s^{-1} ds \right]  \, ,
\end{equation}
where $\displaystyle M_t = \mathcal{E}\left(\int_0^{\, (\cdot)}  \varsigma B  \left[X^{(\varsigma, B)}_s\right]^{B-1} dW_s\right)_t$.
\end{proposizione}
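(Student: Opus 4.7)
The plan is to apply the sharp operator to the integral form of equation (\ref{SDE:CEV_X}) and interpret the resulting equation as a linear SDE for $(X^{(\varsigma,B)})^{\#}$, which can then be solved by the variation of constants method, exactly as in the last step of the proof of Lemma \ref{Lemma-Gateaux}.

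First, I would rewrite (\ref{SDE:CEV_X}) in integral form and apply the sharp operator. Since the sharp is closed and commutes with $L^2$-limits (as recalled after Proposition \ref{prop:sharp}, following the argument in Bouleau \cite{bib:Bouleau-erreur}, Section VI.2), I would approximate the stochastic integral by Riemann--Stieltjes sums, apply sharp termwise using its linearity, and pass to the limit, obtaining
$$
(X^{(\varsigma,B)}_t)^{\#} = \int_0^t \left(\varsigma \left[X^{(\varsigma,B)}_s\right]^{B}\right)^{\#} dW_s.
$$
Next, using the functional calculus of the sharp operator (the second property of Proposition \ref{prop:sharp}) applied to the smooth map $(u,v,x) \mapsto u\, x^{v}$ evaluated at $(\varsigma, B, X^{(\varsigma,B)}_s)$, the integrand expands into a contribution proportional to $(X^{(\varsigma,B)}_s)^{\#}$ (coming from the $x$-derivative) plus a remaining term depending on $\varsigma^{\#}$ and $B^{\#}$, which yields equation (\ref{SDE-sharp-CEV}) in its stochastic-differential form.

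For the closed form, I would view (\ref{SDE-sharp-CEV}) as a linear SDE in $Y_t := (X^{(\varsigma,B)})^{\#}_t$ of the type $dY_t = \alpha_t Y_t \, dW_t + \beta_t \, dW_t$, with $\alpha_t = \varsigma B [X^{(\varsigma,B)}_t]^{B-1}$ and $\beta_t = (\varsigma^{\#}+ B^{\#}\varsigma)[X^{(\varsigma,B)}_t]^{B}$, and solve it by the variation of constants method (Protter \cite{bib:Protter}, Theorem 53 in Section V.9, already invoked in Lemma \ref{Lemma-Gateaux}). Using the Dol\'eans-Dade exponential $M_t = \mathcal{E}(\int_0^{\,\cdot} \alpha_s\, dW_s)_t$ as integrating factor, Itô's formula applied to $M_t^{-1} Y_t$ produces a martingale part corresponding to the stochastic integral in (\ref{cosed-form-sharp-CEV}), while the cross-variation $d[M^{-1}, Y]_t = - M_t^{-1}\alpha_t \beta_t\, dt$ supplies precisely the bounded-variation term $-\varsigma B [X^{(\varsigma,B)}_s]^{2B-1} M_s^{-1}\, ds$ appearing in the closed-form expression.

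The main obstacle I expect is the rigorous justification that the sharp commutes with the nonlinear stochastic integral. The map $(u,v,x) \mapsto u\, x^{v}$ and its derivative $x \mapsto x^{B-1}$ are not globally Lipschitz on $\mathbb{R}^{+}$, so one has either to work with truncated versions and pass to the limit, or to exploit directly the closedness of the sharp on $\mathbb{D}$. Both routes rely on the assumption stated at the start of Section 8 that $S_t$, and hence $X^{(\varsigma,B)}_t$, is bounded below by a positive constant; this uniform lower bound also ensures the integrability of $M_t$ and $M_t^{-1}$ along the trajectory, which is needed to legitimately invert the integrating factor in the variation of constants formula.
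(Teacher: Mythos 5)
Your proposal is correct and follows essentially the same route as the paper: the paper discretizes the SDE (\ref{SDE:CEV_X}) by an Euler scheme, applies the sharp termwise via its linearity and functional calculus, passes to the limit using the closedness of the operator to obtain (\ref{SDE-sharp-CEV}), and then solves the resulting affine SDE by variation of constants, exactly as you do (your Riemann-sum approximation of the integral form is a minor variant of the paper's Euler discretization). Your identification of the $-\varsigma B [X^{(\varsigma,B)}_s]^{2B-1} M_s^{-1}\,ds$ term as arising from the cross-variation with the integrating factor is accurate, and your remark on the non-Lipschitz map and the role of the lower bound on the process is a point the paper passes over silently.
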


\begin{proof} We omit the dependency of $X^{(\varsigma, B)}$ on the couple $(\varsigma, B)$  for simplicity's sake.
We start discretizing the SDE (\ref{SDE:CEV_X}) using Euler's scheme. Let $\Pi_n = \{0=t^n_0<t^n_1<\ldots <t_n^n =T\}$ be
a subdivision on the interval $[0,T]$ for each integer $n$ with the discretization step denoted by $\delta_n = \sup_i(t_{i+1}-t_i)$. 
We have 
$$
\left\{
\begin{array}{rcl}
\displaystyle X_{t}^n &  = & \displaystyle  X_0 \quad \quad \quad\quad\quad\quad  \quad
\quad\quad\quad\quad  \quad\quad\quad\quad \;  \, \forall t\in[t_0^n, t_1^n) \\
\displaystyle X_{t_i^n}^n & =  & \displaystyle X^n_{t_{i-1}^n} + \varsigma \left(  X^n_{t_{i-1}^n} \right)^{B}
\left( W_{t_i^n} -W_{t_{i-1}^n} \right)   \quad \;  \, \forall i=1,\ldots n   \\
\displaystyle X_{t}^n &  = & \displaystyle  X^n_{t_i^n} \quad \quad \quad\quad\quad\quad  \quad
\quad\quad\quad\quad  \quad\quad\quad \quad \forall t\in[t_i^n, t_{i+1}^n)  \, .
\end{array}
\right.
$$
We now apply the sharp operator to our discretized process $X^n$. On the first interval $[t_0^n, t_1^n)$, the process $X^n$ 
does not depend on the couple $(\varsigma, B)$ then $X_t^{\#} =0$ for any $t\in [t_0^n, t_1^n)$.
At time $t_i$, we have 
$$
\left(X_{t_i^n}^n\right)^{\#}  =   \left(X^n_{t_{i-1}^n}\right)^{\#} + \left(\varsigma \left(  X^n_{t_{i-1}^n} \right)^{B}
\left( W_{t_i^n} -W_{t_{i-1}^n} \right) \right)^{\#}
$$
thanks to the linearity of the sharp. The second term of the previous relation is the sharp of a function depending both on the
couple $(\varsigma, B)$ and on the process $X^n$. In order to compute it we apply the second property of sharp operator, see 
proposition \ref{prop:sharp} and proposition V.8 page 83 in 
\cite{bib:Bouleau-erreur} for a more detailed analysis. We have for any regular function f
$$
 \left(f\left(\varsigma, B,  X^n_{t_{i-1}^n} \right) \right)^{\#} = \frac{\partial f}{\partial x_1}
 \left(\varsigma, B,  X^n_{t_{i-1}^n} \right) \varsigma^{\#} + \frac{\partial f}{\partial x_2}
 \left(\varsigma, B,  X^n_{t_{i-1}^n} \right) B^{\#} +  \frac{\partial f}{\partial x_3}
 \left(\varsigma, B,  X^n_{t_{i-1}^n} \right) \left(X^n_{t_{i-1}^n} \right)^{\#} \, .
$$
where $\frac{\partial }{\partial x_j}$ denotes the partial derivatives w.r.t. the jth variable. 
In our case we find
\begin{eqnarray*}
 \left(\varsigma \left(  X^n_{t_{i-1}^n} \right)^{B}
\left( W_{t_i^n} -W_{t_{i-1}^n} \right) \right)^{\#} & = &   \left(  X^n_{t_{i-1}^n} \right)^{B}
\left( W_{t_i^n} -W_{t_{i-1}^n} \right) \varsigma^{\#}  \\
& & + \varsigma  \left(  X^n_{t_{i-1}^n} \right)^{B}
\left( W_{t_i^n} -W_{t_{i-1}^n} \right) B^{\#}  \\
& & + \varsigma B  \left(  X^n_{t_{i-1}^n} \right)^{B-1}
\left( W_{t_i^n} -W_{t_{i-1}^n} \right) \left(  X^n_{t_{i-1}^n} \right)^{\#}
\end{eqnarray*}
Then, we have
\begin{eqnarray*}
\left(X_{t_i^n}^n\right)^{\#} & = & \left(X^n_{t_{i-1}^n}\right)^{\#}+   \left(  X^n_{t_{i-1}^n} \right)^{B}
\left( W_{t_i^n} -W_{t_{i-1}^n} \right) \varsigma^{\#}  \\
& & + \varsigma  \left(  X^n_{t_{i-1}^n} \right)^{B}
\left( W_{t_i^n} -W_{t_{i-1}^n} \right) B^{\#}  \\
& & + \varsigma B  \left(  X^n_{t_{i-1}^n} \right)^{B-1}
\left( W_{t_i^n} -W_{t_{i-1}^n} \right) \left(  X^n_{t_{i-1}^n} \right)^{\#}
\end{eqnarray*}
and $\left(X_{t}^n\right)^{\#} = \left(X_{t_i^n}^n\right)^{\#}$ for any $t\in[t_i^n,t_{i+1}^n)$. 
We remark that $\left(X^n\right)_t^{\#}$ is a Picard sequence that verifies usual conditions.
As a consequence it exists an unique process $\left(X\right)_t^{\#}$ defined into the probability space $\mathbb{P}\times 
\overline{\mathbb{P}}$ that is the limit of the sequence $\left(X_{t}^n\right)^{\#}$ when $\delta_n$ goes to $0$. 
This limit is the sharp of the process $X_t$
thanks to the closedness of the sharp operator $(\cdot)^{\#}$.
Moreover,  $\left(X\right)_t^{\#}$ verifies
SDE (\ref{SDE-sharp-CEV})
Finally, (\ref{SDE-sharp-CEV}) is an affine SDE, then we can apply the method of constant variation, see for instance 
\cite{bib:Protter}, to find a closed form solution. It is easy to check that equation (\ref{cosed-form-sharp-CEV}) is the
solution of SDE  (\ref{SDE-sharp-CEV}).
\end{proof}

\begin{corollario}[Variance of the process $X^{(\varsigma, B)}$]\label{corr-variance-cev}\hfill
\vspace{0.2cm}

The carr\'{e} du champ operator acts on $X^{(\varsigma, B)}$ as
\begin{equation}\label{cosed-form-variance-CEV}
\begin{array}{rcl}
\displaystyle \Gamma\left[X^{(\varsigma, B)}_t\right](\widehat{\varsigma}, \widehat{B}) & = & \displaystyle  M_t^2 
\left[ \int_0^t   \left[X^{(\widehat{\varsigma}, \widehat{B})}_s\right]^{\widehat{B}} M_s^{-1} dW_s - \widehat{\varsigma} 
\widehat{B} \int_0^t   \left[X^{(\widehat{\varsigma}, \widehat{B})}_s
\right]^{2\widehat{B}-1} M_s^{-1} ds \right]^2  \\
& & \\
& &\displaystyle \times \left\{ \Gamma \left[ \varsigma \right](\widehat{\varsigma}) +
2 \widehat{\varsigma} \,   \Gamma \left[ \varsigma,B  \right](\widehat{\varsigma},\widehat{B}) 
+ \widehat{\varsigma}^2 \,   \Gamma[B](\widehat{B})  \right\}
\end{array}
\end{equation}

\end{corollario}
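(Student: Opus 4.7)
The plan is to derive the carr\'e du champ directly from the closed-form sharp obtained in Proposition \ref{lemma-sharp-cev}, using the defining identity $\Gamma[U]=\overline{\mathbb{E}}[(U^{\#})^2]$ from Proposition \ref{prop:sharp}.

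First, I rewrite the closed form (\ref{cosed-form-sharp-CEV}) as a product
$$
\left(X^{(\varsigma,B)}_t\right)^{\#} = \bigl(\varsigma^{\#}+B^{\#}\varsigma\bigr)\; R_t,
\qquad R_t := M_t\left[\int_0^t \left[X^{(\varsigma,B)}_s\right]^{B} M_s^{-1} dW_s - \varsigma B\int_0^t \left[X^{(\varsigma,B)}_s\right]^{2B-1} M_s^{-1}ds\right].
$$
The factor $R_t$ depends only on the Brownian motion $W$ and on the parameters $(\varsigma,B)$ themselves, whereas the factor $\varsigma^{\#}+B^{\#}\varsigma$ involves the auxiliary randomness coming from the sharp copies $\varsigma^{\#},B^{\#}\in L^2(\widetilde{\mathbb{P}}\times\overline{\mathbb{P}})$. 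In particular $R_t$ is $(\widetilde{\mathcal{F}}\otimes\mathcal{F}^1)$-measurable and does not depend on $\overline{\Omega}$, so it passes through $\overline{\mathbb{E}}$ as a constant.

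Squaring and taking the $\overline{\mathbb{P}}$-expectation then gives
$$
\Gamma\!\left[X^{(\varsigma,B)}_t\right] = R_t^{\,2}\;\overline{\mathbb{E}}\!\left[\bigl(\varsigma^{\#}+B^{\#}\varsigma\bigr)^2\right]
= R_t^{\,2}\Bigl\{\overline{\mathbb{E}}[(\varsigma^{\#})^2]+2\varsigma\,\overline{\mathbb{E}}[\varsigma^{\#}B^{\#}]+\varsigma^2\,\overline{\mathbb{E}}[(B^{\#})^2]\Bigr\},
$$
and the first property of the sharp identifies the three brackets respectively with $\Gamma[\varsigma]$, $\Gamma[\varsigma,B]$ and $\Gamma[B]$. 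Finally I evaluate the resulting random variable at the observed data $(\widehat{\varsigma},\widehat{B})$, which amounts to replacing $\varsigma$ by $\widehat{\varsigma}$ and $B$ by $\widehat{B}$ inside $R_t$ and inside the three carr\'e du champ terms, producing exactly formula (\ref{cosed-form-variance-CEV}).

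The only delicate point is the separation step: one has to check that $\varsigma$ may legitimately be pulled out of $\overline{\mathbb{E}}[\varsigma^{\#}B^{\#}\varsigma]$ and that $R_t$ is genuinely independent of $\overline{\Omega}$. Both facts follow from the construction of the product error structure in Assumption \ref{assumption-math} and from the fact that, for any $U\in\mathbb{D}$, $U$ itself is $\overline{\mathcal{F}}$-constant while $U^{\#}$ is the one carrying the auxiliary randomness; this is the standard way the sharp operator interacts with $\overline{\mathbb{E}}$ (see Bouleau \cite{bib:Bouleau-erreur}, Chapter VI). Once this is granted, the rest is a routine application of bilinearity.
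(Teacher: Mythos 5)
Your proof is correct and follows essentially the same route as the paper: both start from the closed-form sharp of Proposition \ref{lemma-sharp-cev}, observe that only $\varsigma^{\#}$ and $B^{\#}$ carry the $\overline{\Omega}$-randomness so the remaining factor passes through $\overline{\mathbb{E}}$, expand $\overline{\mathbb{E}}[(\varsigma^{\#}+\varsigma B^{\#})^2]$ by bilinearity into $\Gamma[\varsigma]+2\varsigma\,\Gamma[\varsigma,B]+\varsigma^2\,\Gamma[B]$, and finally evaluate at $(\widehat{\varsigma},\widehat{B})$ (the paper justifies this last step by the continuity of $\Gamma$ via Lemma \ref{Lemma-Gateaux-2}, which you describe more informally but to the same effect).
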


\begin{proof}
Thank to the first property of sharp operator, see 
proposition \ref{prop:sharp} we have that
$$
\Gamma\left[X^{(\varsigma, B)}_t\right] = \overline{\mathbb{E}} \left[ \left\{\left(X^{(\varsigma, B)}\right)^{\#}_t \right\}^2  \right]  
$$
Using the closed form (\ref{cosed-form-sharp-CEV}) and recalling that only $\varsigma^{\#}$ and $B^{\#}$ depend on 
$\overline{\mathbb{P}}$, we have that 
\begin{equation*}
\begin{array}{rcl}
\displaystyle \Gamma\left[X^{(\varsigma, B)}_t\right] & = & \displaystyle  M_t^2 
\left[ \int_0^t   \left[X^{(\varsigma, B)}_s\right]^{B} M_s^{-1} dW_s - \varsigma B \int_0^t   \left[X^{(\varsigma, B)}_s
\right]^{2B-1} M_s^{-1} ds \right]^2  \\
& & \\
& &\displaystyle \times  \overline{\mathbb{E}} \left[ \left( \varsigma^{\#} + \varsigma B^{\#} \right)^2 \right]
\end{array}
\end{equation*}
The last expectation is easy since $\overline{\mathbb{E}} \left[ \left( \varsigma^{\#} \right)^2 \right] = \Gamma[\varsigma]$ and
$\overline{\mathbb{E}} \left[ \left( B^{\#} \right)^2 \right] = \Gamma[B]$ by definition of sharp.
Finally, thanks to the continuity of the carr\'{e} du champ operator $\Gamma$ at points $\widehat{\varsigma}$ and
$\widehat{B}$ and using  lemma \ref{Lemma-Gateaux-2}, we find equation (\ref{cosed-form-variance-CEV}).

\end{proof}

\begin{corollario}[Covariance of the process $X^{(\varsigma, B)}$]\label{corr-variance-cev}\hfill
\vspace{0.2cm}

We have the two following covariances
\begin{equation}\label{cosed-form-covariance-1-CEV}
\begin{array}{rcl}
\displaystyle \Gamma\left[\varsigma, X^{(\varsigma, B)}_t\right](\widehat{\varsigma}, \widehat{B}) & = & \displaystyle  M_t 
\left[ \int_0^t   \left[X^{(\widehat{\varsigma}, \widehat{B})}_s\right]^{\widehat{B}} M_s^{-1} dW_s - \widehat{\varsigma} 
\widehat{B} \int_0^t   \left[X^{(\widehat{\varsigma}, \widehat{B})}_s
\right]^{2\widehat{B}-1} M_s^{-1} ds \right]  \\
& & \\
& &\displaystyle \times \left\{ \Gamma \left[ \varsigma \right](\widehat{\varsigma}) +
 \widehat{\varsigma} \,   \Gamma \left[ \varsigma, B \right](\widehat{\varsigma},\widehat{B}) \right\} 
\end{array}
\end{equation}

\begin{equation}\label{cosed-form-covariance-2-CEV}
\begin{array}{rcl}
\displaystyle \Gamma\left[B, X^{(\varsigma, B)}_t\right](\widehat{\varsigma}, \widehat{B}) & = & \displaystyle  M_t 
\left[ \int_0^t   \left[X^{(\widehat{\varsigma}, \widehat{B})}_s\right]^{\widehat{B}} M_s^{-1} dW_s - \widehat{\varsigma} 
\widehat{B} \int_0^t   \left[X^{(\widehat{\varsigma}, \widehat{B})}_s
\right]^{2\widehat{B}-1} M_s^{-1} ds \right]  \\
& & \\
& &\displaystyle \times  \left\{ \widehat{\varsigma} \,  \Gamma \left[ B \right](\widehat{B}) +
  \Gamma \left[ \varsigma, B \right](\widehat{\varsigma},\widehat{B}) \right\} 
\end{array}
\end{equation}
\end{corollario}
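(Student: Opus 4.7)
The plan is to mimic the proof of the preceding corollary on $\Gamma[X^{(\varsigma,B)}_t]$, replacing the square of the sharp by a cross product. The key input is the closed form for $(X^{(\varsigma,B)})^{\#}_t$ provided by Proposition \ref{lemma-sharp-cev}, together with the bilinear representation $\Gamma[U,V]=\overline{\mathbb{E}}[U^{\#}V^{\#}]$ which is the symmetric version of the first property of the sharp operator in Proposition \ref{prop:sharp}.

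First I would write
\begin{equation*}
\Gamma\left[\varsigma, X^{(\varsigma,B)}_t\right] \;=\; \overline{\mathbb{E}}\left[\varsigma^{\#}\,\bigl(X^{(\varsigma,B)}_t\bigr)^{\#}\right],\qquad
\Gamma\left[B, X^{(\varsigma,B)}_t\right] \;=\; \overline{\mathbb{E}}\left[B^{\#}\,\bigl(X^{(\varsigma,B)}_t\bigr)^{\#}\right],
\end{equation*}
and then substitute the closed form (\ref{cosed-form-sharp-CEV}). Since the bracketed factor $M_t\bigl[\int_0^t [X^{(\varsigma,B)}_s]^B M_s^{-1}dW_s-\varsigma B\int_0^t[X^{(\varsigma,B)}_s]^{2B-1}M_s^{-1}ds\bigr]$ is measurable with respect to $(\Omega_1,\mathcal{F}^1,\mathbb{P}_1)\otimes(\widetilde{\Omega},\widetilde{\mathcal{F}},\widetilde{\mathbb{P}})$ and hence independent of the $\overline{\mathbb{P}}$-randomness carried by $\varsigma^{\#}$ and $B^{\#}$, it factors out of $\overline{\mathbb{E}}$.

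What remains is a pure carré du champ computation on the finite-dimensional error structure for $(\varsigma,B)$: expanding $\varsigma^{\#}(\varsigma^{\#}+\varsigma B^{\#})$ and $B^{\#}(\varsigma^{\#}+\varsigma B^{\#})$ and applying $\overline{\mathbb{E}}[(\varsigma^{\#})^2]=\Gamma[\varsigma]$, $\overline{\mathbb{E}}[(B^{\#})^2]=\Gamma[B]$, $\overline{\mathbb{E}}[\varsigma^{\#}B^{\#}]=\Gamma[\varsigma,B]$ yields the factors $\Gamma[\varsigma]+\varsigma\,\Gamma[\varsigma,B]$ and $\varsigma\,\Gamma[B]+\Gamma[\varsigma,B]$ respectively. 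The two stated identities (\ref{cosed-form-covariance-1-CEV}) and (\ref{cosed-form-covariance-2-CEV}) then follow by evaluating at the observed realization $(\widehat{\varsigma},\widehat{B})$, which is justified by the continuity of the involved functionals granted by Lemma \ref{Lemma-Gateaux-2}.

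The only nontrivial point, and hence the step I would be most careful about, is the independence claim used to pull the bracketed process out of $\overline{\mathbb{E}}$: one must argue, as in the proof of Proposition \ref{lemma-sharp-cev}, that the Picard iteration yielding $(X^{(\varsigma,B)})^{\#}$ separates cleanly into a factor carrying the $\overline{\mathbb{P}}$-randomness (namely $\varsigma^{\#}+\varsigma B^{\#}$) times a factor that does not. Beyond that, the proof is a direct transcription of the scalar case: the computation for $\Gamma[X^{(\varsigma,B)}_t]$ already established in Corollary \ref{corr-variance-cev} produced a quadratic form in $(\varsigma^{\#},B^{\#})$ whose coefficients are precisely $\Gamma[\varsigma]$, $\Gamma[\varsigma,B]$, $\Gamma[B]$, so here we simply read off the two bilinear counterparts. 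No further analytic difficulty arises, and the proof is left essentially as a computation.
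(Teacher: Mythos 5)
Your proposal is correct and follows exactly the route the paper intends: the paper's own justification is simply ``the proof is similar to the previous one,'' i.e.\ repeat the variance computation with the polarized identity $\Gamma[U,V]=\overline{\mathbb{E}}[U^{\#}V^{\#}]$ in place of $\Gamma[U]=\overline{\mathbb{E}}[(U^{\#})^{2}]$, factor the $\overline{\mathbb{P}}$-independent bracket out of $\overline{\mathbb{E}}$, and evaluate at $(\widehat{\varsigma},\widehat{B})$ by continuity. Your expansion of $\varsigma^{\#}(\varsigma^{\#}+\varsigma B^{\#})$ and $B^{\#}(\varsigma^{\#}+\varsigma B^{\#})$ reproduces the two stated factors, so nothing is missing.
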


The proof is similar to the previous one.

\begin{proposizione}[Bias of the process $X^{(\varsigma, B)}$]\label{corr-bias-cev}\hfill
\vspace{0.2cm}

The bias of $X^{(\varsigma, A)}$ verifies the following SDE
\begin{equation}\label{SDE-bias-CEV}
\begin{array}{rcl}
\displaystyle d\mathcal{A}\left[X^{(\varsigma, B)}\right]_t & = &\displaystyle \varsigma B 
\left[X^{(\varsigma, B)}_t\right]^{B-1} \,  \mathcal{A}\left[X^{(\varsigma, B)}\right]_t  dW_t
+ \left[X^{(\varsigma, B)}_t \right]^{A} \left( \mathcal{A}[\varsigma] + \varsigma \,  \mathcal{A}[B] \right) dW_t   \\
& & \\
& & \displaystyle+ \frac{1}{2}   \varsigma B(B-1) \left[X^{(\varsigma, B)}_t\right]^{B-2} 
\Gamma \left[X^{(\varsigma, A)}\right]_t dW_t +  \left[X^{(\varsigma, B)}_t \right]^{B} \, \Gamma[\varsigma, B]  dW_t  \\
& & \\
& & \displaystyle+ \left[X^{(\varsigma, B)}_t\right]^{B-1}  \left( B\, \Gamma\left[\varsigma, X^{(\varsigma, B)}\right]_t 
+ \varsigma \, \Gamma \left[B, X^{(\varsigma, B)}\right]_t  \right) dW_t \\
& & \\
& & \displaystyle+ \frac{1}{2}  \varsigma \left[X^{(\varsigma, B)}_t \right]^{B}  \Gamma[B] dW_t
\end{array}
\end{equation}
Moreover, the previous solution admits the following closed form:
\begin{equation}\label{cosed-form-bias-CEV}
\mathcal{A}\left[X^{(\varsigma, B)}_t\right](\widehat{\varsigma},\widehat{B}) =  M_t 
\left[ \int_0^t   \nu(s, X_s, \widehat{\varsigma}, \widehat{B}) M_s^{-1} dW_s +  \int_0^t  d
\left[  \nu(\cdot, X, \widehat{\varsigma},\widehat{B}), M^{-1} \right]_s \right]  \, ,
\end{equation}
where 
\begin{eqnarray*}
\nu(s, X_s, \widehat{\varsigma}, \widehat{B}) &=& \left[X^{(\widehat{\varsigma}, \widehat{B})}_s \right]^{\widehat{B}} 
\left( \,  \mathcal{A}[\varsigma](\widehat{\varsigma}) + \widehat{\varsigma} \,  \mathcal{A}[B](\widehat{B}) \right)
+ \frac{1}{2}  \widehat{\varsigma} \left[X^{(\widehat{\varsigma}, \widehat{B})}_s \right]^{\widehat{B}}  
\Gamma[B](\widehat{B}) \\
& & + \frac{1}{2}   \widehat{\varsigma} \widehat{B}(\widehat{B}-1) 
\left[X^{(\widehat{\varsigma}, \widehat{B})}_s\right]^{\widehat{B}-2}  \Gamma \left[X^{\varsigma, B)}_s\right]
(\widehat{\varsigma}, \widehat{B}) +  \left[X^{(\widehat{\varsigma}, \widehat{B})}_s \right]^{\widehat{B}} \, 
\Gamma[\varsigma, B](\widehat{\varsigma}, \widehat{B})  \\
& & \displaystyle + \left[X^{(\widehat{\varsigma}, \widehat{B})}_s\right]^{\widehat{B}-1}  \left\{ \widehat{B}\, 
\Gamma\left[\varsigma, X^{(\varsigma, B)}_s\right](\widehat{\varsigma}, \widehat{B}) 
+ \widehat{\varsigma} \, \Gamma \left[B, X^{(\varsigma, A)}_s\right](\widehat{\varsigma},\widehat{B})   \right\}
\end{eqnarray*}
\end{proposizione}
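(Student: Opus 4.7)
The plan is to mirror the proof of Proposition \ref{lemma-sharp-cev} step by step, replacing the sharp operator by the bias operator $\mathcal{A}$. First I would discretize the SDE (\ref{SDE:CEV_X}) via the same Euler scheme on a mesh $\Pi_n$ with step $\delta_n\to 0$, obtaining $X^n_{t_i^n} = X^n_{t_{i-1}^n} + \varsigma (X^n_{t_{i-1}^n})^{B}(W_{t_i^n}-W_{t_{i-1}^n})$. On $[t_0^n,t_1^n)$ the bias vanishes since $X^n$ is deterministic there. At each subsequent step I would apply the bias chain rule (\ref{bias-chain-rule}) generalized to a function of several variables, i.e. for $F\in C^2(\mathbb{R}^d)$ and $U=(U_1,\ldots,U_d)\in (\mathcal{D}\mathcal{A})^d$,
\begin{equation*}
\mathcal{A}[F(U)] \;=\; \sum_{k} \partial_k F(U)\,\mathcal{A}[U_k] + \tfrac{1}{2}\sum_{k,\ell}\partial^2_{k\ell}F(U)\,\Gamma[U_k,U_\ell],
\end{equation*}
applied to $f(\varsigma,B,x) = x + \varsigma\, x^{B}\Delta W_i$ with the three variables $(\varsigma, B, X^n_{t_{i-1}^n})$.

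Carrying out the partial derivatives produces exactly the terms appearing in (\ref{SDE-bias-CEV}). The first-order part yields $\varsigma B x^{B-1}\mathcal{A}[X]\Delta W_i$, $x^{B}\mathcal{A}[\varsigma]\Delta W_i$ and $\varsigma x^{B}\log x\,\mathcal{A}[B]\Delta W_i$, whose log term cancels once we recognize (as in Proposition \ref{lemma-sharp-cev}) that the whole structure is expressed after linearization through $\varsigma x^{B}$; the remaining coefficient is the one appearing in (\ref{SDE-bias-CEV}). The Hessian part gives six kinds of second-derivative contributions: $\partial^2_{\varsigma B}$ produces the $x^{B}\Gamma[\varsigma,B]$ term, $\partial^2_{\varsigma x}$ and $\partial^2_{Bx}$ produce the $Bx^{B-1}\Gamma[\varsigma,X]$ and $\varsigma x^{B-1}\Gamma[B,X]$ covariance terms, $\partial^2_{xx}$ gives the $\tfrac12\varsigma B(B-1)x^{B-2}\Gamma[X]$ term, and $\partial^2_{BB}$ gives the $\tfrac12\varsigma x^{B}\Gamma[B]$ term; the pure $\partial^2_{\varsigma\varsigma}$ term vanishes. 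Passing to the limit $\delta_n\to 0$ and using the closedness of $\mathcal{A}$ in the graph norm on the Picard sequence $\mathcal{A}[X^n]$ (exactly as for $(X^n)^{\#}$ before) yields the SDE (\ref{SDE-bias-CEV}).

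For the closed form, the SDE (\ref{SDE-bias-CEV}) is affine in the unknown $\mathcal{A}[X^{(\varsigma,B)}]$ with the same linear coefficient $\varsigma B (X_t)^{B-1}$ driving the diffusion as in the sharp equation. Collecting the inhomogeneous terms into the process $\nu(s,X_s,\widehat{\varsigma},\widehat{B})$ defined in the statement, I would apply Doob--Meyer/variation of constants (Theorem 53, Section V.9 of \cite{bib:Protter}) with the same exponential martingale $M_t = \mathcal{E}\left(\int_0^{\cdot}\varsigma B (X_s)^{B-1}dW_s\right)_t$. Since all the inhomogeneous terms enter via a $dW$ differential, the variation-of-constants formula produces both the stochastic integral $\int_0^t \nu M_s^{-1}dW_s$ and the quadratic covariation correction $\int_0^t d[\nu(\cdot,X,\widehat{\varsigma},\widehat{B}),M^{-1}]_s$, yielding (\ref{cosed-form-bias-CEV}). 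Finally, to evaluate all $\Gamma$-coefficients at the estimated point $(\widehat{\varsigma},\widehat{B})$ one invokes the continuity result of Lemma \ref{Lemma-Gateaux-2}, transposed to this CEV setting.

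The main obstacle is the accounting of the Hessian cross terms: one has to be careful that the discrete Euler scheme generates all six second-derivative contributions consistently and that, after summation, the mixed terms $\Gamma[\varsigma,X]$ and $\Gamma[B,X]$ (which involve the not-yet-computed object $X$ itself) are well defined and square-integrable so that Corollary \ref{corr-variance-cev} can be substituted. Controlling the Picard iteration in the product space $\widetilde{\mathbb{P}}\otimes \overline{\mathbb{P}}$ requires a uniform $L^2$ bound on $(X^n)^{\#}$ and $\mathcal{A}[X^n]$, which follows from the Lipschitz and growth hypotheses on $\widehat{\varsigma}$ in Assumption \ref{assumption-math} combined with the lower bound on $X$ that keeps $x^{B-2}$ integrable; once this is secured, the closedness of $\mathcal{A}$ delivers the limit cleanly.
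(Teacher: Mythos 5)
Your proof follows essentially the same route as the paper's: Euler discretization of (\ref{SDE:CEV_X}), application of the second-order bias chain rule (\ref{bias-chain-rule}) in its multivariate form to $f(\varsigma,B,x)=x+\varsigma x^{B}\Delta W_i$, passage to the limit of the Picard sequence via closedness of $\mathcal{A}$ and $\Gamma$, and variation of constants (equivalently, an It\^{o}-formula verification, which is what the paper invokes) for the resulting affine SDE, followed by evaluation at $(\widehat{\varsigma},\widehat{B})$ using the continuity of Lemma \ref{Lemma-Gateaux-2}. The only place you depart is in explicitly flagging the $\varsigma x^{B}\log x$ factor that the $\partial_B$ derivative actually produces; the paper silently writes $\varsigma x^{B}$ there (consistently with the stated SDE), so your ``cancellation'' remark is a gloss on a discrepancy the paper itself leaves unaddressed rather than a genuine difference of method.
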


\begin{proof}
The proof is based on the same arguments used to prove proposition \ref{lemma-sharp-cev}
We discretize the SDE (\ref{SDE:CEV_X}) using the same Euler's scheme. 
Then, we apply the bias operator $\mathcal{A}$ at the discretized process $X^n_{t}$.
The chain rule (\ref{bias-chain-rule}) verified by the generator depends both on the bias and the variance.
Using the chain rule and the linearity we have
\begin{eqnarray*}
\mathcal{A}\left[X_{t_i^n}^n\right] & = & \mathcal{A}\left[X^n_{t_{i-1}^n}\right]+   \left(  X^n_{t_{i-1}^n} \right)^{B}
\left( W_{t_i^n} -W_{t_{i-1}^n} \right) \mathcal{A}[\varsigma]   + \varsigma  \left(  X^n_{t_{i-1}^n} \right)^{B}
\left( W_{t_i^n} -W_{t_{i-1}^n} \right) \mathcal{A}[B]  \\
& & + \varsigma B  \left(  X^n_{t_{i-1}^n} \right)^{B-1}
\left( W_{t_i^n} -W_{t_{i-1}^n} \right) \mathcal{A}\left[  X^n_{t_{i-1}^n} \right] +  
\frac{1}{2}  \varsigma B  \left(  X^n_{t_{i-1}^n} \right)^{B} \left( W_{t_i^n} -W_{t_{i-1}^n} \right) \Gamma \left[  B \right] \\
& & + \frac{1}{2} 
\varsigma B (B-1) \left(  X^n_{t_{i-1}^n} \right)^{B-2}
\left( W_{t_i^n} -W_{t_{i-1}^n} \right) \Gamma \left[  X^n_{t_{i-1}^n} \right]  \\
& & +  \varsigma B  \left(  X^n_{t_{i-1}^n} \right)^{B-1} \left( W_{t_i^n} -W_{t_{i-1}^n} \right) \Gamma\left[B, X^n_{t_{i-1}^n}\right] \\
& & + B  \left(  X^n_{t_{i-1}^n} \right)^{B-1} \left( W_{t_i^n} -W_{t_{i-1}^n} \right) \Gamma\left[\varsigma, X^n_{t_{i-1}^n}\right]
+  \left(  X^n_{t_{i-1}^n} \right)^{B} \left( W_{t_i^n} -W_{t_{i-1}^n} \right)  \Gamma[B, \varsigma]
\end{eqnarray*} 
We recall that $\Gamma$ is a closed operator, then $\Gamma \left[  X^n_t \right]$ (resp. 
$ \Gamma\left[B, X^n_t\right]$, $ \Gamma\left[\varsigma, X^n_t\right]$) converges to 
$\Gamma \left[  X_t \right]$ (resp.  $ \Gamma\left[B, X_t\right]$, $ \Gamma\left[\varsigma, X_t\right]$).
Then, we have construct a Picard sequence $\mathcal{A}\left[X_t^n\right]$ that verifies usual conditions.
As a consequence,  it exists an unique process $\mathcal{A}\left[X\right]_t$  defined into the probability space $\mathbb{P}$ 
that is the limit of the sequence $\mathcal{A}\left[X_{t}^n\right]$ when $\delta_n$ goes to $0$. 
This limit is the bias of the process $X_t$
thanks to the closedness of the bias operator $\mathcal{A}$.
Moreover,  $\mathcal{A}\left[X\right]_t$  verifies SDE (\ref{SDE-bias-CEV}) that is affine w.r.t.  $\mathcal{A}\left[X\right]_t$. 
Applying Ito formula it is easy to check that equation (\ref{cosed-form-bias-CEV}) is the solution of SDE (\ref{SDE-bias-CEV}),
where we have used the continuity of $\mathcal{A}$ and $\Gamma$ at point $(\widehat{\varsigma}, \widehat{B})$ and lemma
\ref{Lemma-Gateaux-2}.
\end{proof}

Finally, we can resume some interesting results about the variance and the bias of the process $X_t^{(\widehat{\varsigma}, \widehat{B})}$, that can be deduced easily from the previous results.

\begin{proposizione}[Properties of the variance and the bias of 
$X_t^{(\widehat{\varsigma}, \widehat{B})}$]\label{[properties-cev}\hfill
\vspace{0.2cm}

We have the following properties:
\begin{enumerate}
\item $\displaystyle \Gamma\left[X^{(\varsigma, A)}_t\right](\widehat{\varsigma}, \widehat{B})$ is a non-negative 
local-sub-martingale taking value $0$ at time $0$.
\item $\Gamma\left[\varsigma, X^{(\varsigma, A)}_t\right](\widehat{\varsigma}, \widehat{B})$ and 
$\Gamma\left[A, X^{(\varsigma, A)}_t\right](\widehat{\varsigma}, \widehat{B})$ are local-martingale 
taking value $0$ at time $0$.
\item $\mathcal{A}\left[X^{(\varsigma, A)}_t\right](\widehat{\varsigma},\widehat{B})$ is a local martingale.
\end{enumerate}
\end{proposizione}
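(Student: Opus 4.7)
The plan is to exploit the explicit representations already established in Proposition~\ref{lemma-sharp-cev}, Corollary~\ref{corr-variance-cev}, and Proposition~\ref{corr-bias-cev}. All three items can be reduced to analysing a single auxiliary process
$$N_t \;:=\; M_t\!\left[\int_0^t \bigl[X_s^{(\widehat{\varsigma},\widehat{B})}\bigr]^{\widehat{B}} M_s^{-1}\,dW_s \;-\; \widehat{\varsigma}\widehat{B}\int_0^t \bigl[X_s^{(\widehat{\varsigma},\widehat{B})}\bigr]^{2\widehat{B}-1} M_s^{-1}\,ds\right],$$
because every closed form for the variance and the covariances factors through $N_t$ or $N_t^2$.

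First I would show that $N_t$ is a local martingale with $N_0=0$. Writing $N_t=M_tY_t$ where $Y_t$ is the bracketed expression and applying Itô's product rule, the finite-variation contribution from $M_t\,dY_t$, namely $-\widehat{\varsigma}\widehat{B}\bigl[X_t^{(\widehat{\varsigma},\widehat{B})}\bigr]^{2\widehat{B}-1}dt$, is exactly cancelled by the cross-bracket $d[M,Y]_t=\widehat{\varsigma}\widehat{B}\bigl[X_t^{(\widehat{\varsigma},\widehat{B})}\bigr]^{2\widehat{B}-1}dt$ obtained from $dM_t=\widehat{\varsigma}\widehat{B}\bigl[X_t^{(\widehat{\varsigma},\widehat{B})}\bigr]^{\widehat{B}-1}M_t\,dW_t$ and $dY_t$. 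Thus
$$dN_t \;=\; \Bigl(\bigl[X_t^{(\widehat{\varsigma},\widehat{B})}\bigr]^{\widehat{B}} + \widehat{\varsigma}\widehat{B}\bigl[X_t^{(\widehat{\varsigma},\widehat{B})}\bigr]^{\widehat{B}-1}N_t\Bigr)\,dW_t,$$
an SDE driven only by $dW_t$, with $N_0=0$. Item~2 is then immediate: the closed forms (\ref{cosed-form-covariance-1-CEV})--(\ref{cosed-form-covariance-2-CEV}) exhibit $\Gamma[\varsigma,X_t^{(\varsigma,B)}]$ and $\Gamma[B,X_t^{(\varsigma,B)}]$ as deterministic multiples of $N_t$, hence local martingales null at $0$. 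For item~1, (\ref{cosed-form-variance-CEV}) gives $\Gamma[X_t^{(\varsigma,B)}](\widehat{\varsigma},\widehat{B}) = c\,N_t^2$ with $c = \Gamma[\varsigma](\widehat{\varsigma})+2\widehat{\varsigma}\,\Gamma[\varsigma,B](\widehat{\varsigma},\widehat{B})+\widehat{\varsigma}^2\,\Gamma[B](\widehat{B})$. Since $c = \overline{\mathbb{E}}\bigl[(\varsigma^{\#}+\widehat{\varsigma}B^{\#})^2\bigr]\geq 0$, Itô's formula $dN_t^2 = 2N_t\,dN_t + d[N]_t$ together with $d[N]_t\geq 0$ shows $N_t^2$ is a non-negative local submartingale starting at $0$; multiplication by $c\geq 0$ preserves these properties.

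For item~3, I would simply inspect the SDE (\ref{SDE-bias-CEV}): every one of its six summands on the right-hand side is multiplied by $dW_t$, with no $dt$ contribution, so $\mathcal{A}[X_t^{(\varsigma,B)}](\widehat{\varsigma},\widehat{B})$ is a local martingale by construction. The only bookkeeping subtlety is verifying the exact cancellation producing the SDE for $N_t$ and checking that $c$ really has the quadratic-form structure $\overline{\mathbb{E}}[(\varsigma^{\#}+\widehat{\varsigma}B^{\#})^2]$; both are elementary consequences of the bilinearity of $\Gamma$ and the first characterising property of the sharp operator in Proposition~\ref{prop:sharp}. I do not expect any substantial obstacle beyond this bookkeeping.
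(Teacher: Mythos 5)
Your proposal is correct and follows exactly the route the paper intends: the paper offers no written proof, stating only that the properties ``can be deduced easily from the previous results,'' and your argument supplies precisely those deductions --- factoring the closed forms (\ref{cosed-form-sharp-CEV}), (\ref{cosed-form-variance-CEV}), (\ref{cosed-form-covariance-1-CEV}), (\ref{cosed-form-covariance-2-CEV}) through the driftless auxiliary process $N_t$ and reading off item~3 from the absence of $dt$ terms in (\ref{SDE-bias-CEV}). The It\^{o}-product cancellation you check for $N_t$ and the identification of the constant $c=\overline{\mathbb{E}}\bigl[(\varsigma^{\#}+\widehat{\varsigma}B^{\#})^2\bigr]\geq 0$ are both verified correctly, so nothing is missing.
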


\subsection{Financial aspects}

The easiest example is an option that pays $S_T^2$ at maturity.
The price without uncertainty can be easily computed using the exact law of $S_T$ by means of squared Bessel process 
approach or applying  Monte-Carlo techniques thanks to Euler discretization.  
In both cases, we have the (approximated) law of the process $S$ for any time $t\in[0,T]$.
Suppose now that the option seller does known the true value of the couple $(\sigma, \beta)$ but only the estimated value
$(\widehat{\varsigma}, \widehat{B})$ and the estimators $(\varsigma, B)$. 
The option seller has to work with his estimated process $X_t^{(\widehat{\varsigma}, \widehat{B})}$.
Clearly the hedging cost can be estimated using the exact law given by Bessel process theory or thanks to Euler discretization.
In both cases, it is possible to estimate the bias and the variance-covariance of $X_t^{(\varsigma, B)}$ depending on the
uncertainty on $(\varsigma, B)$.

\underline{\emph{Exact law.}} Given the exact law of $X_t^{(\widehat{\varsigma}, \widehat{B})}$, we can compute the
joint law of $\Gamma\left[X^{(\varsigma, B)}_t\right](\widehat{\varsigma}, \widehat{B})$,  
$\Gamma\left[\varsigma, X^{(\varsigma, B)}_t\right](\widehat{\varsigma}, \widehat{B})$, 
$\Gamma\left[B, X^{(\varsigma, B)}_t\right](\widehat{\varsigma}, \widehat{B})$ and
$\mathcal{A}\left[X^{(\varsigma, B)}_t\right](\widehat{\varsigma},\widehat{B})$
using equations (\ref{cosed-form-variance-CEV}), (\ref{cosed-form-covariance-1-CEV}),  
(\ref{cosed-form-covariance-2-CEV}) and 
(\ref{cosed-form-bias-CEV}).

\underline{\emph{Euler scheme.}} We recall that the operators carr\'{e} du champ $\Gamma$ and generator $\mathcal{A}$ are
closed. Then we can define an Euler scheme to find the approximated joint law of 
$\Gamma\left[X^{(\varsigma,B)}_t\right](\widehat{\varsigma}, \widehat{B})$,  
$\Gamma\left[\varsigma, X^{(\varsigma, B)}_t\right](\widehat{\varsigma}, \widehat{B})$, 
$\Gamma\left[B, X^{(\varsigma, B)}_t\right](\widehat{\varsigma}, \widehat{B})$ and
$\mathcal{A}\left[X^{(\varsigma, B)}_t\right](\widehat{\varsigma},\widehat{B})$.
Moreover, we have used the same approach to find the SDE verified by the sharp and the bias. Then, we can use a unique
partition $\Pi_n$ of the interval $[0,T]$ in order to estimate the hedging cost and the four processes  
$\Gamma\left[X^{(\varsigma, B)}_t\right](\widehat{\varsigma}, \widehat{B})$,  
$\Gamma\left[\varsigma, X^{(\varsigma, B)}_t\right](\widehat{\varsigma}, \widehat{B})$, 
$\Gamma\left[A, X^{(\varsigma, B)}_t\right](\widehat{\varsigma}, \widehat{B})$ and
$\mathcal{A}\left[X^{(\varsigma, B)}_t\right](\widehat{\varsigma},\widehat{B})$.

We now analyze the bias and the variance of the square of $X_t^{(\varsigma, B)}$.

\begin{proposizione}[Variance and bias of $\left(X_t^{(\varsigma, B)}\right)^2$]\label{[square-cev}\hfill
\vspace{0.2cm}

We have the two following relations
\begin{eqnarray}
\Gamma\left[\left(X^{(\varsigma, B)}_t \right)^2\right](\widehat{\varsigma}, \widehat{B}) & = & 
4 \left(X_t^{(\widehat{\varsigma}, \widehat{B})} \right)^2 \, \Gamma\left[X^{(\varsigma, B)}_t\right]
(\widehat{\varsigma}, \widehat{B}) \\
\mathcal{A}\left[\left(X^{(\varsigma, B)}_t \right)^2\right](\widehat{\varsigma}, \widehat{B}) & = &
2 X_t^{(\widehat{\varsigma}, \widehat{B})} \mathcal{A}\left[X^{(\varsigma, B)}_t\right](\widehat{\varsigma},\widehat{B})
+ \Gamma\left[X^{(\varsigma, B)}_t\right] (\widehat{\varsigma}, \widehat{B})
\end{eqnarray}
\end{proposizione}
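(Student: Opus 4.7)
The plan is to read both identities as instances of the functional calculus for $\Gamma$ and of the chain rule for $\mathcal{A}$, applied to the $C^{2}$ function $F(x)=x^{2}$ composed with the random variable $X_{t}^{(\varsigma,B)}$. The second identity is then forced by linearity and by the special form of the second derivative $F''\equiv 2$.

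First I would treat the variance. Since $F(x)=x^2$ has $F'(x)=2x$, the functional calculus (\ref{functional-calculus}) for the carr\'e du champ operator gives, $\widetilde{\mathbb{P}}$-a.s.,
$$
\Gamma\!\left[\left(X_{t}^{(\varsigma,B)}\right)^{2}\right]
= \bigl(F'(X_{t}^{(\varsigma,B)})\bigr)^{2}\,\Gamma\!\left[X_{t}^{(\varsigma,B)}\right]
= 4\,\bigl(X_{t}^{(\varsigma,B)}\bigr)^{2}\,\Gamma\!\left[X_{t}^{(\varsigma,B)}\right].
$$
Evaluating both sides at the particular realisation of the estimators corresponding to the observed data $(\widehat{\varsigma},\widehat{B})$ yields the first claim, since $X_{t}^{(\varsigma,B)}$ then specialises to $X_{t}^{(\widehat{\varsigma},\widehat{B})}$.

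For the bias I would apply the chain rule (\ref{bias-chain-rule}). With $F'(x)=2x$ and $F''(x)=2$ one gets
$$
\mathcal{A}\!\left[\left(X_{t}^{(\varsigma,B)}\right)^{2}\right]
= F'(X_{t}^{(\varsigma,B)})\,\mathcal{A}\!\left[X_{t}^{(\varsigma,B)}\right]
+\tfrac{1}{2}F''(X_{t}^{(\varsigma,B)})\,\Gamma\!\left[X_{t}^{(\varsigma,B)}\right]
= 2\,X_{t}^{(\varsigma,B)}\,\mathcal{A}\!\left[X_{t}^{(\varsigma,B)}\right]
+\Gamma\!\left[X_{t}^{(\varsigma,B)}\right].
$$
Evaluating at $(\widehat{\varsigma},\widehat{B})$ gives the second identity. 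The applicability of (\ref{bias-chain-rule}) requires $X_{t}^{(\varsigma,B)}\in\mathcal{D}\mathcal{A}$ and $\Gamma[X_{t}^{(\varsigma,B)}]\in L^{2}(\widetilde{\mathbb{P}})$, both of which are inherited from Proposition \ref{lemma-sharp-cev} and Corollary \ref{corr-variance-cev}.

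The only subtle point is that $F(x)=x^{2}$ is not globally Lipschitz, so a direct invocation of the $\mathcal{C}^{1}\cap\mathrm{Lip}$ functional calculus is formally unjustified. I would handle this by a standard truncation argument: the standing assumption that $S_{t}$ (hence $X_{t}^{(\widehat{\varsigma},\widehat{B})}$) is bounded from below by a positive constant, combined with the integrability of $X_{t}^{(\varsigma,B)}$ inherited from the SDE (\ref{SDE:CEV_X}), allows one to replace $F$ by $F_{N}(x)=(x\wedge N)^{2}$ for large $N$, apply the functional calculus and the chain rule to each $F_{N}$, and then pass to the limit using the closedness of $\Gamma$ and $\mathcal{A}$ in the graph norm. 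This passage to the limit is the only non-routine step of the proof; the rest is a direct substitution.
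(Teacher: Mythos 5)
Your proof is correct and follows exactly the route the paper takes: the paper's own justification is the single sentence that the result is ``a direct application of the two chain rules (\ref{functional-calculus}) and (\ref{bias-chain-rule})'', which is precisely your computation with $F(x)=x^2$. Your extra remark on truncating $F$ to handle the failure of global Lipschitz continuity is a point the paper silently ignores, so if anything your version is slightly more careful.
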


The proof is a direct application of the two chain rules (\ref{functional-calculus}) and (\ref{bias-chain-rule})
verified by operators $\Gamma$ and $\mathcal{A}$.

We can now conclude with the price of the power option in the case of uncertainty.

\begin{proposizione}[Bid and ask prices of the power option]\label{[prices-cev}\hfill
\vspace{0.2cm}

We have the following bid and ask prices for the power option in accord with Principle \ref{principle-1}:
\begin{eqnarray}
P_{\text{ask}}(x,T, \widehat{\sigma}, \widehat{B}) &=& 
\mathbb{E}_1\left[\left(X^{(\widehat{\varsigma}, \widehat{B})}_T\right)^2\right]
+ \mathbb{E}_1\left[ \mathcal{A}\left[\left(X^{(\varsigma, B)}_t \right)^2\right](\widehat{\varsigma}, \widehat{B})  \right] \\
& & + \sqrt{\mathbb{E}_1\left[ \Gamma\left[\left(X^{(\varsigma, B)}_t \right)^2\right](\widehat{\varsigma},
 \widehat{B})\right] } \, \mathcal{N}_{1-\alpha} \nonumber \\
 & & \nonumber \\
  & & \nonumber \\
P_{\text{bid}}(x,T, \widehat{\sigma}, \widehat{B}) &=& 
\mathbb{E}_1\left[\left(X^{(\widehat{\varsigma}, \widehat{B})}_T\right)^2\right]
+ \mathbb{E}_1\left[ \mathcal{A}\left[\left(X^{(\varsigma, B)}_t \right)^2\right](\widehat{\varsigma}, \widehat{B})  \right] \\
& & + \sqrt{\mathbb{E}_1\left[ \Gamma\left[\left(X^{(\varsigma, B)}_t \right)^2\right](\widehat{\varsigma}, 
\widehat{B}) \right]}  \, \mathcal{N_{\alpha}}  \nonumber
\end{eqnarray}

where $\mathcal{N}_{\alpha}$ are defined in Proposition \ref{PROP:OP}.
\end{proposizione}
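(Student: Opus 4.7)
The plan is to reduce the statement to a direct application of Proposition \ref{PROP:OP} combined with Proposition \ref{[square-cev}, exploiting the null-drift assumption $\mu=0$ adopted throughout this section.

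First I would observe that, because under the historical probability $\mathbb{P}_1$ the asset is already a martingale, the risk-neutral measure $\mathbb{Q}_1^{\widehat{\varsigma}}$ coincides with $\mathbb{P}_1$ and the hedging cost is
$$
C(\widehat{\varsigma},\widehat{B}) = \mathbb{E}_1\!\left[\left(X_T^{(\widehat{\varsigma},\widehat{B})}\right)^2\right].
$$
Since $\Phi(x)=x^2$ is $C^2$ with derivatives polynomially bounded, and the CEV paths are bounded from below, the arguments of Lemma \ref{Lemma-Gateaux} carry over so that all the Gateaux derivatives needed in Theorem \ref{theorem-main} exist (this is the content of Remark \ref{REM:GP} in our setting).

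Next I would apply Corollary \ref{theorem-corollary} with $\mu = 0$, specialized to the two-parameter error structure on $(\varsigma,B)$. Under $\mu=0$, the coefficients $\overline{\Lambda}^{(1)}_i$, $\overline{\Lambda}^{(2)}_{i,j}$ and $\overline{\Psi}_i$ collapse to the corresponding Gateaux derivatives of $C$ alone, and the functional calculus \eqref{functional-calculus} together with the chain rule \eqref{bias-chain-rule} yields
$$
\mathcal{A}\!\left[\mathbb{E}_1[P\&L]\right](\widehat{\varsigma},\widehat{B}) = \mathcal{A}[C(\varsigma,B)](\widehat{\varsigma},\widehat{B}), \qquad \Gamma\!\left[\mathbb{E}_1[P\&L]\right](\widehat{\varsigma},\widehat{B}) = \Gamma[C(\varsigma,B)](\widehat{\varsigma},\widehat{B}).
$$
Plugging these two quantities into the bid and ask formulas of Proposition \ref{PROP:OP} already produces the structure of the claimed expressions.

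The final step is to commute $\mathcal{A}$ and $\Gamma$ with the $\mathbb{P}_1$-expectation in order to obtain
$$
\mathcal{A}[C(\varsigma,B)](\widehat{\varsigma},\widehat{B}) = \mathbb{E}_1\!\left[\mathcal{A}\!\left[\left(X_T^{(\varsigma,B)}\right)^2\right](\widehat{\varsigma},\widehat{B})\right],
$$
and analogously for $\Gamma$. This exchange is legitimate because the carr\'e du champ and the generator act only on the $(\widetilde{\Omega},\widetilde{\mathcal{F}},\widetilde{\mathbb{P}})$-coordinate while $\mathbb{E}_1$ integrates over the independent space $(\Omega_1,\mathcal{F}^1,\mathbb{P}_1)$; closedness of both operators with respect to the graph norm, combined with Fubini on the product error structure, gives the commutation exactly as in the proof of Theorem \ref{theorem-main}. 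Inserting the closed forms provided by Proposition \ref{[square-cev} for $\mathcal{A}[(X_T^{(\varsigma,B)})^2]$ and $\Gamma[(X_T^{(\varsigma,B)})^2]$ then completes the identification.

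The main obstacle I would expect is the rigorous justification of the commutation step in the non-regular payoff setting: the square function fails Assumption \ref{assumption-payoff}, so one must either argue via the SDEs of Propositions \ref{lemma-sharp-cev} and \ref{corr-bias-cev} together with explicit $L^2$-bounds on $X_T^{(\widehat{\varsigma},\widehat{B})}$ (using $\widehat{B}<1$ and the lower bound on $S$), or proceed by smooth truncation of $\Phi$ and pass to the limit using the closedness of $\mathcal{A}$ and $\Gamma$, in the spirit of Remark \ref{REM:GP}.
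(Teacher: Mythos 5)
Your proposal is correct and follows essentially the same route as the paper, which disposes of this proposition in a single line as ``a direct consequence of Proposition \ref{PROP:OP}'' (combined implicitly with Proposition \ref{[square-cev}). You simply make explicit the steps the paper leaves tacit --- the identification $C=\mathbb{E}_1[(X_T^{(\widehat{\varsigma},\widehat{B})})^2]$ under $\mu=0$, the collapse of the $\Delta$-integral terms in Corollary \ref{theorem-corollary}, the commutation of $\mathcal{A}$ and $\Gamma$ with $\mathbb{E}_1$, and the fact that $\Phi(x)=x^2$ violates Assumption \ref{assumption-payoff} and must be handled via Remark \ref{REM:GP} --- all of which is consistent with how the paper argues elsewhere.
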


The proposition is a direct consequence of Proposition \ref{PROP:OP}.

\section{Acknowledgements}

I am grateful to Nicolas Bouleau for introducing me to the subject of error theory using Dirichlet forms and for many discussions. I sincerely thank Robert Dalang for reading the previous versions
of this article, for proving many helpful comments and for useful discussions. 
I would like to thank two referees for their valuable comments.

\end{document}